\newtheorem{definition}{Definition}
\newtheorem{corollary}{Corollary}
\newtheorem{lemma}{Lemma}
\newtheorem{theorem}{Theorem}
\DeclareMathOperator*{\argmax}{argmax}
\DeclareMathOperator*{\argmin}{argmin}
\renewcommand{\paragraph}[1]{%
  \smallskip\noindent
  \textbf{#1}
}
\begin{document}


\newcommand{\score}{f}
\newcommand{\scoreF}[3]{\score(#1,#2,#3)}
\newcommand{\scoreFull}[3]{\score_{t, q,\beta}(#1,#2,#3)}
\newcommand{\wdw}{s}
\newcommand{\cent}{c}
\newcommand{\centF}[3]{\cent(#1,#2,#3)}
\newcommand{\centFull}[3]{\cent_{t,q}(#1,#2,#3)}
\newcommand{\empt}{e}
\newcommand{\emptF}[3]{\empt(#1,#2,#3)}
\newcommand{\emptFull}[3]{\empt_{\beta}(#1,#2,#3)}
\newcommand{\rank}{r}
\newcommand{\rankF}[2]{\rank(#1,#2)}

\newcommand{\maxdepthRec}{{\tau_{r}}}
\newcommand{\minNumElem}{{\tau_{e}}}
\newcommand{\currDepthRec}{\gamma}
\newcommand{\datasets}{\mathbb{D}}
\newcommand{\outerQuantile}{Q_O}
\newcommand{\innerQuantile}{Q_I}
\newcommand{\numSplits}{numSplits}
\newcommand{\clusters}{cl}
\newcommand{\weights}{w}
\newcommand{\sigmas}{\Sigma}

\newcommand{\emptS}{e}
\newcommand{\centS}{c}
\newcommand{\ecSplit}{$(\emptS,\centS)$\xspace}
\newcommand{\eSplit}{$\emptS$ }

\newcommand{\emptweight}{\iota_{\empt}}
\newcommand{\centweight}{\iota_{\cent}}

\newcommand{\DPM}{\emph{DPM}\xspace}
\newcommand{\DPMOne}{\emph{DPM1}\xspace}
\newcommand{\DPMTwo}{\emph{DPM2}\xspace}

\newcommand{\R}{\mathbb{R}}
\newcommand{\Lap}{\text{Lap}}

\newcommand{\wdwxi}{\xi}

\newcommand{\epsCount}{\varepsilon_{\text{cnt}}}
\newcommand{\epsCounti}[1]{\varepsilon_{\text{cnt}_{#1}}}
\newcommand{\epsExpMech}{\varepsilon_{\text{exp}}}
\newcommand{\epsAverage}{\varepsilon_{\text{avg}}}
\newcommand{\epsWdwSize}{\varepsilon_{\text{int}}}
\newcommand{\deltaExpMech}{\delta_{\text{exp}}}
\newcommand{\deltaCount}{\delta_{\text{cnt}}}
\newcommand{\deltaAverage}{ \delta_{\text{avg}}}

\newcommand{\wdwSize}{\beta}

\newenvironment{LemmaRepetition}[1]{\textsc{Lemma #1.}\itshape}{}
\newenvironment{TheoremRepetition}[1]{\textsc{Theorem #1.}\itshape}{}

\title{\DPM: Clustering Sensitive Data through Separation}

%

\author{Johannes Liebenow$^*$\thanks{The first two authors contributed equally to the paper.}, Yara Sch\"utt$^{*1}$,\\ Tanja Braun$^\diamond$, Marcel Gehrke$^\ddagger$, Florian Thaeter, Esfandiar Mohammadi$^*$ 
\\ 
\\ ~\\  \phantom{x}\hspace{-0.03em} $*$ Universit\"at zu L\"ubeck, L\"ubeck, Germany
\\     \hspace{0.1em} $\diamond$ Universität M\"unster, M\"unster, Germany
\\     \phantom{x}\hspace{0.7em}$\ddagger$ Universität Hamburg, Hamburg, Germany\\}
\date{}
\thanksmarkseries{arabic}
\maketitle

\begin{abstract}

Clustering is an important tool for data exploration with the goal to subdivide a data set into disjoint clusters that fit well into the underlying data structure. When dealing with sensitive data, privacy-preserving algorithms aim to approximate the non-private baseline while minimising the leakage of sensitive information. State-of-the-art privacy-preserving clustering algorithms tend to output clusters that are good in terms of the standard metrics, inertia, silhouette score, and clustering accuracy, however, the clustering result strongly deviates from the non-private KMeans baseline.

In this work, we present a privacy-preserving clustering algorithm called \DPM that recursively separates a data set into clusters based on a geometrical clustering approach. In addition, \DPM estimates most of the data-dependent hyper-parameters in a privacy-preserving way. We prove that \DPM preserves Differential Privacy and analyse the utility guarantees of \DPM. Finally, we conduct an extensive empirical evaluation for synthetic and real-life data sets. 
We show that \DPM achieves state-of-the-art utility on the standard clustering metrics and yields a clustering result much closer to that of the popular non-private KMeans algorithm without requiring the number of classes.

\end{abstract}

\section{Introduction}
In the field of unsupervised learning, clustering describes the task of subdividing a data set into several disjoint clusters, which is important for data exploration where the goal is to obtain, e.g.,
the expected number of distinct classes.  If a data set contains sensitive information, the clustering algorithm has to be privacy-preserving. There is a rich body of literature on privacy-preserving clustering algorithms~\cite{balkan17,DPCl1Cluster16,DPCl1MultCl17,DPClConstMult18,DPClEasyIns21,DPUtilEffCl21,DPClOptDPWS18,DPClLloyd16,DPClPBDistContour18,DPClStabAssump20,DPClTightApprox20,DPMaxCover21,emmc21,lshsplits2021} that uses the notion of Differential Privacy (DP)~\cite{DwoRo14}. Differentially private algorithms have a privacy budget that determines the information leakage. A small privacy budget stands for a high level of privacy protection, but it also entails a severe utility loss. 
\begin{figure}[h]
    \centering
    \includegraphics[width=0.9\columnwidth]{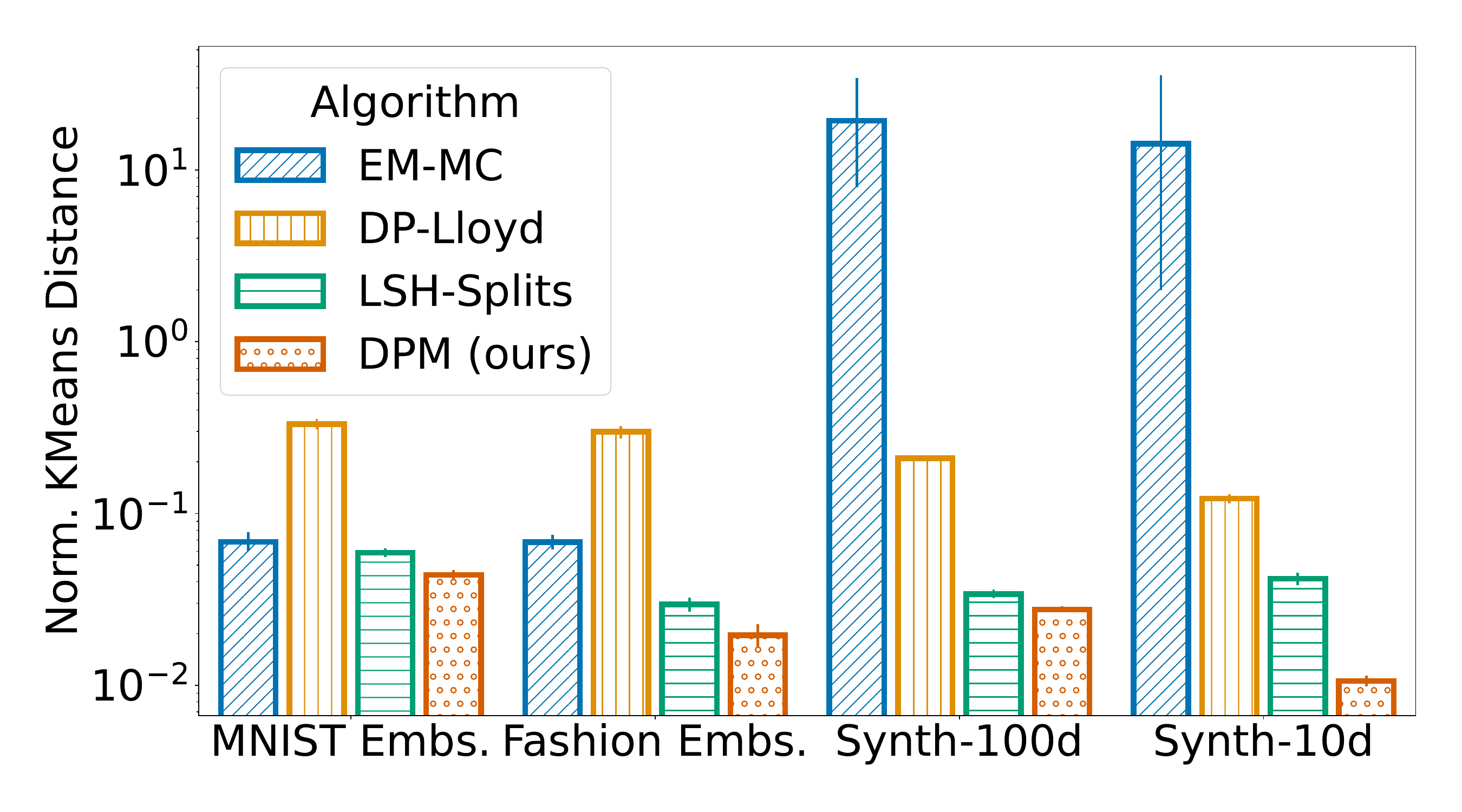}
    \caption{The distance ($\downarrow$) between the clustering result of privacy-preserving clustering algorithms and the non-private KMeans. Our proposed algorithm \DPM outputs a clustering result that is close to that of the non-private baseline. Except for \DPM, all algorithms received the reported number of classes as input for the number of cluster centres. All algorithms use a privacy budget of $\varepsilon = 1$ and $\delta = 1/(n \cdot \sqrt{n})$ and the values are taken from \Cref{tab:kopt_main}.}
    \label{fig:autoKMeansSim}
\end{figure}

We observe that privacy-preserving clustering algorithms tend to output a clustering result that strongly deviates from the non-private KMeans. The reason is that most cluster centres do not contribute much to the clustering result because they are far away from the data points. 
However, the standard metrics, inertia, silhouette score, and clustering accuracy, fail to measure this discrepancy because they are essentially agnostic to centres that are far away from all data points. Thus, a clustering result with only a handful of useful clusters can still have a high quality according to these metrics. But, as part of data exploration, such a deviation from the non-private baseline KMeans can drastically impact decisions on how to further process the data set.

Therefore, we propose a metric called KMeans distance that measures the distance from a private clustering result to the clustering result of the KMeans algorithm and present a privacy-preserving clustering algorithm to minimise this distance. We call our clustering algorithm DP-Mondrian (\DPM) which is based on algorithms that take a geometrical approach to subdivide the data set into clusters like Mondrian~\cite{mondrian06} or Optigrid~\cite{Optigrid99}. Thereby, \DPM does not cluster the data set directly but rather searches for sparse regions along each dimension. When splitting in sparse regions in the data set, structures like clusters will likely be preserved. Thereby, \DPM takes an important step towards being hyper-parameter free by estimating most of its data-dependent parameters in a privacy-preserving way. In summary, \DPM yields a clustering result that is more similar to the output of a non-private baseline KMeans compared to the output of state-of-the-art privacy-preserving clustering algorithms. 
The code is publicly available\footnote{\url{https://github.com/UzL-PrivSec/dp-mondrian-clustering}}.

\subsection{Contribution}
\begin{enumerate}
    \item We present the differentially private clustering algorithm \DPM based on a dimension-wise separation of the data set at regions with high sparseness while separating as close to the median as possible. In particular, \DPM does not require a hyper-parameter search over the number of clusters.
    \item Towards avoiding hyper-parameters in \DPM, we introduce a differentially private algorithm for estimating the granularity of the analysed separation regions, provide an analytical discussion of balancing various optimisation terms, and experimentally show for other parameters that \DPM's utility performance is insensitive to their choice.
    \item We argue that standard clustering metrics do not adequately capture the clustering quality and propose an additional metric, the KMeans distance, to measure the mean distance of a clustering to that of the non-private KMeans. 
    \item We give formal utility guarantees to show the advantage of a geometrical clustering approach to preserve the underlying structure of the data. 
    We conduct an extensive evaluation on synthetic as well as real-world data sets and show that \DPM achieves state-of-the-art performance on standard clustering metrics. In addition, \DPM produces a clustering result closer to that of the non-private KMeans algorithm.
\end{enumerate}

\subsection{Structure}
The remainder of this work is structured as follows. \Cref{sec:prelims} introduces our notation and the topics clustering as well as Differential Privacy. In \Cref{sec:dp-mon} we present the used methodologies and our algorithm \DPM. In \Cref{sec:privacy} we prove that \DPM preserves DP and in \Cref{sec:utility_proof_DPM2} we give utility guarantees for \DPM. In \Cref{sec:eval} we provide an empirical evaluation and in \Cref{sec:related_work} we give an overview of related work. Finally, in \Cref{sec:conclusion} we conclude.

\section{Preliminaries}
\label{sec:prelims}
In this section, we introduce some notation and the field of clustering as well as Differential Privacy (DP).

\subsection{Notation}
We are given a data set, which consists of $n$ $d$-dimensional data points $D = \{x_0, \dots, x_{n-1}\} \in \mathbb{R}^{d \times n}$. The set of all data sets $\datasets$ is defined as the conjunction of all possible data sets: $\datasets = \{D \in \mathbb{R}^{d \times n}|~ d,n \in \mathbb{N}_+\}$. In abuse of notation, we write $S = \{x_{(0)}, \dots, x_{(n-1)}\}$ to indicate an ordered $1$-dimensional set where the type of ordering (ascending or descending) becomes clear from the context. The rank of an element $\wdw$ regarding $D = \{x_{(0)}, \dots, x_{(n-1)}\}$ is defined as the index of $\wdw$ if sorted into $D$ and denoted by $\rankF{\wdw}{D}$.
We write $[z]$ for some $z \in \mathbb{N}$ to indicate the set of natural numbers $\{0, \dots, z-1\}$  When writing $A + B = C$ it means an element-wise addition with $A, B$ being two arbitrary sets of data points. E.g. $A = \{x_1, x_2\}$ and $B = \{y_1, y_2\}$, $A + B = C = \{x_1 + y_1, x_2 + y_2\}$. 
We denote accessing the $i^{\text{th}}$ dimension of data set $D$ as $D^{(i)} = \{ x^{(i)}_0, \dots, x^{(i)}_{n-1} \}$.
$D$ has range $R = [a,b] \subseteq \mathbb{R}$ which means that $\forall i \in [d], \forall x \in D: a \leq x^{(i)} \leq b$. The probability density function (pdf) and cumulative density function (cdf) of a random variable $N$ are written as $\text{pdf}_N$ and $\text{cdf}_N$, respectively. In case of a normal distribution, we write $\phi$ and $\Phi$. The notation of $\tilde n$ denotes that noise was added to preserve privacy. We write whp. for "with high probability".

\subsection{Clustering}
The task of clustering is to partition a data set into a set of clusters that fit well into the underlying structure of the data. Each cluster centre typically has a representative and the set of all representatives is called the cluster centres. More formally, given a data set $D$, we aim to find a set of $k$ cluster centres $C = \{c_0, \dots, c_{k-1}\}$. We say that a data point $x \in D$ belongs to cluster $c_i$ or that $c_i$ represents $x$ if $\argmin_{j \in [k]}||x - C_j||_2 = i$. We use $C_i$ to denote the subset of data points in $D$ which are represented by $c_i$. 
Note that the term cluster centres describe the representatives $c_i$ whereas the term clusters describes sets of data points.
We use $k$ to denote the number of cluster centres, which we also call the size of a clustering result. The term mechanism describes an algorithm.

\subsection{Differential Privacy}
We assume that a given data set contains sensitive information and we follow the notion of DP to process the data set in a privacy-preserving way. In general, we assume that multiple individuals contribute sensitive data points to form a data set. Informally speaking, an algorithm satisfies DP if a single data point can be replaced without changing the output by much. In essence, the output of an algorithm should be robust to small changes in the input because then the influence of single individuals is also limited. A key concept of DP is the so called privacy budget $\varepsilon$ because it calibrates the degree of privacy protection. A small privacy budget means high privacy protection but on the other hand it also entails a strong utility loss because the output has to be randomised.
To talk about the privacy guarantees for individuals, in DP we analyse the difference one data point has on the output of a mechanism. Therefore, we need the terms of neighbouring data sets and the sensitivity of a function.
\begin{definition}[Neighbouring data sets]
Two data sets $D_0,D_1 \in \datasets$ are neighbouring, written as if $D_1= D_0 \cup \{x\}$ for a point $x$.
\end{definition}
\begin{definition}[Sensitivity]
\label{def:sensitivity}
    Given two neighbouring data sets $D_0, D_1$, an arbitrary set $X$, a function $f:\datasets \times X \rightarrow \mathbb{R}$ has sensitivity $\Delta_\score$ iff.\ $\Delta_\score \ge \max_{D_0 \sim D_1} |f(D_0) - f(D_1)|$. 
    A function $f$ with sensitivity $\Delta_\score \in \mathbb{R}$ is a $\Delta_\score$-bounded query.
\end{definition}
\begin{definition}[$(\varepsilon, \delta)$-DP]
    An algorithm $M : \datasets \rightarrow A$ preserves \emph{$(\varepsilon,\delta)$-Differential Privacy (short: $(\varepsilon,\delta)$-DP)} for some $\varepsilon > 0$ and $0 < \delta \leq 1$ if for all $D_0,D_1\in \datasets$ with $D_0 \sim D_1$ and $O\subseteq A$:
    \[\textstyle\Pr[{M(D_0) \in O}] \le \exp(\varepsilon) \textstyle\Pr[{M(D_1) \in O}] + \delta\]
\end{definition} 
To prove that an algorithm preserves DP, we make use of sequential composition, parallel composition, and post-processing. Sequential composition is the property that the composition of $j$ mechanisms that each preserve $(\varepsilon, \delta)$-DP, preserves $(j\varepsilon, j\delta)$-DP. Parallel composition can be applied when using multiple mechanisms that all preserve $(\varepsilon,\delta)$-DP on disjoint subsets of the data set. Then the composition of these mechanisms also preserves $(\varepsilon,\delta)$-DP. Post-processing is a property of DP which guarantees that any algorithm that only processes the output of a $(\varepsilon,\delta)$-DP mechanism, also preserves $(\varepsilon,\delta)$-DP. A common differentially private mechanism is to add noise randomly drawn from a Laplace distribution that is scaled by the sensitivity and the privacy budget.
This mechanism is called the Laplace mechanism and can for instance be used to estimate the number of elements in a set.
\begin{corollary}[DP of Laplace Mechanism \cite{Laplace}]
\label{cor:Laplace-DP}
    Given $\varepsilon \in \mathbb{R}_{>0}$ and a function $f$ with sensitivity $\Delta_\score$, adding noise randomly drawn from $\Lap(\Delta_\score/\varepsilon)$ preserves $(\varepsilon, 0)$-DP.
\end{corollary}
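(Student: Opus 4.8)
The plan is to argue directly from the density of the Laplace distribution, establish a pointwise likelihood-ratio bound, and then integrate it over the output event. Write the mechanism as $M(D) = f(D) + Z$ with $Z \sim \Lap(\Delta_\score/\varepsilon)$, so that $M(D)$ is a real-valued random variable with density $p_D(y) = \tfrac{\varepsilon}{2\Delta_\score}\exp\!\bigl(-\tfrac{\varepsilon}{\Delta_\score}|y - f(D)|\bigr)$ for $y \in \R$; note this density is strictly positive everywhere, so ratios are well defined. Fix neighbouring data sets $D_0 \sim D_1$ (i.e.\ $D_1 = D_0 \cup \{x\}$) and an arbitrary output value $y$.

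First I would form the ratio of densities, where the normalising constants cancel, giving $p_{D_0}(y)/p_{D_1}(y) = \exp\!\bigl(\tfrac{\varepsilon}{\Delta_\score}(|y - f(D_1)| - |y - f(D_0)|)\bigr)$. Next I would invoke the reverse triangle inequality $|y - f(D_1)| - |y - f(D_0)| \le |f(D_0) - f(D_1)|$, and then the sensitivity bound of \Cref{def:sensitivity}, $|f(D_0) - f(D_1)| \le \Delta_\score$, to conclude $p_{D_0}(y) \le e^{\varepsilon}\, p_{D_1}(y)$ for every $y \in \R$. Finally, integrating this pointwise bound over any measurable $O \subseteq \R$ yields $\Pr[M(D_0) \in O] = \int_O p_{D_0}(y)\,dy \le e^{\varepsilon} \int_O p_{D_1}(y)\,dy = e^{\varepsilon}\Pr[M(D_1) \in O]$, which is exactly the defining inequality of $(\varepsilon, 0)$-DP (the additive term $\delta$ is $0$). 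For a vector-valued query one simply replaces the scalar absolute value by the $\ell_1$ norm, uses a product of independent Laplace coordinates, and runs the same cancellation-plus-triangle-inequality step, which factors coordinatewise.

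There is no real obstacle here; the only care needed is bookkeeping. One must check that the ``neighbouring'' relation used in the corollary is precisely the one quantified over in \Cref{def:sensitivity}, so that the bound $|f(D_0) - f(D_1)| \le \Delta_\score$ applies verbatim, and one should state the measure-theoretic integration step cleanly rather than waving at it. I would keep the write-up to the three displayed manipulations above.
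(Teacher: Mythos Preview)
Your argument is the standard density-ratio proof and is correct as written; the pointwise bound via the reverse triangle inequality followed by integration over $O$ is exactly how this result is established in the literature. Note that the paper does not actually prove \Cref{cor:Laplace-DP} itself: it states the corollary with a citation and treats it as a known result, so there is no in-paper proof to compare against beyond observing that you have reproduced the classical argument the citation points to.
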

The Exponential Mechanism is an useful tool to select an element that maximises a certain score in a privacy-preserving way. It realises a noisy version of the argmax function.

\begin{definition}[Exponential Mechanism]
\label{def:expMech}
Given an $\Delta_{\score}$-bounded function $\score$:  $\Delta_\score \ge \max_{S\sim S', s\in W_S \cup W_{S'}} |\score(S,\wdw) - \score(S',\wdw)|$, where $S' = S \cup \{x\}$ for some data point $x$. Let $S \in \datasets$ and $\varepsilon >0$, then the Exponential Mechanism $M_{E}$, which takes $S, \score, \varepsilon$ as input draws an element $\wdw \in W$ with probability
\[\mathrm{pmf}_{M_{E}(S,\score,\varepsilon)}(\wdw)=\frac{\exp(\varepsilon\score(S,\wdw)/\Delta_{\score})}{\sum_{\wdw'\in W_S}\exp(\varepsilon\score(S,\wdw')/\Delta_{\score})}\text{,}\]
where $W_S:= dom(\score(S,\cdot))$ is the domain of the second input of $\score$.
\end{definition}
In this work, we consider the following scenario regarding sensitive data: We assume that a data set $D$ contains sensitive information, we further assume the number of dimensions $d$ and the range $R$ of the data set to be given. Every other property of the data set has to be either calculated directly from $d$ and $R$, such as the diameter of the space the data lives in, or has to be computed in a privacy-preserving way. The latter also includes the number of data points $n$ as we will only use its noised version $\tilde{n}$.

\section{\DPM}
\label{sec:dp-mon}
We present the details of our Differential Privacy (DP) clustering algorithm \DPM. First, we present the general approach of \DPM and the scoring function that is used to find reasonable separations. Next, we introduce some algorithmic design choices that are important for \DPM to satisfy DP and present the pseudo-code of \DPM. In the end, we show how to calibrate the scoring function for a given data set in a privacy-preserving way. 

\subsection{Methodology}
\begin{figure}
    \centering
    \includegraphics[scale=0.55]{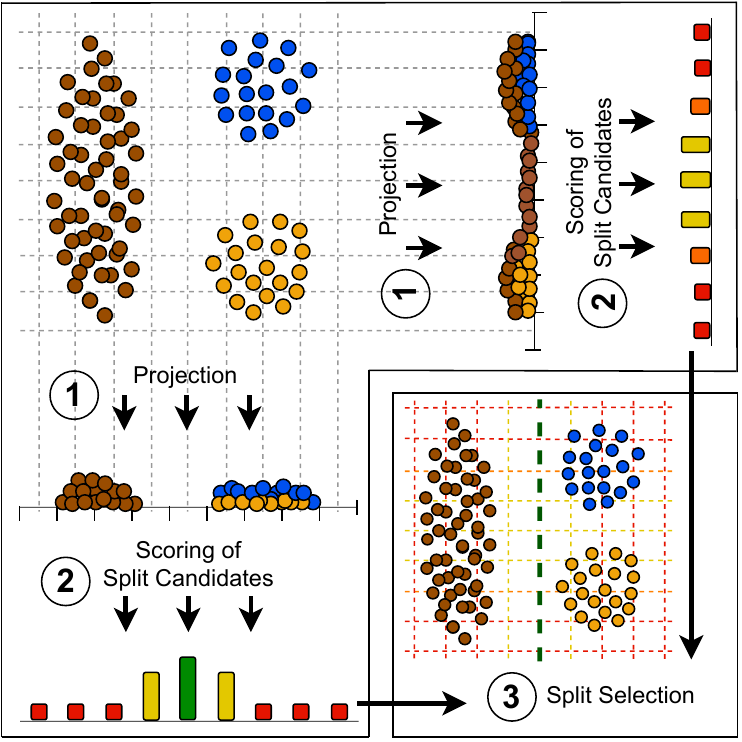}
    \caption{A single recursion step in the process of \DPM. \ding{192} The data points are projected onto each dimension and multiple split candidates are generated based on a fixed split interval size that is calibrated to the data set. \ding{193} A scoring function that depends on the specific clustering goal assigns a score to each split candidate. \ding{194} The split candidate with the highest score gets selected whp. to subdivide the data set into two disjoint subsets. This procedure is  recursively repeated until only a few elements in each subset remain.}
    \label{fig:mondrianClustering}
\end{figure}
\DPM follows a geometrical approach to recursively subdivide a given data set into disjoint clusters that resemble the underlying structure of the data. If the data set contains separate groups of data points, there are regions between these groups that are sparse. \DPM aims to find sparse regions in the data and separate the data set accordingly. The general process is depicted in \Cref{fig:mondrianClustering} and is recursively applied.
To limit the leakage of sensitive information of the overall algorithm, each recursion step of \DPM has to preserve DP (sequential composition) since every recursion step accesses the data set. An important difference between \DPM and other privacy-preserving clustering algorithms is that \DPM does not use a parameter $k$ to determine the intended number of clusters but instead uses an upper limit on the number of splits. 

First, the data points are projected onto each dimension to generate dimension-specific split candidates. Each dimension is divided into equally sized intervals and the centre of each interval is a split candidate. The size of split intervals highly depends on the data and therefore needs to be calibrated in a privacy-preserving way. Then, each split candidate gets a score based on a scoring function that favours splits in sparse regions and simultaneously avoids border regions of the data. Even though border regions are sparse, splits would not result in reasonable cluster centres. Based on these scores, \DPM selects a split candidate with a high score in a privacy-preserving way. To get cluster centres that represent the data points in the data set well, \DPM does not blindly apply splits. A selected split is only applied if there are a minimum number of data points in each subset. To compute cluster centres for each subset that are good representatives, \DPM computes a privacy-preserving average of each subset and outputs the resulting cluster centres.

\subsection{Splits Through Sparse Regions}
\label{Ssec:scoringFunction}
\DPM aims to find splits that separate data in sparse regions. Therefore, \DPM uses a scoring function that consists of two subscores called emptiness and centreness, depicted in \Cref{fig:scoring_func}. The emptiness favours splits in sparse regions and the centreness favours splits close to the median instead of border regions. A weighted combination ($\alpha$ respectively) of both subscores is used as scoring function to balance the subscores. In \cref{ssec:metricWeights_discussion}, we discuss this balance further. Equipped with this compound scoring function, \DPM selects splits in sparse yet relevant regions of the data set. 

The design of the scoring function is not only important in terms of utility but also in terms of DP. Thus, both subscores are designed such that their sensitivity is in $\mathcal{O}(n^{-1})$. Since we work with large data sets, the sensitivity of the scoring function is small and less noise is required to satisfy DP.

\paragraph{Emptiness}
Using the emptiness subscore, splits are favoured that are located in sparse regions of the data. The emptiness measures how many data points are in the vicinity of a split candidate, relative to the total number of data points. Recall that each dimension is divided into evenly spaced intervals of which each centre serves as a split candidate. To determine the emptiness of a split candidate $\wdw$ for a set $S$, \DPM counts the number of elements $|s|$ inside the respective interval, computes the proportion to the number of data points in $S$: $\tilde n$. As we want to maximise the score, we compute the difference between the optimal centreness of $1$ and this proportion.

\begin{figure}[t]
    \centering
    \includegraphics[scale=1.2,trim={0.25cm 0.25cm 0.25cm 0.25cm},clip]{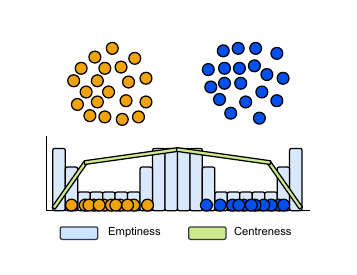}
    \caption{A visualisation of the scoring function that \DPM uses to evaluate split candidates. In order to assign a score, the data points are projected to each dimension. Then, for every split candidate emptiness (light blue) and centreness (light green) are computed.}
    \label{fig:scoring_func}
\end{figure}

\begin{definition}[Emptiness]
\label{def:empt}
Given a split interval size $\wdwSize$, a set $S\subseteq D$, a split candidate $\wdw$ in dimension $i$ and $\tilde n = |S|  + \Lap(\epsCount) \in \mathbb{N}_+$. Then, $|s|$ is the number of data points in $S^{(i)}$ that are contained in the split interval around $\wdw$: $|s| = |\{x \in S^{(i)} \mid s - 0.5\wdwSize \leq x \leq s + 0.5\wdwSize\}|$. Then the \emph{emptiness} $\emptFull{S}{\tilde n}{\wdw}$ is given by 
    \label{def:empt_frac}
    \[\textstyle \emptFull{S}{\tilde n}{\wdw} = 1 - \frac{|s|}{\tilde n}.\]
\end{definition}

The size of split intervals $\wdwSize$ has to be calibrated to the data set to obtain an emptiness that actually favours sparse regions. 

\paragraph{Centreness}
To avoid splitting in border regions of the data, \DPM combines emptiness with centreness. The subscore centreness uses the distance to the median of the data points in the considered dimension to favour splits which are not too far away from the median. To describe the position of split candidates relative to the data points, \DPM uses the concept of ranks. A rank describes the position of a split candidate as the index of that split candidate if it was inserted into the sorted data points (according to the considered dimension).
It is not relevant for us whether a split is right at the median or only close to it as long as the border regions get a bad subscore compared to the centre.
We therefore use a centreness that is composed of two linear functions, an outer one that drastically penalises split candidates far away from the median and an inner one that assigns a similar score to all split candidates in the vicinity of the median. The interval of outer and inner quantiles is determined by $q$ and the minimum score for splits in the inner quantile is $1-t$. Thus, $t$ and $q$, control the position of the transition from the outer to the inner centreness as well as the slope of the linear functions. To ensure that the slope of the inner centreness is smaller, we require $t\ge 2q$.

\begin{definition}[Centreness]
\label{def:cent}
Given parameters $t, q$, a set $S$, split candidate $\wdw$ in dimension $i$, $\tilde n = |S| + \Lap(\epsCount) \in \mathbb{N}_+$, then let $\rank = \rankF{\wdw}{S}$ is the rank of $\wdw$ when inserted in the sorted $S^{(i)}$. We distinguish the cases that $\rank$ is in one of the outer quantiles $\outerQuantile = [0,\tilde nq]\cup[\tilde n-\tilde nq,\tilde n]$ and in the inner quantile $\innerQuantile=(\tilde nq, \tilde n-\tilde nq)$. Then, the \emph{centreness} $\centFull{S}{\tilde n}{\wdw}$ is defined as 
\[
    \textstyle\centFull{S}{\tilde n}{\wdw} = 
        \begin{cases}
            \frac{(\frac{\tilde n}{2} - |\rank-\frac{\tilde n}{2}|) \cdot t}{\tilde nq} & \text{if $\rank \in \outerQuantile$}
            \\ \\
            \frac{t-2q}{1-2q} + \frac{(\frac{\tilde n}{2}-|\rank-\frac{\tilde n}{2}|) \cdot (1-t)}{\frac{\tilde n}{2}-\tilde nq} & \text{if $\rank \in \innerQuantile$}
        \end{cases}\text{.}
\]
We say that $t,q$ are \emph{valid parameters} if $t \ge 2q >0$.
\end{definition}

The scoring function balances the subscores emptiness and centreness with the weight $\alpha$. In \Cref{ssec:metricWeights_discussion} we estimate a useful balance between emptiness and centreness.
\begin{definition}[Scoring function]
\label{def:score}
Given the centreness parameters $t,q$, the split interval size $\wdwSize$, a set $S$, a split candidate $\wdw$ and the subscore weight $\alpha >0$. Then, the \emph{score} $\scoreFull{S}{\tilde n}{\wdw}$ is given by
\[
    \textstyle\scoreFull{S}{\tilde n}{\wdw} = \centFull{S}{\tilde n}{\wdw} + \alpha\emptFull{S}{\tilde n}{\wdw} \text{.}
\]
If $t,q,\wdwSize$ are clear from context we write $\score$. 
\end{definition}

\subsection{Enabling Privacy-Preserving Splits}
\label{sec:priv_splits}

\DPM recursively splits the data set into disjoint subsets and since every recursion step accesses the data set, each step also has to preserve DP. Since the number of data points changes after each step and can become very small compared to the total number of data points, \DPM incorporates some design choices that are necessary to preserve DP and utility. 

\paragraph{Noisy Count}
The noise introduced by the Exponential Mechanism when selecting a split candidate is scaled by the sensitivity of our scoring function, which depends on the number of elements in the current subset. If the sensitivity is less than the true sensitivity, the noise would be insufficient and the privacy guarantees of the Exponential Mechanism would not hold. Since the number of elements $n$ leaks information, we add noise to $n$ using the Laplace mechanism to get $\tilde n$. However, this results in $\tilde n > n$ about $50\%$ of the time. So we introduce an offset $\lambda$ to make sure we underestimate the number of elements: $\tilde n - \lambda \le n$ with probability $1-\deltaCount$. Now, we only overestimate the number of elements and thus violate privacy in the Exponential Mechanism with probability $\deltaCount$. The offset $\lambda$ can be computed as follows:
\begin{align}
    &\mathit{cdf}_{\Lap(1/\epsCount)}(\lambda) \le \deltaCount\nonumber
    \Leftrightarrow 1/2 \exp(-\lambda\epsCount) \le \deltaCount\nonumber\\
     &\Leftrightarrow -\lambda\epsCount \le \ln(2\deltaCount)
    \Leftrightarrow \lambda \ge -\ln(2\deltaCount)/\epsCount \label{eq:offsetEstimation}
\end{align}

\paragraph{Distributing the Privacy Budget}
Distributing the privacy budget is an important design choice when it comes to differentially private mechanisms. We observe that the number of data points in each subset is much higher in the first recursion steps than in the last, which reduces the utility especially in the last step. Therefore, we formulate the following optimisation problem to obtain a privacy budget distribution that minimises the overall noise when assuming an exponentially decreasing number of data points:
\begin{equation}
\begin{aligned}
\min \quad &\textstyle \sum^{\maxdepthRec}_{i=0} \varepsilon_i \cdot 2^{i} \hspace{1em}\textrm{s.t.} \quad & \textstyle \sum^{\maxdepthRec}_{i=0} \varepsilon_i = \varepsilon, \quad \varepsilon_i \geq 0
\end{aligned}
\end{equation}
Using Lagrange Multipliers to solve for $\varepsilon_i$ we obtain $\varepsilon_i = \varepsilon \cdot \sqrt{2^{i}} / \sum_{j=0}^\maxdepthRec \sqrt{2^{i}}$ for a maximum of $\maxdepthRec$ privacy budgets.

\subsection{Clustering Algorithm}

The pseudo-code of \DPM is given in \Cref{algo:dpm}. The input for \DPM consists of the data set $D$, the maximum split depth $\maxdepthRec$, the range of the data set $R$, the centreness parameters $t, q$, candidates for the interval size $\emph{sigmas}$, weight for the emptiness function $\alpha$ and the privacy budgets for each individual step. \DPM starts by initialising the set of cluster centres \emph{clusters} and a set for the noisy cluster sizes \emph{weights}. The total number of data points in $D$ is estimated in a privacy-preserving way which results in $\tilde{n}$. Some steps are repeated for subsets that vary in their size. For these steps, \DPM computes individual privacy budgets. The offsets for calculating the sensitivity of the scoring function are also computed individually for each recursion step. Next, \DPM determines an appropriate size of split intervals $\wdwSize$ through \Call{IntervalSizeEst}{}~(\Cref{algo:wdwsize}). Then, the number of split candidates per dimension $\numSplits$ can be computed based on $\wdwSize$ and the length of the range (Line $1 - 8$). 

\DPM recursively selects and applies splits to subdivide $D$. When the recursion depth $\currDepthRec$ has reached its maximum $\maxdepthRec$, \DPM does not further subdivide a subset. After selecting a split, this split results in two disjoint subsets $S_1, S_2$ with $S_1 \cup S_2 = S$. \DPM checks whether the number of elements in both subsets still satisfy the minimum number of elements $\minNumElem$. To preserve privacy, the number of elements in $S_1$ and $S_2$ is only estimated. Then, if $\tilde{n}_{S_i} \geq \minNumElem, i \in \{1, 2\}$, \DPM continues to recursively subdivide $S_1$ and $S_2$. Otherwise, the selected split is not applied and the recursion stops. (Line $12-19$)

\DPM generates split candidates along each dimension, assigns a score to each, and selects a split with a high score. In detail, \DPM considers $\numSplits$ split candidates per dimension. Split candidates are generated by dividing the range of each dimension $R = [a, b]$ into $\wdwSize$-sized intervals. The centres of these intervals serve as split candidates. The scoring function $\score$~(\Cref{def:score}) used to assess split candidates is introduced in \cref{Ssec:scoringFunction}. \DPM uses the Exponential Mechanism as defined in \cref{def:expMech} (a privacy-preserving argmax) to select the index of the split with a high score. The index of that split is used to derive its dimension $d^*$ and position $s^*$. Using $d^*$ and $s^*$, \DPM subdivides the subset $S$ into $S_1$ and $S_2$ (Line $20-27$).  

After the maximum number of splits $\maxdepthRec$ is reached or the minimum number of elements $\minNumElem$ was not satisfied in a recursion step, a set of clusters \emph{clusters} remains. To obtain the corresponding cluster centres $C$, \DPM applies a privacy-preserving averaging to each cluster (Line $10$). To do this, \DPM adds Gaussian noise scaled to the radius of the data set and the privacy parameters $\epsAverage, \deltaAverage$ to the sum of all elements in a cluster. Then, to obtain the cluster centres, the sums are divided by the noisy number of elements $\tilde n$ in each cluster. Finally, the output of \DPM consists of the noisy cluster centres and the noisy number of elements per cluster.
\begin{algorithm}[t]
\caption{\DPM}\label{algo:dpm}
\begin{algorithmic}[1]
    \Require $D, \maxdepthRec$, $R$, $t$, $q$, \emph{sigmas}, $\alpha$, $\epsWdwSize$, $\epsCount$, $\epsExpMech$, $\epsAverage$
    \State $\text{\emph{clusters}} = \emptyset$
    \State $\text{\emph{weights}} = \emptyset$
    \State $(\epsCounti{i})_{i=0}^{\maxdepthRec} = (\epsCount \sqrt{2^{i}}  / \sum_{j=0}^\maxdepthRec \sqrt{2^{j}})_{i=0}^{\maxdepthRec}$
    \State $(\varepsilon_{\text{exp}_{i}})_{i=0}^{\maxdepthRec-1} = (\epsExpMech \sqrt{2^{i}} / \sum_{j=0}^{\maxdepthRec-1} \sqrt{2^{j}})_{i=0}^{\maxdepthRec-1}$
    \State $(\lambda_i)_{i=0}^{\maxdepthRec} = (-\ln(2\delta)/\epsCounti{i})_{i=0}^{\maxdepthRec}$ \Comment{cf. \Cref{eq:offsetEstimation}}
    \State $\tilde{n} = |D| + \text{Lap}(\epsCounti{0})$
    \State $\wdwSize = \Call{IntervalSizeEst}{D, \tilde n, \epsWdwSize, \emph{sigmas}}$ \Comment{\Cref{algo:wdwsize}}
    \State $\numSplits = (b-a) / \wdwSize$ \Comment{$R = [a, b]$}
    \State $\Call{BuildClustering}{D, \tilde{n}, 0}$
    \State $C = \{ \Call{DPAvg}{C_i, \tilde{n}, \epsAverage} \mid C_i \in \emph{clusters}, \tilde{n} \in  \emph{weights}\}$
    \State \Return $C$,\emph{weights}
    \Procedure{BuildClustering}{$S$, $\tilde{n}$, $\currDepthRec$}
        \State If $\currDepthRec \geq \maxdepthRec$: halt and add $S$ to \emph{clusters} and $\tilde{n}$ to \emph{weights}.
        \State $S_1, S_2 = \Call{Split}{S, \tilde{n}_S, \currDepthRec}$
        \State $\tilde{n}_{S_1} = |S_1| + \text{Lap}(\varepsilon_{\text{cnt}_{\currDepthRec + 1}})$
        \State $\tilde{n}_{S_2} = |S_2| + \text{Lap}(\varepsilon_{\text{cnt}_{\currDepthRec + 1}})$
        \State If $\tilde{n}_{S_i} < \minNumElem$: halt and add $S$ to \emph{clusters} and $\tilde{n}$ to \emph{weights}.
        \State $\Call{BuildClustering}{S_1, \tilde{n}_{S_1}, \currDepthRec + 1}$
        \State $\Call{BuildClustering}{S_2, \tilde{n}_{S_2}, \currDepthRec + 1}$
    \EndProcedure
    \Procedure{Split}{$S$, $\tilde{n}$, $\currDepthRec$}
        \State $\Delta_{\score} = \frac{t/q + \alpha }{\tilde{n} - \lambda_{\currDepthRec}}$
        \State $i^* = M_E(S,\score,\varepsilon_{\text{exp}_{\currDepthRec}})$ \Comment{\Cref{def:expMech} \& \ref{def:score}}
        \State $d^* = \lfloor d \cdot \numSplits / i^* \rfloor$
        \State $s^* = ((d \cdot \numSplits \mod i^*) + 0.5) \cdot \wdwSize$
        \State $S_1 = \{ x \in S \mid x^{(d^*)} \leq s^*\}$
        \State $S_2 = \{ x \in S \mid x^{(d^*)} > s^*\}$
        \State \Return $S_1, S_2$
    \EndProcedure
\end{algorithmic}
\end{algorithm}

\subsection{Split Interval Size Estimation}
\label{ssec:wdwSizeEstimation}

As the emptiness subscore is based on the number of data points in a split interval, the size of split intervals needs to be calibrated to the data set. If the split intervals are too small, the emptiness for split candidates in the centre of a cluster would be quite high. If the split intervals are too large, sparse regions between clusters would not be recognised. Thus, split intervals have to be just large enough such that the emptiness is sensitive to dense regions.

First, let us consider the case of a single distribution, e.g. a Gaussian.
Starting on the outer left or right of the distribution and moving closer to the mean, one observes that the distance between neighbouring data points steadily decreases. The rate of decrease depends on the standard deviation of that distribution. 
We consider the more realistic case of a multidimensional multi-modal distribution, this effect slightly changes. Distances between neighbouring data points only decrease closer to a cluster centre. In this case, the effect is not determined by a single standard deviation but rather by the standard deviations of all clusters. We call this the spread of the data points. To estimate the spread of the data points, we estimate the average distance between neighbouring data points.

Optimally, we could directly estimate the distribution of distances between neighbouring data points. However, this kind of ordered statistics becomes infeasible to compute for $|D| > 1000$~\cite{expQ2020}. Instead, we use quantiles as they can easily be computed and are robust in terms of DP. With quantiles, we can estimate the behaviour of the distances between neighbouring data points.

First, we compute quantiles over distances for distributions created from multiple standard deviation candidates. Then, we derive distances between neighbouring data points for each dimension of the given data set and average these distances. Next, we compute a single quantile over these distances and choose the standard deviation $\sigma^*$ that produces the closest quantile. Finally, we use $\sigma^*$ as baseline and compute the split interval size for \DPM $\wdwSize = \frac{1}{2} \sigma^*$. We choose a factor of $\frac{1}{2}$ because in case of a standard Gaussian $\approx 38\%$ of the data points are within $\frac{1}{2}\sigma$ from the mean.

\paragraph{Algorithm}
The pseudo-code for estimating the split interval size for \DPM is given in \Cref{algo:wdwsize}. First, we create a lookup table $T$ that represents a mapping from a standard deviation $\sigma$ to the $65^{\text{th}}$ percentile $p$ over distances of neighbouring data points $A = \{a_0, \dots, a_{\tilde{n}-2}\}$ with $a_{j} = |x^{(i)}_{(j+1)} - x^{(i)}_{(j)}|$, averaged dimension-wise. The notation $x^{(i)}_{(j)}$ indicates that a data point $x$ has index $j$ along dimension $i$ after sorting. Afterwards, we repeat the process for a given data set $D$ to privately calculate the $65^{\text{th}}$ percentile over distances $\tilde{p}$ by using Algorithm $4.3$ from~\cite{expQ2020}~(DP-PERCENTILE). Finally, we select the standard deviation $\sigma^* = T[p^*]$ which yields the closest percentile $\tilde{p} \approx p^*$ and compute $\wdwSize = \frac{1}{2} \sigma^*$.

\begin{algorithm}[t]
\caption{\textsc{IntervalSizeEst}}\label{algo:wdwsize}
\begin{algorithmic}[1]
\Require $D$, $\tilde n$, $\epsWdwSize$, \emph{sigmas}
    \State $T \gets$ empty mapping
    \For{$\sigma$ in \emph{sigmas}}
        \State Sample $X = \{x_0, \dots, x_{\tilde{n}-1}$\} from $\mathcal{N}(0, \sigma I_d)$ 
        \State $A = \frac{1}{d}\sum_{i=0}^{d-1} \Call{Dists}{X^{(i)}}$ 
        \State $p =$ \Call{percentile}{$A$, $65$} \Comment{Non-private percentile}
        \State $T[p] = \sigma$
    \EndFor
    \State $A = \frac{1}{d}\sum_{i=0}^{d-1} \Call{Dists}{D^{(i)}}$
    \State $\tilde{p} = \Call{dp-percentile}{A, 65, \epsWdwSize}$
    \State $\sigma^* = T[\tilde{p}]$ \Comment{Approximate equality}
    \State \Return $\frac{1}{2} \sigma^*$
\Procedure{Dists}{$x_0, \dots, x_{\tilde{n}-1}$}
    \State $x_{(0)}, \dots, x_{(\tilde{n}-1)} \gets \Call{Sort}{x_0, \dots, x_{\tilde{n}-1}}$ \Comment{Ascending order} 
    \State \Return $(|x_{(i+1)} - x_{(i)}|)_{i=0}^{\tilde{n}-2}$
\EndProcedure
\end{algorithmic}
\end{algorithm}

\subsection{Time Complexity}
\label{ssec:timeComplexityDPM}
To analyse the time complexity of \DPM, we first take a look at the time complexity of \cref{algo:wdwsize} and then analyse \cref{algo:dpm}. 
\cref{algo:wdwsize} starts by drawing and dimensionally sorting the data points for each $\sigma \in \emph{sigmas}$ and all dimensions, resulting in $O(|\emph{sigmas}| \cdot d \cdot |D| \log |D|)$  to compute $A$. After sorting $A$, $p$ can be computed in constant time. 
Now, the process is repeated on $D$ for a single iteration, which again is in $O(d \cdot |D| \log |D|)$. After sorting $A$, it is again possible to compute $\tilde{p}$ in constant time. Assuming a constant lookup in $T$, the time complexity of \cref{algo:wdwsize} is in $O(|sigmas|\cdot d \cdot |D| \log |D| )$.

To analyse the time complexity of \cref{algo:dpm}, we go through each step separately. The computations in lines 1, 2, 6 and 8 are constant and lines 3-5 are linear in $\maxdepthRec$. 
Next, the clustering is obtained, by calling the procedure \textsc{BuildClustering}, which calls the procedure \textsc{Split}. To split the current set $|S|$ into two subsets, first, $\numSplits$ split candidates are generated per dimension.
For each dimension, all the data points in the data set $D$ must be sorted, which is in $O(d \cdot |D| \log|D|  )$. For each set $S$, the desired order of the data points can then be obtained in $O(d\cdot |S|)$.
Given the order, for each split candidate for $S$ the score $\score$ can be computed in $O(|S|)$. Given a split candidate $s^*$, the partitioning of the data into $S_1, S_2$ can also be done in $O(|S|)$. So the procedure \textsc{Split} is in $O(d \cdot |S|)$. 
For all sets $S$ for which the procedure \textsc{BuildClustering} is called  at recursion level $\gamma \le \maxdepthRec$ it holds that: 
\small\[\sum_{S:\text{\textsc{BuildClustering}}(S,\tilde n, \gamma)}d\cdot|S| = d\cdot \sum_{S:\text{\textsc{BuildClustering}}(S,\tilde n, \gamma)}|S|\le d |D|\text{.}\]
Given this inequality, although \textsc{BuildClustering} can be called up to $2^{\maxdepthRec - 1}$ times with a time complexity of $O(d\cdot|S|)$ for each call, the time complexity of all recursion steps is $O(\maxdepthRec \cdot d \cdot |D| + d \cdot |D|\log |D|)$, since \DPM has at most $|D|$ data points to consider for each recursion level. 
Computing the dimension-wise average of all data points in a cluster $C_i$ (Line 10) can be done in $O(d \cdot |C_i|)$. With $|D| = \sum_{C_i\in clusters}|C_i|$, we get $O(d \cdot |D|)$ to compute the cluster centres given a clustering. 
The time complexity of \cref{algo:wdwsize}, dominates the effort to sort the data points before the recursion as well as the effort to compute the cluster centres. Also, the computations in lines 3-5, which are in $O(\maxdepthRec)$, are dominated by the time complexity of the recursive splitting. 
In total, the time complexity of \cref{algo:dpm} is in $O(d \cdot |D| (|sigmas|\log|D| + \maxdepthRec))$.

\section{Privacy Guarantees}
\label{sec:privacy}
To prove that \DPM preserves Differential Privacy (DP), we prove that all parts of \DPM preserve DP.
We start by discussing and proving the DP guarantees of the noisy counts. Then, we prove DP for the interval size estimation and the split selection. Finally, we present the composed result by using sequential and parallel composition theorems.

\subsection{Noisy Count}
\label{Sssec:truncatedNoisedCount}

Several subroutines of \DPM require the number of elements in a set, e.g. the steps split interval size estimation (Line 7), checking whether the stopping criterion is fulfilled (Line 17), computing the score of the split candidates and selecting a split candidate with the Exponential Mechanism (Line 22).  
Since counting elements leaks information about the data set, we add Laplace noise to preserve privacy. The sensitivity of the count is $1$, so with \cref{cor:Laplace-DP} it follows that adding Laplace noise with scale $1/\epsCount$ to the count of elements $|S|$ of a set, preserves $(\epsCount, 0)$-DP.
However, as explained in \cref{sec:priv_splits}, this does not work for the Exponential Mechanism because its sensitivity is based on the number of elements. With a probability of $\approx 50\%$  the number of elements would be overestimated and therefore the sensitivity of the scoring function would be underestimated. 
By introducing an offset $\lambda$ we ensure that we underestimate the number of elements with a probability of $1 - \deltaCount$ to add a sufficient amount of noise. 
Formally, if $\tilde{n} > |S|$, the sensitivity $\Delta_{\score}$ would be underestimated. So we shift the noisy count $\tilde{n}$ by $\lambda$ to get $\tilde{n} -\lambda \le |S|$ whp. $1-\deltaCount$.
If we shift the Laplace distribution by $\lambda \ge -\ln(2\deltaCount)/\epsCount$ (\cref{eq:offsetEstimation}) for some $S\subseteq D\in \datasets$, the bound on the ratio of neighbouring density functions (i.e., the privacy loss bound) of the Laplace mechanism still holds up to $|S|+\lambda-1$. Therefore, we can use the cumulative density function $cdf_{\Lap(1/\epsCount)}$ of the Laplace distribution with scaling factor $1/\epsCount$ to determine the probability of exceeding $|S|+\lambda$: $cdf_{\Lap(1/\epsCount)}(|S| + \lambda - 1)$.
\begin{definition}[Shifted Noised Count]
\label{eq:noisycount}
Given $S \subseteq D \in \datasets, \epsCount,\lambda \in \mathbb{R}_{>0}$, then the shifted noisy count of a set $S$ is: 
    \[\textstyle |S| + \Lap\left(\frac{1}{\epsCount}\right) - \lambda \text{.}\] 
\end{definition}
The following DP guarantees of the shifted noised count of a set $S$ follow from \Cref{eq:offsetEstimation} and the argumentation above.
\begin{corollary}[DP of shifted noisy count]
    The noised count $\tilde{n}_S$ shifted by $-\lambda$ for $\lambda \ge -\ln(2\deltaCount)/\epsCount$ for a subset $S$ with $|S|$ elements preserves $(\epsCount,\deltaCount = 1- cdf_{\Lap(1/\epsCount)}(|S|+\lambda-1))$-DP, where $cdf_{\Lap(1/\epsCount)}(|S|+\lambda-1))$ is the cumulative density function of the Laplace distribution.
\end{corollary}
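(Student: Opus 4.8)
The plan is to derive the bound from the standard Laplace mechanism plus post-processing, and then to account for the $\deltaCount$-term by a good-event/bad-event split that is closed precisely by \Cref{eq:offsetEstimation}. First I would observe that $S \mapsto |S|$ is a $1$-bounded query in the sense of \Cref{def:sensitivity}: neighbouring data sets $D_0 \sim D_1$ differ by exactly one point, so the corresponding subsets satisfy $\bigl||S_0| - |S_1|\bigr| \le 1$. Hence by \Cref{cor:Laplace-DP} the raw noisy count $\tilde n_S = |S| + \Lap(1/\epsCount)$ is $(\epsCount,0)$-DP, and since subtracting the data-independent offset $\lambda$ is post-processing, the shifted noisy count $\tilde n_S - \lambda = |S| + \Lap(1/\epsCount) - \lambda$ is again $(\epsCount,0)$-DP. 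This already supplies the $\epsCount$-component of the claim, uniformly (with no slack) for the release itself.

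The reason a positive $\deltaCount$ nonetheless belongs in the statement is that the shifted count is consumed downstream as the denominator of the sensitivity bound $\Delta_{\score} = (t/q + \alpha)/(\tilde n_S - \lambda)$ fed to the Exponential Mechanism (\Cref{algo:dpm}), and that bound is only a valid \emph{over}-estimate of the true scoring sensitivity on the event $E := \{\tilde n_S - \lambda \le |S|\} = \{\Lap(1/\epsCount) \le \lambda\}$; on $E^{c}$ the Exponential Mechanism would draw from too flat a distribution and its privacy loss bound could fail. I would therefore make the good-event decomposition explicit: on $E$ the whole pipeline is consistent with the $(\epsCont,0)$ guarantee above, and on $E^{c}$ the event is simply charged to $\delta$. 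Concretely, for neighbours $S_0 \sim S_1$ and any output set $O$,
$\Pr[M(S_0)\in O] \le \Pr[M(S_0)\in O \wedge E] + \Pr[E^{c}] \le \exp(\epsCount)\Pr[M(S_1)\in O] + \Pr[E^{c}]$,
so it remains only to show $\Pr[E^{c}] \le \deltaCount$.

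The final step is the routine Laplace tail computation together with \Cref{eq:offsetEstimation}: $\Pr[E^{c}] = \Pr[\Lap(1/\epsCount) > \lambda] = \tfrac12\exp(-\lambda\epsCount) = 1 - \mathit{cdf}_{\Lap(1/\epsCount)}(\lambda)$, where the $|S|$ and the ``$-1$'' in the statement's expression $1 - \mathit{cdf}_{\Lap(1/\epsCount)}(|S|+\lambda-1)$ just record that the noisy count has location $|S|$ and that, after rounding $\tilde n$ into $\mathbb{N}_+$, the unsafe event is $\tilde n \ge |S| + \lambda$, i.e.\ $\tilde n > |S| + \lambda - 1$. Plugging the hypothesis $\lambda \ge -\ln(2\deltaCount)/\epsCount$ into the chain of equivalences of \Cref{eq:offsetEstimation} read backwards yields exactly $\tfrac12\exp(-\lambda\epsCount) \le \deltaCount$, hence $\Pr[E^{c}] \le \deltaCount$, and the $(\epsCount,\deltaCount)$-DP conclusion follows.

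I expect the only genuine friction to be bookkeeping rather than mathematics: pinning down exactly what counts as the ``mechanism'' here — whether its output is taken to include the downstream use of $\tilde n_S - \lambda$ as a sensitivity, and how the clamping/rounding of $\tilde n$ to $\mathbb{N}_+$ interacts with the off-by-one in $|S|+\lambda-1$ — and confirming that $\Pr[E^{c}]$ is the same for both neighbouring inputs, which it is since it depends only on the injected Laplace noise and not on $|S|$ in a way that matters. The remaining ingredients — the unit sensitivity of counting, post-processing, and the Laplace CDF identity — are entirely standard.
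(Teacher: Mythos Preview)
Your proposal is correct and follows essentially the same route as the paper: the $\epsCount$-part comes from the Laplace mechanism (with the shift handled by post-processing), and the $\deltaCount$-part is the Laplace tail probability that $\tilde n - \lambda$ fails to underestimate $|S|$, computed via \Cref{eq:offsetEstimation}. You are in fact more explicit than the paper about why a $\deltaCount$ is needed at all---namely that the shifted count is $(\epsCount,0)$-DP in isolation and the slack is only there to protect the downstream Exponential Mechanism whose sensitivity estimate is valid only on the good event---and your good-event/bad-event decomposition spells out what the paper leaves as a one-line remark.
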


For the rest of the paper, we write $\tilde{n}_S$ for the noised count of a set $S$ (cf. \cref{eq:noisycount}), and if $S$ is clear from context we write $\tilde{n}$.

\subsection{Interval Size Estimation}
To show that \Cref{algo:wdwsize} is privacy-preserving, note that the only data-dependent parts, and thus sensitive parts, are the number of data points and the percentile calculation. As \DPM already computes noisy counts in each step, we can use the first noisy count $\tilde{n}_D$ i.e. the total number of points as input. 
To compute percentiles privately, we use algorithm $4.3$ from \cite{expQ2020} (DP-PERCENTILE) as a subroutine. The original algorithm takes data points as input and computes the $p^{\text{th}}$ percentiles and in this case the sensitivity is $1$. However, as we operate on distances between neighbouring data points and not on data points directly, this sensitivity analysis does not hold anymore. In the following theorem, we adjust the proof by using a similar argument to show that the sensitivity is $2$.

\begin{lemma}\label{lem:PrivacyPercentile}
    Algorithm DP-PERCENTILE ($4.3$,~\cite{expQ2020}) applied to distances between neighbouring data points has a sensitivity of $2$. 
\end{lemma}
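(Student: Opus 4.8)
The plan is to trace through the sensitivity analysis of the DP-PERCENTILE algorithm (Algorithm 4.3 of \cite{expQ2020}) and identify exactly where the input quantity enters, then replace the ``data points'' sensitivity bound by the corresponding bound for the derived sequence of neighbouring-point distances. Recall our setting: adding one data point $x$ to $S$ (the neighbouring relation of \Cref{def:sensitivity}) changes the sorted sequence $S^{(i)}$ in one position, and correspondingly changes the distance sequence $A = (|x_{(j+1)}^{(i)} - x_{(j)}^{(i)}|)_j$. The key structural observation is that inserting one point into a sorted list \emph{removes} at most one gap (the interval into which $x$ falls) and \emph{adds} at most two gaps (the two halves of that interval); equivalently, the multiset of distances changes by deleting one element and inserting two, so the number of distances changes by exactly $\pm 1$ and only $O(1)$ entries are affected. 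This is the ``similar argument'' alluded to in the text.

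Concretely, I would proceed as follows. First, recall how DP-PERCENTILE works: it is an instance of the exponential mechanism that selects an interval between consecutive sorted input values, with a utility/score function measured in terms of how far the rank of that interval is from the target rank $p \tilde n /100$; the sensitivity that governs the noise is the sensitivity of this rank-based score under a neighbouring change of the input list. For a list of raw data points, inserting one point shifts every rank by at most $1$, giving sensitivity $1$. Second, I would argue that when the ``input list'' is instead the distance sequence $A$, a single change in the underlying data set $S$ corresponds to deleting one entry of $A$ and inserting two new entries (or the degenerate boundary cases where $x$ falls outside the current range, which insert one new gap and shift nothing else). Third, I would conclude that the rank of any fixed value within the sorted version of $A$ can therefore move by at most $2$ — one unit from the deletion, one unit from each insertion, but bounded overall by $2$ because at most one element leaves and we only need the net displacement of a rank threshold — and hence the score function feeding the exponential mechanism has sensitivity $2$ rather than $1$. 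Plugging sensitivity $2$ into the privacy guarantee of the exponential mechanism (\Cref{def:expMech}), exactly as in the original proof of \cite{expQ2020} but with $\Delta = 2$, yields the claim.

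The main obstacle, and the step needing the most care, is the bookkeeping of \emph{how many} distance entries change and by how much the induced ranks move — in particular handling the boundary cases (the new point becoming the new minimum or maximum, coinciding with an existing point, or the list having very few elements) and making sure the worst case is genuinely $2$ and not larger. One has to be careful that ``delete one, insert two'' does not let a rank jump by $3$: the deletion can lower a given rank by at most $1$, and although two elements are inserted, they are the two pieces of the \emph{same} deleted gap and lie on the same side of any threshold that the original gap did \emph{not} straddle, so the combined effect on any particular rank is bounded by $2$. I would also double-check that DP-PERCENTILE's internal use of $\tilde n$ (the already-noised count, which we feed in as in \Cref{algo:wdwsize}) does not introduce an additional sensitivity term — it does not, since $\tilde n$ is public post-processed data and the algorithm's randomness is conditioned on it. Finally I would remark that this factor-of-two only rescales the noise and so DP-PERCENTILE with the adjusted sensitivity preserves $(\epsWdwSize,0)$-DP, which is what the later composition argument needs.
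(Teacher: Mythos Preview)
Your proposal is correct and follows essentially the same approach as the paper: the paper's proof simply observes that a single change to the underlying data set affects at most two entries of the distance sequence, so the rank-based score used by DP-PERCENTILE shifts by at most $2$ rather than $1$. Your write-up supplies considerably more bookkeeping (the ``delete one gap, insert two'' picture, the boundary cases, and the remark about $\tilde n$ being post-processed) than the paper's two-sentence argument, but the underlying idea is identical; just be careful that your ``same side of any threshold'' justification is not quite the right way to cap the net rank shift at $2$---the clean bound comes directly from noting that at most one element is removed ($-1$) and at most two are inserted ($+2$), giving $|\Delta\text{rank}|\le 2$.
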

\begin{proof}
    The proof of \cite{expQ2020} considers neighbouring data sets and exchanging a single element can only influence the rank of a single data point by one. When applied to distances between neighbouring data points instead of data points, exchanging a single data point can change two distances in the input, thus the rank of an element in the input can change by a maximum of two.
\end{proof}

The proof of \cref{lem:PrivacyPercentile} is in \cref{Ssec:PrivacyPercentile}.
\begin{lemma}
\label{thm:dp-wdwsize}
    Given $\epsWdwSize > 0$ and a data set $D$, \cref{algo:wdwsize} is ($\epsWdwSize,0$)-differentially private when using the noisy count $\tilde{n}_D$ and algorithm DP-PERCENTILE ($4.3$,~\cite{expQ2020}) to compute the $p^{\text{th}}$ quantile.
\end{lemma}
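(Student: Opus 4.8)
The plan is to reduce the statement to a single invocation of DP-PERCENTILE followed by post-processing. First I would identify where \cref{algo:wdwsize} actually touches the sensitive data set $D$: only when the averaged neighbour-distance vector $A$ is formed from $D$ (Line~8) and then handed to DP-PERCENTILE to obtain $\tilde p$ (Line~9). Everything else is $D$-independent or mere post-processing: the lookup table $T$ (Lines~1--7) is built solely from freshly sampled Gaussian points, the fixed percentile level $65$, and the candidate list \emph{sigmas}; the noisy count $\tilde n = \tilde n_D$ is a public input of this subroutine, since its privacy cost is already charged to $\epsCounti{0}$ in \cref{algo:dpm}; and the lookup $\sigma^\ast = T[\tilde p]$ together with the returned value $\tfrac12\sigma^\ast$ are deterministic functions of the already-released quantities $\tilde p$ and $\tilde n_D$.

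Second, I would bound the privacy cost of that one sensitive step. By \cref{lem:PrivacyPercentile}, running DP-PERCENTILE ($4.3$,~\cite{expQ2020}) on neighbour distances rather than on raw data points changes its sensitivity from $1$ to $2$. Because that algorithm scales the noise it adds by the assumed sensitivity, instantiating it with sensitivity $2$ and budget $\epsWdwSize$ recovers $(\epsWdwSize,0)$-DP, and since DP-PERCENTILE is a pure-DP mechanism no $\delta$ term accrues. Hence the map $D \mapsto \tilde p$ is $(\epsWdwSize,0)$-DP.

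Third, I would close with the post-processing property of DP: since $\tilde p$ is $(\epsWdwSize,0)$-DP with respect to $D$, and neither $T$ nor the final rescaling depends on $D$ beyond the public input $\tilde n_D$, the composite output $\tfrac12\,T[\tilde p]$ is also $(\epsWdwSize,0)$-DP. No sequential or parallel composition is needed inside the algorithm because $D$ is accessed exactly once, and this is precisely the privacy guarantee asserted for \cref{algo:wdwsize}.

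The step I expect to demand the most care is checking that the $(\epsWdwSize,0)$ bound genuinely survives the change of sensitivity --- i.e., that plugging sensitivity $2$ into DP-PERCENTILE reproduces $(\epsWdwSize,0)$-DP and not a weaker guarantee --- which requires verifying against the description of Algorithm~$4.3$ in~\cite{expQ2020} that the sensitivity enters only as a multiplicative scaling of the injected noise, as in the Laplace and exponential mechanisms. A secondary subtlety is ensuring that the dimension-wise averaging in Line~8 does not inflate the sensitivity beyond the factor already captured by \cref{lem:PrivacyPercentile}; I would argue that this averaging is a fixed, data-independent linear aggregation of the per-dimension distance lists, so the object actually passed to DP-PERCENTILE is exactly the one for which \cref{lem:PrivacyPercentile} certifies sensitivity $2$.
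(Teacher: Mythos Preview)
Your proposal is correct and follows essentially the same approach as the paper: identify that the only sensitive access is the single DP-PERCENTILE call on neighbour distances, invoke \cref{lem:PrivacyPercentile} for the sensitivity-$2$ bound, and appeal to post-processing for $\tilde n_D$ and the final lookup. The paper's own proof is a two-sentence sketch of exactly this; your additional care about the lookup table being $D$-independent and the dimension-wise averaging not inflating sensitivity simply fills in details the paper leaves implicit.
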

\begin{proof}
    Using the noisy count $\tilde{n}_D$ from \DPM does not leak additional information, which follows immediately from the post-processing property of DP.
    The subroutine applies the algorithm DP-PERCENTILE with sensitivity $2$ to the distances of neighbouring data points.
\end{proof}

\subsection{Sensitivity of the Scoring Function}
\label{Ssec:score_sen}
To select a split with a high score regarding some scoring function $\score$, a score for every split candidate is computed. 
The differentially private selection mechanism, the Exponential Mechanism as defined in \cref{def:expMech} is used to select a split that whp. has a score. 
The privacy guarantees of the selection depend on the sensitivity of the scoring function.
\subsubsection{Sensitivity of Scoring Function}
The scoring function $\score$ for \DPM not only considers the subscore centreness but also the emptiness of an interval. Therefore, to get the sensitivity $\Delta_\score$ we add up the sensitivity of centreness and emptiness.
First, we analyse the sensitivity of the subscore centreness. If one data point changes, the rank of all other data points in the current set can change. Depending on the rank, different linear functions are used to determine the centreness of the split. Therefore, these cases need to be considered.
\begin{lemma}[Sensitivity of the subscore centreness]
\label{lem:sensitivity_centreness}
   Let $S, S'\in \datasets$ be neighbouring data sets and noisy count $\tilde n$ and offset $\lambda \ge -\ln(2\deltaCount)/\epsCount$ with $\tilde n -\lambda \le |S|,|S'|$. 
   Then, the sensitivity for the centreness with valid parameters $t, q$ defined as in \cref{def:cent}, is $\Delta_{\cent_{t,q}} = \frac{t}{(\tilde{n}-\lambda)q}$.
\end{lemma}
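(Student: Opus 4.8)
The plan is to treat the centreness as a one‑dimensional function of the rank and show that this function is continuous and piecewise linear with slope bounded by $t/(\tilde n q)$. Fix a split candidate $\wdw$ in dimension $i$ and neighbouring sets $S,S'$ with $S'=S\cup\{x\}$. In the Exponential Mechanism step the noisy count $\tilde n$ is held fixed, so the quantile sets $\outerQuantile$, $\innerQuantile$ and all the denominators coincide for $S$ and $S'$; the only quantity that changes is the rank $\rankF{\wdw}{\cdot}$, and inserting one extra point can change it by at most one, i.e.\ $|\rankF{\wdw}{S}-\rankF{\wdw}{S'}|\le 1$. Hence it suffices to bound $|h(\rho)-h(\rho')|$ for real $\rho,\rho'\in[0,\tilde n]$ with $|\rho-\rho'|\le 1$, where $h$ denotes $\centFull{S}{\tilde n}{\wdw}$ with the rank $\rank$ replaced by the free parameter $\rho$.

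First I would rewrite $h$ using the tent function $g(\rho):=\tfrac{\tilde n}{2}-|\rho-\tfrac{\tilde n}{2}|$, which is piecewise linear with slopes $\pm 1$ and symmetric about $\tilde n/2$. On $\outerQuantile$ we have $h(\rho)=\frac{t}{\tilde n q}\,g(\rho)$, so $h$ is linear there with slope $\pm\frac{t}{\tilde n q}$; on $\innerQuantile$, using $\tfrac{\tilde n}{2}-\tilde n q=\tfrac{\tilde n(1-2q)}{2}$, we get $h(\rho)=\frac{t-2q}{1-2q}+\frac{2(1-t)}{\tilde n(1-2q)}\,g(\rho)$, with slope $\pm\frac{2(1-t)}{\tilde n(1-2q)}$. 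The next step is to check that $h$ has no jump at the transition ranks $\rho=\tilde n q$ and $\rho=\tilde n-\tilde n q$: substituting $g(\tilde n q)=\tilde n q$ into both expressions yields the common value $t$ (a one‑line simplification using $\frac{t-2q}{1-2q}+\frac{2q(1-t)}{1-2q}=t$), and the case $\rho=\tilde n-\tilde n q$ follows by the symmetry of $g$. Thus $h$ is continuous and piecewise linear on $[0,\tilde n]$, so $|h(\rho)-h(\rho')|\le (\max\text{ slope})\cdot|\rho-\rho'|$ even when $\rho,\rho'$ lie on opposite sides of a transition point.

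It then remains to compare the two candidate slopes. Since $t,q$ are valid parameters, $t\ge 2q>0$ (and $q<\tfrac12$, so $1-2q>0$), and — treating $1-t$ as the nonnegative quantity it is meant to be — the inequality $\frac{t}{q}\ge\frac{2(1-t)}{1-2q}$ is, after clearing denominators, equivalent to $t(1-2q)\ge 2q(1-t)$, i.e.\ to $t\ge 2q$, which holds. Hence the outer slope $\frac{t}{\tilde n q}$ dominates, so $|\centFull{S}{\tilde n}{\wdw}-\centFull{S'}{\tilde n}{\wdw}|\le\frac{t}{\tilde n q}$. Finally, since $\lambda\ge-\ln(2\deltaCount)/\epsCount>0$ we have $\tilde n-\lambda<\tilde n$, hence $\frac{t}{\tilde n q}\le\frac{t}{(\tilde n-\lambda)q}$; this conservative form is the one claimed, and it is the expression the algorithm can actually evaluate since it only has the shifted count $\tilde n-\lambda$ at hand (a high‑probability lower bound on $|S|$, which also keeps the denominator positive).

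I expect the main obstacle to be the boundary bookkeeping: because a rank step of one can move the evaluation point out of $\outerQuantile$ and into $\innerQuantile$, one cannot simply multiply a single piece's slope by one, and must instead combine the continuity at $\rho=\tilde n q$, $\rho=\tilde n-\tilde n q$ with the slope comparison — which is exactly where the validity condition $t\ge 2q$ enters, and presumably the reason it was imposed in \Cref{def:cent}. A secondary point to tidy up is the regime $\rank>\tilde n$, which is possible because $|S|\ge\tilde n-\lambda$ may exceed $\tilde n$; there one checks that the natural extension of the outer formula (or clamping it to its endpoint value $0$) still has slope at most $\frac{t}{\tilde n q}$, so the bound is unaffected.
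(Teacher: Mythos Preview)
Your proof is correct and follows essentially the same approach as the paper: both recognise that the centreness is a piecewise linear function of the rank, that inserting one point changes the rank by at most one, and that the validity condition $t\ge 2q$ makes the outer-quantile slope $\tfrac{t}{\tilde n q}$ the dominant one, which is then loosened to $\tfrac{t}{(\tilde n-\lambda)q}$. Your presentation is somewhat cleaner in that you explicitly verify continuity at the breakpoints $\rho=\tilde n q$ and $\rho=\tilde n-\tilde n q$ before invoking the Lipschitz bound, whereas the paper simply computes the outer-quantile case directly and relies on the ``maximum slope'' remark to cover transitions; you also flag the $\rank>\tilde n$ corner case that the paper's proof leaves implicit.
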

\begin{proof}[Proof sketch]
    The centreness function is a piece-wise linear function that distinguishes the cases of inner and outer quantiles. Thus, to determine the sensitivity it suffices to consider the part with the maximum slope. As $t,q$ are valid for $t\ge 2q$, we only have to consider the difference in the outer quantiles for neighbouring data sets, i.e., split candidate with rank $\rank \in [0,\tilde nq] \cup [\tilde n-\tilde n q, \tilde n]$. Then, we compute the maximum difference for a split candidate in the outer quantiles. As a consequence, $\Delta_{\cent_{t,q}} = \frac{t}{(\tilde{n}-\lambda)q}$ is an upper bound on the sensitivity for the case that  $\tilde n -\lambda \le |S|,|S'|$. The full proof of \cref{lem:sensitivity_centreness} is in \cref{Ssec:PrivacyProofCentreness}.
\end{proof}

Second, for the sensitivity of the subscore emptiness, one data point can change the emptiness of a split candidate for a sets $S$ with $\tilde{n}$ by at most $\frac{1}{\tilde{n}}$ if $x$ is in the interval of $\wdw$; otherwise by $0$.
\begin{lemma}[Sensitivity of the subscore emptiness]\label{lem:sensitivity_emptiness}
    Let $S, S'\in \datasets$ be neighbouring data sets, noisy count $\tilde n$, offset $\lambda \ge -\ln(2\deltaCount)/\epsCount$ with $\tilde n -\lambda \le |S|,|S'|$. 
    Then, the sensitivity of the subscore emptiness for the split interval size $\wdwSize$ defined as in \Cref{def:empt_frac} is $\Delta_\empt = \frac{1}{\tilde{n}-\lambda}$.
\end{lemma}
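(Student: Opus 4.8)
The plan is to bound directly the quantity in \Cref{def:sensitivity}, namely $\max_{S\sim S',\,\wdw\in W_S\cup W_{S'}} |\emptFull{S}{\tilde n}{\wdw} - \emptFull{S'}{\tilde n}{\wdw}|$, while treating the noisy count as a fixed constant. This is legitimate because $\tilde n$ is produced by a separate Laplace query whose privacy cost is charged independently via sequential composition; when the Exponential Mechanism is later invoked, $\tilde n$ (equivalently, the shifted value $\tilde n - \lambda$ that appears in the denominator of $\Delta_\score$ in \Cref{algo:dpm}) is already determined and plays the role of a parameter, not of a data-dependent function. So the only data-dependent ingredient of $\emptFull{\cdot}{\tilde n}{\wdw}$ is the interval count $|\wdw|$.

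First I would fix neighbouring data sets $S\sim S'$, say $S' = S\cup\{x\}$, a dimension $i$, and a split candidate $\wdw\in W_S\cup W_{S'}$. Writing $|\wdw|_T = |\{y\in T^{(i)}\mid \wdw - 0.5\wdwSize \le y \le \wdw + 0.5\wdwSize\}|$ for the number of points of $T$ in the split interval around $\wdw$, \Cref{def:empt} gives
\[
  \emptFull{S}{\tilde n}{\wdw} - \emptFull{S'}{\tilde n}{\wdw} = \frac{|\wdw|_{S'} - |\wdw|_S}{\tilde n}\text{.}
\]
Then I would invoke the one-line counting fact that is the crux of the lemma: since $S'$ and $S$ differ only in the single point $x$, the interval count changes by exactly $1$ if $x^{(i)}\in[\wdw - 0.5\wdwSize,\,\wdw + 0.5\wdwSize]$ and by $0$ otherwise; in either case $\big||\wdw|_{S'} - |\wdw|_S\big|\le 1$. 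Hence $|\emptFull{S}{\tilde n}{\wdw} - \emptFull{S'}{\tilde n}{\wdw}|\le 1/\tilde n$.

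Finally, to match the form stated in the lemma (and the sensitivity $\Delta_\score$ used by the Exponential Mechanism in \Cref{algo:dpm}), I would pass from $1/\tilde n$ to $1/(\tilde n - \lambda)$: the offset satisfies $\lambda\ge -\ln(2\deltaCount)/\epsCount > 0$ whenever $\deltaCount < 1/2$, so $0 < \tilde n - \lambda \le \tilde n$ (the hypothesis $\tilde n - \lambda \le |S|,|S'|$ guaranteeing the denominator stays positive), whence $1/\tilde n\le 1/(\tilde n - \lambda)$. Taking the maximum over all neighbouring pairs and split candidates then yields $\Delta_\empt = 1/(\tilde n - \lambda)$ as a valid bound in the sense of \Cref{def:sensitivity}.

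There is no genuinely hard step here; the content is the single counting observation in the second paragraph. The only point requiring care is the bookkeeping around $\tilde n$: one must be explicit that it is held fixed for the sensitivity analysis of the emptiness (its own privacy cost being accounted for elsewhere), and that replacing $\tilde n$ by $\tilde n - \lambda$ merely loosens the bound — harmless for privacy, and exactly what makes $\Delta_\empt$ and $\Delta_{\cent_{t,q}}$ combine into the $\Delta_\score = (t/q + \alpha)/(\tilde n - \lambda_{\currDepthRec})$ used in the \textsc{Split} procedure.
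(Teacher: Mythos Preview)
Your proposal is correct and follows essentially the same approach as the paper: compute the difference in emptiness as $(|\wdw|_{S'}-|\wdw|_S)/\tilde n$, observe via the two cases (added point inside or outside the interval) that this is at most $1/\tilde n$, and then relax to $1/(\tilde n-\lambda)$ using $\lambda>0$. Your additional remarks about treating $\tilde n$ as a fixed parameter are sound and make explicit a point the paper leaves implicit.
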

\begin{proof}[Proof sketch]
    We consider two neighbouring data sets, and we need to distinguish the cases that the additional element is in the split interval of $\wdw$ for $S'$ and that it is not in the split interval of $\wdw$. For both cases, we consider the worst-case difference for neighboring data sets and with $\tilde n -\lambda \le |S|,|S'|$, get $\Delta_\empt = \frac{1}{\tilde{n}-\lambda}$ as an upper bound on the sensitivity. The full proof of \cref{lem:sensitivity_emptiness} is in \cref{Ssec:PrivacyProofEmptiness}.
\end{proof}
Recall that we introduced $\alpha$ as hyper-parameter in \Cref{def:score}. The sensitivity of the emptiness subscore $\empt$ is scaled by $\alpha$.

\begin{lemma}[Sensitivity of the scoring function]\label{lem:sensitivity_scoring}
   Let $S, S'\subseteq D \in \datasets$ be neighbouring data sets, noisy count $\tilde n$ and offset $\lambda \ge -\ln(2\deltaCount)/\epsCount$ with $\tilde n -\lambda \le |S|,|S'|$. 
   With the sensitivity for the subscores centreness (with valid parameters $t, q$ defined as in \cref{def:cent}) and emptiness, and the weight $\alpha$, the sensitivity $\Delta_\score$ for the score of a split candidate in $S$ is given by $\textstyle \Delta_\score \le \frac{t/q + \alpha }{\tilde{n}-\lambda}$.
\end{lemma}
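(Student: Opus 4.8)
The plan is to reduce the claim to the two preceding sensitivity lemmas via the triangle inequality applied pointwise over split candidates. Recall from \cref{def:score} that for every split candidate $\wdw$ we have $\scoreFull{S}{\tilde n}{\wdw} = \centFull{S}{\tilde n}{\wdw} + \alpha\,\emptFull{S}{\tilde n}{\wdw}$, and that the set of split candidates $W$ is determined solely by the range $R$, the interval size $\wdwSize$ and $\numSplits$, which are fixed before the recursion and do not depend on the current subset. Hence $W_S = W_{S'}$ for neighbouring $S, S'$, and the sensitivity in the sense of \cref{def:expMech} is $\Delta_\score = \max_{S \sim S',\, \wdw \in W} |\scoreFull{S}{\tilde n}{\wdw} - \scoreFull{S'}{\tilde n}{\wdw}|$ with the \emph{same} noisy count $\tilde n$ and offset $\lambda$ used for both $S$ and $S'$.

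First I would fix an arbitrary pair of neighbouring data sets $S, S'$ satisfying $\tilde n - \lambda \le |S|, |S'|$ and an arbitrary split candidate $\wdw \in W$, and write
\[
\bigl|\scoreFull{S}{\tilde n}{\wdw} - \scoreFull{S'}{\tilde n}{\wdw}\bigr|
\le \bigl|\centFull{S}{\tilde n}{\wdw} - \centFull{S'}{\tilde n}{\wdw}\bigr|
+ \alpha\,\bigl|\emptFull{S}{\tilde n}{\wdw} - \emptFull{S'}{\tilde n}{\wdw}\bigr|.
\]
Then I would invoke \cref{lem:sensitivity_centreness} to bound the first term by $\Delta_{\cent_{t,q}} = \frac{t}{(\tilde n - \lambda)q}$ and \cref{lem:sensitivity_emptiness} to bound the second term by $\alpha\,\Delta_\empt = \frac{\alpha}{\tilde n - \lambda}$; both lemmas apply precisely because the hypothesis $\tilde n - \lambda \le |S|,|S'|$ is assumed. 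Adding the two bounds gives $\frac{t}{(\tilde n-\lambda)q} + \frac{\alpha}{\tilde n-\lambda} = \frac{t/q + \alpha}{\tilde n - \lambda}$, and since the right-hand side is independent of $\wdw$, $S$ and $S'$, taking the maximum over all neighbouring pairs and all split candidates yields $\Delta_\score \le \frac{t/q + \alpha}{\tilde n - \lambda}$, as claimed.

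There is essentially no deep obstacle here; the work has already been done in \cref{lem:sensitivity_centreness} and \cref{lem:sensitivity_emptiness}. The only points requiring care are bookkeeping ones: (i) making explicit that the candidate domain $W$ is data-independent so that the pointwise triangle inequality can be upgraded to a bound on the max, (ii) ensuring the same $\tilde n$ and $\lambda$ are used on both sides so the two sensitivity lemmas can be applied simultaneously with a shared denominator, and (iii) carrying the validity condition $t \ge 2q > 0$ through, since \cref{lem:sensitivity_centreness} needs it. I would also remark that this is an upper bound (hence the ``$\le$''), which is all that is needed for the Exponential Mechanism's privacy guarantee, matching the $\Delta_\score$ used in \Call{Split}{} in \cref{algo:dpm}.
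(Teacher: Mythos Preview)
Your proposal is correct and follows essentially the same approach as the paper: apply the triangle inequality to the decomposition $\score = \cent + \alpha\empt$, invoke \cref{lem:sensitivity_centreness} and \cref{lem:sensitivity_emptiness}, and add the resulting bounds. Your extra remarks about the data-independence of $W$ and the shared $\tilde n,\lambda$ are careful bookkeeping that the paper leaves implicit.
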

\begin{proof}[Proof sketch]
    The sensitivity of the scoring function is obtained by adding the sensitivity of the subscores emptiness and centreness. As the emptiness is scaled by $\alpha$, so is the sensitivity of the emptiness subscore. Thus, $\textstyle \Delta_\score \le \frac{t/q + \alpha }{\tilde{n}-\lambda}$ is an upper bound on the sensitivity for the case that  $\tilde n -\lambda \le |S|,|S'|$. The full proof of \cref{lem:sensitivity_scoring} is in \cref{ssec:PrivacyProofScore}.
\end{proof}

\subsubsection{Exponential Mechanism}
To select a split that is good regarding our scoring function, the Exponential Mechanism is adapted to the scoring function and the corresponding sensitivity. Then, if the noisy count does not overestimate the number of elements, the privacy guarantees of the Exponential Mechanism hold.

\begin{lemma}[Exponential Mechanism{\cite[Theorem 3.10]{DwoRo14}}]
\label{lem:ExpMech}
For a scoring function $\score$ with sensitivity $\Delta_{\score} \in \mathbb{R}$ and $\epsExpMech > 0$ the Exponential Mechanism is $(\epsExpMech,0)$-DP.
\end{lemma}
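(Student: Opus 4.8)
The plan is to reproduce the standard privacy argument for the Exponential Mechanism, \cite[Theorem 3.10]{DwoRo14}, specialised to \DPM's scoring function. Because the claimed guarantee is $(\epsExpMech,0)$-DP there is no failure event to carve out, so the statement reduces to a single worst-case ratio bound. The only external facts I would use are: (i) the output domain $W_S=\operatorname{dom}(\score(S,\cdot))$ is the same for neighbouring inputs — this holds in \DPM because the set of split-candidate indices is fixed by the public quantities $d$ and $\numSplits$, hence $W_S=W_{S'}$ whenever $S\sim S'$; and (ii) $\Delta_{\score}$ is a true upper bound on $\max_{\wdw\in W_S}\lvert \score(S,\wdw)-\score(S',\wdw)\rvert$, which for \DPM is precisely \cref{lem:sensitivity_scoring} under its hypothesis $\tilde n-\lambda\le\lvert S\rvert,\lvert S'\rvert$.

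The core is a direct computation. Fix neighbouring $S\sim S'$ and an arbitrary output $\wdw\in W_S$, and expand the ratio of selection probabilities from \cref{def:expMech} (using the $\exp(\epsExpMech\,\score/2\Delta_{\score})$ normalisation of the cited theorem):
\[
\frac{\mathrm{pmf}_{M_E(S,\score,\epsExpMech)}(\wdw)}{\mathrm{pmf}_{M_E(S',\score,\epsExpMech)}(\wdw)}
=
\frac{\exp\!\bigl(\epsExpMech\,\score(S,\wdw)/2\Delta_{\score}\bigr)}{\exp\!\bigl(\epsExpMech\,\score(S',\wdw)/2\Delta_{\score}\bigr)}
\cdot
\frac{\sum_{\wdw'\in W_{S'}}\exp\!\bigl(\epsExpMech\,\score(S',\wdw')/2\Delta_{\score}\bigr)}{\sum_{\wdw'\in W_{S}}\exp\!\bigl(\epsExpMech\,\score(S,\wdw')/2\Delta_{\score}\bigr)}\,.
\]
The first quotient equals $\exp\!\bigl(\epsExpMech(\score(S,\wdw)-\score(S',\wdw))/2\Delta_{\score}\bigr)\le\exp(\epsExpMech/2)$ by the sensitivity bound. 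For the second quotient I would bound each summand of the numerator using $\score(S',\wdw')\le\score(S,\wdw')+\Delta_{\score}$, factor $\exp(\epsExpMech/2)$ out of the sum, and cancel against the denominator, obtaining again a bound of $\exp(\epsExpMech/2)$. Multiplying gives $\mathrm{pmf}_{M_E(S,\score,\epsExpMech)}(\wdw)\le\exp(\epsExpMech)\,\mathrm{pmf}_{M_E(S',\score,\epsExpMech)}(\wdw)$ for every $\wdw$; since the output space is a finite set of split-candidate indices, summing this over $\wdw\in O$ yields $\Pr[M_E(S,\score,\epsExpMech)\in O]\le\exp(\epsExpMech)\Pr[M_E(S',\score,\epsExpMech)\in O]$ for all $O$, with no additive slack, i.e.\ $\delta=0$.

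I do not expect a genuine obstacle: the lemma restates a textbook result and the computation above is routine. The one point requiring care is bookkeeping around the factor of two: the cited theorem weights outputs by $\exp(\epsExpMech\,u/2\Delta u)$, and it is exactly this $\tfrac12$ that makes each of the two quotients contribute $\exp(\epsExpMech/2)$ rather than $\exp(\epsExpMech)$, so one must read \cref{def:expMech} with the same normalisation (equivalently, with $\Delta_{\score}$ standing for twice the raw sensitivity) for the downstream accounting of \DPM to go through. Two smaller remarks: DP needs the ratio bound for the ordered pair $(S',S)$ as well, but since the neighbour relation and the inequality $\lvert\score(S,\wdw)-\score(S',\wdw)\rvert\le\Delta_{\score}$ are both symmetric, the single computation covers both directions; and the lemma is implicitly conditioned on $\Delta_{\score}$ being a valid sensitivity, i.e.\ on $\tilde n-\lambda\le\lvert S\rvert,\lvert S'\rvert$ — the complementary event has probability at most $\deltaCount$ (\cref{eq:offsetEstimation}) and is charged to the $\delta$-term during composition rather than inside this lemma.
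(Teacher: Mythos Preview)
Your proof is correct and is exactly the standard argument from \cite[Theorem 3.10]{DwoRo14}. The paper itself does not prove this lemma at all: it is stated purely as a citation of that textbook result, with no accompanying argument. So there is nothing to compare against beyond noting that you have supplied the proof the paper defers to the literature. Your flag about the factor of two is well placed: \cref{def:expMech} as written in the paper omits the $1/2$ in the exponent, so taken literally it would only yield $(2\epsExpMech,0)$-DP; your reading (that the definition should carry the $1/2$, as in the cited theorem) is the one that makes the downstream accounting consistent.
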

\begin{corollary}
\label{cor:expMech_Privacy}
    For the scoring function $\score$ as defined in \cref{def:score} and $t, q$ are valid parameters (defined as in \cref{def:cent}) with sensitivity $\Delta_\score \le \frac{t/q + \alpha }{\tilde{n} - \lambda}$ ~(\cref{lem:sensitivity_scoring}), $\epsExpMech>0$, a set $S$ and the noisy count $\tilde n$ with $\tilde n -\lambda \le |S|$, selecting a split with the Exponential Mechanism is $(\epsExpMech,0)$-DP.
\end{corollary}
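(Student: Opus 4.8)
The plan is to recognise the split selection as a single invocation of the Exponential Mechanism (\cref{def:expMech}) and then to verify that the sensitivity value it is instantiated with is a genuine upper bound, so that \cref{lem:ExpMech} applies directly. First I would argue that, within the procedure \textsc{Split} of \cref{algo:dpm}, the only line that reads the subset $S$ is the call $i^\ast = M_E(S,\score,\epsExpMech)$: the constant $\Delta_\score = (t/q+\alpha)/(\tilde n-\lambda)$ computed just before it is a deterministic function of the already-released noisy count $\tilde n$ and of public parameters ($t$, $q$, $\alpha$, $\lambda$), while the subsequent derivation of the dimension $d^\ast$, the position $s^\ast$ and the partition $(S_1,S_2)$ depends on $S$ only through the released index $i^\ast$ and through the public quantities $d$, $\numSplits$, $\wdwSize$. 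Hence everything except the call to $M_E$ is post-processing, and it suffices to bound the privacy of that call. I would also note that the candidate set $W$ over which $M_E$ selects is the same for $S$ and any neighbour $S'$, since it is determined by the public range $R$ and the already-released interval size $\wdwSize$.

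Next I would check the hypothesis of \cref{lem:ExpMech}, namely that the number $(t/q+\alpha)/(\tilde n-\lambda)$ bounds $\max_{S\sim S',\,\wdw\in W}|\scoreFull{S}{\tilde n}{\wdw}-\scoreFull{S'}{\tilde n}{\wdw}|$. This is exactly the content of \cref{lem:sensitivity_scoring}, whose only side condition is $\tilde n-\lambda\le|S|,|S'|$. The corollary assumes $\tilde n-\lambda\le|S|$; for a neighbour of the form $S'=S\cup\{x\}$ we get $|S'|=|S|+1\ge\tilde n-\lambda$ automatically, and for the opposite orientation (where $S$ is the larger of the two neighbouring sets and $S'$ is obtained by deleting a point) the inequality for $S'$ is precisely what the choice of offset $\lambda$ in \cref{eq:offsetEstimation} is calibrated to guarantee with the stated probability. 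With the sensitivity bound validated, \cref{lem:ExpMech} gives that $M_E(S,\score,\epsExpMech)$ is $(\epsExpMech,0)$-DP, and composing this with the post-processing observation of the first paragraph yields the claim.

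The step that needs the most care is the bookkeeping around the side condition $\tilde n-\lambda\le|S|,|S'|$. It has to be made explicit that the statement proved here is the \emph{conditional} pure-DP guarantee on the event that the noisy count does not overestimate the set size, and that the complementary event, which has probability at most $\deltaCount$ by the shifted noisy count corollary, is not absorbed here but instead becomes part of the overall $\delta$ in the later composition argument. Everything else is a direct chain: post-processing $\to$ \cref{lem:sensitivity_scoring} $\to$ \cref{lem:ExpMech}.
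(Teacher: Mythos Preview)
Your argument is correct and matches the paper's treatment, which states the corollary without a separate proof as an immediate consequence of \cref{lem:ExpMech} once \cref{lem:sensitivity_scoring} supplies the sensitivity bound under the hypothesis $\tilde n-\lambda\le|S|$. Your careful handling of the neighbour side condition as a conditional guarantee, with the complementary event folded into $\deltaCount$ at composition time, is exactly how the paper subsequently packages it in \cref{lem:dpmondrian_one_split}.
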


We combine the guarantees from \cref{eq:noisycount} and \cref{cor:expMech_Privacy} to get the guarantees for one recursion step of \DPM (Line 12-17).

\subsection{Algorithm}
The privacy proof of \DPM directly follows from the DP guarantees of the subroutines for the noisy count (\cref{eq:noisycount}) as well as the selection process (\cref{cor:expMech_Privacy}) and the composition theorems. 
For every recursion level $\gamma$, all subsets are disjoint, thus every data point is in exactly one subset and has no impact on other subsets of the same recursion level. Therefore, the privacy budget per recursion level $\currDepthRec$ can be used for every subset on this depth without leaking additional information, which is implied by the parallel composition theorem. From the sequential composition theorem follows that any sequence of differentially private mechanisms provides DP.

\begin{lemma}[DP of one recursion depth \DPM]
\label{lem:dpmondrian_one_split} 
Given $\epsCount, \epsExpMech$, $\lambda \in \R_{>0}$ and $S\subseteq D$. \texttt{BuildClustering} preserves $(\epsCount + \epsExpMech, \deltaCount)$-DP with $\deltaCount$ as in \cref{eq:noisycount}.
\end{lemma}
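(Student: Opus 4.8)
The plan is to strip a single invocation of \texttt{BuildClustering} at a fixed depth $\currDepthRec$ — excluding the two recursive sub-calls, whose privacy is accounted for at depth $\currDepthRec+1$ — down to the operations that actually read the (sub)data set, and then compose their $(\varepsilon,\delta)$ guarantees. Reading \cref{algo:dpm}, the data-dependent primitives are of exactly two kinds: a \emph{shifted noisy count} (a draw from $\Lap(1/\epsCount)$ added to a cardinality and shifted by the offset $\lambda$ of \cref{eq:offsetEstimation}, sensitivity $1$), as at Lines~15--16, and an \emph{Exponential Mechanism} $M_E(S,\score,\epsExpMech)$, the sole randomised data-dependent step inside \textsc{Split}. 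Everything else is post-processing: the deterministic derivation of $d^*,s^*$ and of the partition $S_1,S_2$ from the released index $i^*$, the comparison $\tilde n_{S_i}<\minNumElem$, and the appends to \emph{clusters}/\emph{weights} (whose contents are only read again by the DP averaging of Line~10). So the lemma reduces to composing the noisy-count work of one layer with one Exponential Mechanism.

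First I would treat the noisy counts: by \cref{cor:Laplace-DP} each is $(\epsCount,0)$-DP, and because the two counts at Lines~15--16 act on the \emph{disjoint} subsets $S_1,S_2$, parallel composition keeps their joint cost at $(\epsCount,0)$-DP. The $\deltaCount$ term enters only through the \emph{use} of the count that calibrates $\Delta_\score=\frac{t/q+\alpha}{\tilde n-\lambda}$: by \cref{lem:sensitivity_scoring} this is a valid sensitivity bound for $\score$ exactly on the event $E:=\{\tilde n-\lambda\le|S|\}$, and the choice $\lambda\ge-\ln(2\deltaCount)/\epsCount$ of \cref{eq:offsetEstimation} is precisely what forces $\Pr[\bar{E}]\le\deltaCount$. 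Conditioned on $E$, \cref{cor:expMech_Privacy} gives that the Exponential Mechanism step is $(\epsExpMech,0)$-DP; off $E$ one pays only $\Pr[\bar{E}]\le\deltaCount$. Absorbing that failure probability into the $\delta$ slot and applying sequential composition to the count part and the Exponential Mechanism part then yields $(\epsCount+\epsExpMech,\deltaCount)$-DP for the invocation, and post-processing closure finishes it.

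The delicate point — the one I would actually write out rather than wave at — is the $\delta$-bookkeeping around the Exponential Mechanism's dependence on the \emph{random} quantity $\tilde n$, which is itself a function of the sensitive input, so ``conditioning on $E$'' needs justification. The clean route is to treat ``draw $\tilde n$; then run $M_E$ with sensitivity calibrated to $\tilde n-\lambda$'' as a single composite mechanism and to verify the $(\varepsilon,\delta)$ inequality directly: for an arbitrary output set, split the probability according to whether $E$ holds; on $E$ the privacy-loss bound of $M_E$ holds pointwise, because its declared sensitivity is genuinely valid there (\cref{lem:sensitivity_scoring}), contributing the $\exp(\epsExpMech)$ factor, while the mass off $E$ is at most $\Pr[\bar{E}]\le\deltaCount$, which is exactly the additive $\delta$. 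A second thing to get right is that $\deltaCount$ is charged \emph{once} — for the single underestimation promise on $\tilde n$ — not once per subscore (centreness and emptiness share the same $\tilde n$) and not once per sibling subset (that is subsumed by parallel composition when the statement is later lifted from a single $S\subseteq D$ to an entire recursion layer). Beyond these two points the argument is routine plumbing through \cref{cor:Laplace-DP}, \cref{cor:expMech_Privacy}, and the sequential and parallel composition theorems.
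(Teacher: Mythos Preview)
Your proposal is correct and follows essentially the same approach as the paper's proof: identify the noisy-count step and the Exponential-Mechanism step, invoke \cref{cor:expMech_Privacy} for the latter, apply sequential composition to combine them, and use parallel composition for the disjoint subsets $S_1,S_2$. The paper's own proof is three sentences long and simply cites \cref{cor:expMech_Privacy} plus the two composition theorems; your write-up is considerably more careful about the point the paper glosses over---namely that the event $E=\{\tilde n-\lambda\le|S|\}$ on which the Exponential Mechanism's sensitivity bound is valid is itself data-dependent, and why splitting on $E$ and absorbing $\Pr[\bar E]\le\deltaCount$ into the additive $\delta$ is nonetheless legitimate.
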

\begin{proof}
    By sequential composition, it follows with \cref{cor:expMech_Privacy} that selecting one split candidate preserves $(\epsCount + \epsExpMech, \deltaCount)$-DP with $\deltaCount = 1 - cdf_{\Lap(1/\epsCount)}(|S|+\lambda)$. As $S$ is subdivided into disjoint subsets $S_l,S_r$, we can apply parallel composition for the recursive call of \textsc{BuildClustering}. For one recursion level $\gamma$, \DPM preserves $(\epsCount + \epsExpMech, \deltaCount)$-DP.
\end{proof}
As we have at most $\maxdepthRec$ recursion levels, the privacy guarantees for obtaining the clusters $clusters$ and the weights immediately follow from the sequential composition theorem.
\begin{corollary}
\label{cor:clusterDP}
    For a fixed maximum recursion depth $\maxdepthRec\in \mathbb{N}_{>0}$, \cref{algo:dpm} preserves $((\maxdepthRec+1) \cdot \epsCount + \maxdepthRec \cdot \epsExpMech), (\maxdepthRec+1) \cdot \deltaCount)$-DP.
\end{corollary}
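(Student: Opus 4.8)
The plan is to obtain the bound by composing the per-recursion-level guarantee of \Cref{lem:dpmondrian_one_split} over the depth of the recursion tree, together with the cost of the single noisy count of $|D|$ taken in Line~6, using parallel composition across the branching at each depth and post-processing for all deterministic steps. First I would enumerate the data-dependent operations of \Cref{algo:dpm}: the shifted noisy count $\tilde n = |D| + \Lap(\epsCounti{0})$ in Line~6; the interval-size estimation in Line~7; inside \textsc{BuildClustering}, at each depth $\currDepthRec \in \{0,\dots,\maxdepthRec-1\}$, the Exponential-Mechanism selection in \textsc{Split} and the two shifted noisy counts $\tilde n_{S_1},\tilde n_{S_2}$; and the averaging in Line~10. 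Deriving $\numSplits$, $d^\ast$, $s^\ast$, forming $S_1,S_2$, and assembling \emph{clusters} and \emph{weights} are deterministic functions of already-released quantities and hence free by post-processing. The interval-size estimation is $(\epsWdwSize,0)$-DP by \Cref{thm:dp-wdwsize} (it touches the data only through \textsc{dp-percentile} and the already-released $\tilde n_D$) and the Line-10 averaging is governed by the Gaussian mechanism on the disjoint clusters; these add their own terms in the overall accounting, so the statement at hand isolates the $\epsCount$/$\epsExpMech$ cost of the recursive split construction.

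Next I would treat a single recursion depth. Fix $\currDepthRec$. Since every applied split sends each point to exactly one of $S_1,S_2$, the subsets on which \textsc{BuildClustering} is invoked at depth $\currDepthRec$ are pairwise disjoint, so each data point influences the work on exactly one of them; by parallel composition, jointly releasing the outputs of all invocations at depth $\currDepthRec$ is as private as a single invocation. By \Cref{lem:dpmondrian_one_split}, one such invocation — an Exponential-Mechanism selection with budget $\epsExpMech$ plus the accompanying shifted noisy counts with budget $\epsCount$ — preserves $(\epsCount+\epsExpMech,\deltaCount)$-DP, where $\deltaCount$ (as in \cref{eq:noisycount}) is the probability that the shifted count overestimates the true subset size and thereby feeds the Exponential Mechanism too small a sensitivity; conditioned on that event not occurring, \Cref{cor:expMech_Privacy} gives pure DP, so the Exponential Mechanism contributes no $\delta$ of its own.

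Then I would compose over depth. There are at most $\maxdepthRec$ depths at which work is done, each $(\epsCount+\epsExpMech,\deltaCount)$-DP, and in addition the count of $|D|$ in Line~6 is a shifted noisy count, hence $(\epsCount,\deltaCount)$-DP by \cref{eq:noisycount} and the accompanying shifted-count guarantee. Sequential composition of these $\maxdepthRec+1$ mechanisms yields $\bigl((\maxdepthRec+1)\epsCount+\maxdepthRec\epsExpMech,\ (\maxdepthRec+1)\deltaCount\bigr)$-DP: the $\epsCount$-terms (one per noisy-count step, including Line~6) sum to $(\maxdepthRec+1)\epsCount$, the $\epsExpMech$-terms (one per depth $0,\dots,\maxdepthRec-1$) sum to $\maxdepthRec\epsExpMech$, and only the $\maxdepthRec+1$ shifted counts contribute to $\delta$. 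A final appeal to post-processing covers the assembly of the returned $C$ and \emph{weights}.

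The step I expect to require the most care is the bookkeeping that turns \emph{one recursion level} into these global constants. One must pin down which noisy count underpins the Exponential-Mechanism sensitivity at depth $\currDepthRec$ — the count produced at depth $\currDepthRec-1$, or, for $\currDepthRec=0$, the Line-6 count reused by post-processing — so that every shifted count is charged exactly once; this is what produces the extra $+1$ in $(\maxdepthRec+1)\epsCount$ and $(\maxdepthRec+1)\deltaCount$ rather than a bare factor $\maxdepthRec$. One must also be explicit that parallel composition is applied across the whole frontier of subsets at each fixed depth, which together partition a subset of $D$, rather than along root-to-leaf paths; this is what keeps the per-depth cost at $\epsCount+\epsExpMech$ independently of how many splits have already been performed. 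The rest is a routine invocation of sequential composition, parallel composition, and post-processing.
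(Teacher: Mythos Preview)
Your proposal is correct and takes essentially the same approach as the paper: apply \Cref{lem:dpmondrian_one_split} at each of the $\maxdepthRec$ recursion depths (using parallel composition across the disjoint subsets on a level), add the single shifted noisy count of $|D|$ from Line~6, and combine everything by sequential composition to obtain the stated $(\maxdepthRec{+}1)$ and $\maxdepthRec$ multipliers. The paper's own proof is the one-line remark that the bound ``immediately follows from the sequential composition theorem'' given \Cref{lem:dpmondrian_one_split}, so your write-up is a more careful unpacking of exactly that bookkeeping.
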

Finally, we compute a noisy cluster centre for each subset in \emph{clusters} using privacy-preserving averaging (Line 10) and at the beginning \cref{algo:wdwsize} is used to compute the interval size.
\begin{theorem}[Main Privacy Theorem]\label{thm:privacy_dpm2}
    Given $\epsCount, \epsExpMech, \epsAverage,\epsWdwSize \in \R_{>0}, \deltaCount,\deltaAverage \in [0,1]$, and $ n,\lambda \in \mathbb{N}_+$ and a noisy average algorithm \textrm{DP-AVG} that preserves $(\epsAverage,\deltaAverage)$-DP, then \DPM preserves $((\maxdepthRec+1) \cdot \epsCount + \maxdepthRec \cdot \epsExpMech) + \epsAverage + \epsWdwSize, (\maxdepthRec+1) \cdot \deltaCount +\delta_{avg})$-DP.
\end{theorem}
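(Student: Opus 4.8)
The plan is to decompose \DPM into its data-dependent subroutines and compose their individual DP guarantees, all of which were established in the preceding lemmas and corollaries, using sequential composition, parallel composition, and post-processing. The four places where \cref{algo:dpm} touches the sensitive data set are: the initial noisy count in Line~6, the interval-size estimation in Line~7, the recursive split selection and subset counting inside \textsc{BuildClustering} (Line~9), and the privacy-preserving averaging in Line~10. Everything else -- computing $\numSplits$, deriving $d^*$ and $s^*$ from $i^*$, the budget and offset allocations in Lines~3--5, the final division of the noisy cluster sums by the weights -- is a deterministic function of either public inputs or of already-released DP outputs, hence free by post-processing.

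First I would argue that the noisy count of Line~6 together with the entire recursion of Line~9 is covered by \cref{cor:clusterDP}: by \cref{lem:dpmondrian_one_split} each of the at most $\maxdepthRec$ recursion levels of \textsc{BuildClustering} preserves $(\epsCount+\epsExpMech,\deltaCount)$-DP -- the $\deltaCount$ absorbing the event that the shifted noisy count overestimates $|S|$ and thereby underestimates $\Delta_\score=\frac{t/q+\alpha}{\tilde n-\lambda}$ -- with the splits on the disjoint subsets of one level composing in parallel and the $\maxdepthRec$ levels composing sequentially; adding the single count of Line~6 yields the $(\maxdepthRec+1)\epsCount+\maxdepthRec\epsExpMech$ budget and $(\maxdepthRec+1)\deltaCount$ of \cref{cor:clusterDP}. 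Next, the interval-size estimation of Line~7 contributes only $(\epsWdwSize,0)$-DP by \cref{thm:dp-wdwsize}: \textsc{IntervalSizeEst} receives $\tilde n_D$ as an argument rather than recomputing it, so it spends no count budget, and its only sensitive operation is the one call to \textrm{DP-PERCENTILE} on neighbouring-point distances, which by \cref{lem:PrivacyPercentile} has sensitivity $2$ and is run with budget $\epsWdwSize$.

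Then the averaging of Line~10 applies \textrm{DP-AVG}, assumed $(\epsAverage,\deltaAverage)$-DP, to each cluster $C_i\in\emph{clusters}$; since the clusters output by \textsc{BuildClustering} partition $D$, parallel composition gives that the whole family of averages is $(\epsAverage,\deltaAverage)$-DP, and the reuse of the already-released weights $\tilde n$ is post-processing. Composing the four contributions -- Lines~6 and 9 via \cref{cor:clusterDP}, plus Line~7, plus Line~10 -- sequentially, the privacy budgets add to $(\maxdepthRec+1)\epsCount+\maxdepthRec\epsExpMech+\epsWdwSize+\epsAverage$ and the $\delta$'s to $(\maxdepthRec+1)\deltaCount+\deltaAverage$, which is exactly the claim, and the released pair $(C,\emph{weights})$ exposes nothing beyond these DP quantities.

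The main obstacle is bookkeeping rather than a new technical idea: one must ensure the single noisy count $\tilde n$ of Line~6 is charged exactly once even though it is consumed repeatedly -- inside \textsc{IntervalSizeEst}, inside the sensitivity $\Delta_\score$ used by the Exponential Mechanism, in the stopping test $\tilde n_{S_i}<\minNumElem$, and again when dividing the cluster sums in \textsc{DPAvg} -- and that the per-level counts $\tilde n_{S_1},\tilde n_{S_2}$ are treated under parallel, not sequential, composition within a level. The subtle point to get right is that the $\delta$ of each shifted count is precisely the probability of the event $\tilde n-\lambda>|S|$, so that the Exponential Mechanism's stated sensitivity can fail only on that event, matching the accounting already baked into \cref{lem:dpmondrian_one_split}. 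Once these points are pinned down, the theorem follows by a direct application of the composition theorems recalled in \cref{sec:prelims}.
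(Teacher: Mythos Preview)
Your proposal is correct and follows essentially the same approach as the paper's proof, which simply invokes \cref{cor:clusterDP}, the $(\epsAverage,\deltaAverage)$-DP assumption on \textrm{DP-AVG}, \cref{thm:dp-wdwsize}, and sequential composition. You add more explicit bookkeeping---the parallel composition across the disjoint clusters in Line~10 and the post-processing argument for the reuse of $\tilde n$---which the paper leaves implicit, but the structure and the lemmas invoked are identical.
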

\begin{proof}
    The proof immediately follows from \Cref{cor:clusterDP}, the precondition that the noisy average algorithm is $(\epsAverage,\deltaAverage)$-DP, \cref{thm:dp-wdwsize}, and the sequential composition theorem.
\end{proof}

\section{Utility of \DPM}
\label{sec:utility_proof_DPM2}
This section characterises the cases in which \DPM achieves meaningful guarantees. We show that for reasonable data sets \DPM separates clusters. 
In contrast to prior clustering algorithms (such as KMeans), \DPM does not search for dense areas but rather tries to separate clusters as sparse areas.

In the following we show that by applying the Exponential Mechanism, \DPM selects a good split whp. and that if there is no such split, the algorithm halts.
First, we show that the noisy count is close to the exact count whp. and discuss the balance between the subscores emptiness and centreness. Second, we show that whp. \DPM finds a split with high emptiness and centreness if there are any in the centre. Therefore, we first apply a generalised version of the utility theorem of the Exponential Mechanism.
We subsequently show that if there is a central split $\wdw$ that has a high emptiness, whp. some central split $\wdw'$ with a high emptiness is chosen.
Third, we prove that, if there is no central high-emptiness split and the remaining subset is sufficiently spread, \DPM halts whp.
The utility bounds in this section are parametric in $\tilde n$, which means that if by chance $\tilde n$ has been chosen that is far below $|S|$, utility bounds are worse, but all bounds from \Cref{ssec:dpm_finds_good_splits} and \Cref{ssec:dpm_terminates} hold for any $\tilde n$.
\subsection{Noisy Count Often Close to Real Count}
Recall that the scoring function uses a noisy version $\tilde{n}$ of the real count $|S|$ (of a subset $S$), from which an offset $\lambda$ is subtracted. The shifted noisy count $|S| + \Lap(1/\epsCount) - \lambda$ is close to $|S|$.

\begin{lemma}[Closeness of Noisy Count]
\label{lem:utilityNoisyCount}
    Let $S\subseteq D$ be the considered set and $|S|$ the number of elements in this set and the shifted noisy count $|S| + \Lap(1/\epsCount) - \lambda$. Then, for any $\kappa \ge 0$ we have 
   
    \[\Pr[\left||S| + \Lap(1/\epsCount) - \lambda - |S|\right| > \kappa] \le 1/2\exp(-\kappa \cdot \epsCount) / \delta  \text{.}\]
\end{lemma}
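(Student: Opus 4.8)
The plan is to first notice that the quantity inside the absolute value simplifies, since $|S| + \Lap(1/\epsCount) - \lambda - |S| = \Lap(1/\epsCount) - \lambda$, so the claim is purely a tail statement about a single Laplace variable shifted by the deterministic offset $\lambda = -\ln(2\deltaCount)/\epsCount$ that \DPM uses (cf.\ \Cref{eq:offsetEstimation} and \Cref{algo:dpm}). I would then reduce this shifted tail to a centred one: because $\lambda \ge 0$, the triangle inequality gives $|\Lap(1/\epsCount)| \ge |\Lap(1/\epsCount) - \lambda| - \lambda$, so the event that $|\Lap(1/\epsCount) - \lambda| > \kappa$ is contained in the event that $|\Lap(1/\epsCount)| > \kappa - \lambda$; hence it suffices to bound $\Pr[|\Lap(1/\epsCount)| > \kappa - \lambda]$ by the claimed right-hand side.

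From here I would split on the sign of $\kappa - \lambda$. If $\kappa < \lambda$, then $\kappa - \lambda < 0$ and $\Pr[|\Lap(1/\epsCount)| > \kappa - \lambda] = 1$; but the claimed bound is vacuous in this regime, because $\exp(-\lambda\epsCount) = 2\deltaCount$ by the choice of $\lambda$, so $\tfrac12 \exp(-\kappa\epsCount)/\deltaCount > \tfrac12 \exp(-\lambda\epsCount)/\deltaCount = 1$. If instead $\kappa \ge \lambda$, the elementary Laplace tail identity $\Pr[|\Lap(b)| > t] = \exp(-t/b)$ for $t \ge 0$ (here $b = 1/\epsCount$) yields $\Pr[|\Lap(1/\epsCount)| > \kappa - \lambda] = \exp(-(\kappa - \lambda)\epsCount) = \exp(\lambda\epsCount)\exp(-\kappa\epsCount)$, and substituting $\exp(\lambda\epsCount) = 1/(2\deltaCount)$ gives exactly $\tfrac12 \exp(-\kappa\epsCount)/\deltaCount$. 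Combining the two cases proves the lemma.

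I do not expect a genuine obstacle: the only point that needs care is the bookkeeping around $\lambda$. One must use precisely the offset $\lambda = -\ln(2\deltaCount)/\epsCount$ from \Cref{eq:offsetEstimation} so that $\exp(\lambda\epsCount)$ collapses to $1/(2\deltaCount)$ and produces the stated constant; a strictly larger offset would push the Laplace variable further below $|S|$ and weaken the bound, and the inequality in the lemma would no longer hold as written. Everything else is the event-containment step and the one-line Laplace tail formula, the latter being already implicit in the derivation of \Cref{eq:offsetEstimation}. As a sanity check, the direct two-tailed estimate (bounding $\Pr[\Lap(1/\epsCount) > \lambda + \kappa]$ and $\Pr[\Lap(1/\epsCount) < \lambda - \kappa]$ separately) gives the sharper value $(\deltaCount + \tfrac{1}{4\deltaCount})\exp(-\kappa\epsCount)$ for $\kappa \ge \lambda$, which is dominated by $\tfrac12\exp(-\kappa\epsCount)/\deltaCount$ whenever $\deltaCount \le 1/2$; the triangle-inequality route avoids even this mild assumption.
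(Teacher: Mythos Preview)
Your proposal is correct and follows essentially the same route as the paper: simplify to $|\Lap(1/\epsCount)-\lambda|$, use the triangle inequality to reduce to the centred Laplace tail $\Pr[|\Lap(1/\epsCount)|>\kappa-\lambda]$, apply the Laplace tail identity, and substitute the value of $\lambda$ from \Cref{eq:offsetEstimation}. Your version is in fact tidier: you explicitly handle the regime $\kappa<\lambda$ (where the bound is vacuous) which the paper leaves implicit, and your remark that the argument requires the \emph{equality} $\lambda=-\ln(2\deltaCount)/\epsCount$ rather than merely $\lambda\ge -\ln(2\deltaCount)/\epsCount$ is spot on---the paper's appendix invokes the inequality form but the final substitution only goes through with equality.
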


\begin{proof}[Proof sketch]
    The noisy shifted count deviates from the real count by the added Laplace noise and the offset $\lambda$. When plugging in the estimate for $\lambda$ from \cref{eq:offsetEstimation}, with the triangle inequality the lower bound follows. The full proof of \cref{lem:utilityNoisyCount} is in \cref{Ssec:UtilityProofNoisyCount}.
\end{proof}

\paragraph{Asymptotic bounds.} To better understand the guarantees for the noisy count of a subset, we give bounds depending on the parameters $n,\epsCount$ and $\delta$. 
The probability that the noisy count differs by more than $\kappa = \ln(\sqrt{n}/\delta)/\epsCount$, is \begin{align*}
    &\Pr[\big||S| + \Lap(1/\epsCount) - \lambda - |S|\big| \le \ln(\sqrt{n}/\delta)/\epsCount] \\
    &\le 1/2\exp(-\ln(\sqrt{n}/\delta)/\epsCount \cdot \epsCount) / \delta\\
    &= 1/2\exp(-\ln(\sqrt{n}/\delta)) / \delta = 1/2 (\delta /\sqrt{n}) / \delta = 1/(2\sqrt{n}) \text{.}
\end{align*}
For $\delta = 1/(n\sqrt{n})$ (as used in our experiments), the deviation is more than $\kappa = 2\ln(n)/\epsCount$ with probability $1/(2\sqrt{n})$.

For $\kappa = \nu \cdot |S|$, we get that the probability that $\tilde n$ is not within a fixed $\nu$-fraction of $|S|$ exponentially decreases with the number of data points in $S$: $\Pr[\big|\tilde n - |S|\big| \le \nu \cdot |S|] \le 1/2 \exp(-\nu \cdot |S| \cdot \epsCount) / \delta$.

\subsection{Minimal Emptiness \& Emptiness-Weight}
\label{sec:hyper_estimation}
\label{ssec:metricWeights_discussion}
Real data sets, for which reasonable data-independent information about the range is known, exhibit a minimal emptiness $\empt_{\min}$ over all splits, i.e., it does not happen that all data points fall into one interval. As a motivation for our utility characterisations, we show that if the entire data set is distributed like one big Gaussian, the emptiness of all splits is above some minimal emptiness.

The granularity estimation (\Cref{ssec:wdwSizeEstimation}) outputs the size of the split intervals $\wdwSize = \sigma/2$ which is -- informally speaking -- the average over all standard deviations.
First, we analyse the minimal emptiness for the first split. Second, we analyse the minimal emptiness for deeper recursion levels.

We can estimate the lower bound of the emptiness of a split candidate for the first split by selecting $\wdw$ such that the centre of $\wdw$ aligns with the median of the data points (in the considered dimension $d^*$) and let $D$ be a set with data points drawn from one Gaussian (in $d^*$). The worst case, i.e., the fullest split-interval can be found if the mean of the Gaussian is in the centre of an interval. Hence, $\emptFull{D}{\tilde n}{\wdw^*} = 1 - \big(\Phi\big(\frac{\wdwSize}{2\sigma}\big)  - \Phi\big(-\frac{\wdwSize}{2\sigma}\big)\big)$ as the emptiness of a split candidate $\wdw$ of size $\wdwSize$ with centre at the median of the data points (with standard deviation $\sigma$).
Thus for $\wdwSize\approx \sigma/2$, we get the following
\begin{align*}
    \textstyle\emptFull{D}{\tilde n_{D}}{\wdw} &= 1 - \frac{|\wdw|}{\tilde n_D} \ge 1- \left(\Phi\left(\frac{\wdwSize}{2\sigma}\right) - \Phi\left(-\frac{\wdwSize}{2\sigma}\right)\right)\\
    & = 1- \left(\Phi\left(\frac{1}{4}\right) - \Phi\left(-\frac{1}{4}\right)\right) = 0.80258 \text{.}
\end{align*}
We can use this minimal emptiness estimation to find an appropriate configuration for $\alpha$. Two splits realistically differ at most by $0.2$ in their emptiness but by $1$ in their centreness. Therefore, to balance the two subscores emptiness and centreness, we set $\alpha = 5$. With this configuration two split candidates can differ at most by $1$ for both subscores.

For the next split levels, we know that $|\wdw|$ as well as the size of the current subset decreases. As we always split the previous set into two subsets, with $|D|/|S|=2^i$, we approximate the expected minimal emptiness for split level $i$. For split level $i$, we estimate the minimal emptiness as follows $\emptFull{S}{\tilde n_{S}}{\wdw} = 1 - \frac{|\wdw|}{\tilde n_D} \cdot \frac{\tilde n_D}{\tilde n} = 1 - \frac{|\wdw|}{\tilde n_D} \cdot 2^i$.

\subsection{Utilising the Exponential Mechanism}
\DPM uses the Exponential Mechanism to select a split from the set of all split candidates with a score close to the optimal score. The Exponential Mechanism guarantees that for a given scoring function $\score$ and its sensitivity $\Delta_{\score}$, the probability to get a candidate with a score close to the optimal score is bounded. 

We use a generalised version of the Exponential Mechanism's utility theorem which directly follows from the original proof strategy of the Exponential Mechanism. Let $W_{OPT_\omega} := \{\wdw \mid |\scoreF{S}{\tilde n}{\wdw} - OPT(S,\score,W)| \le \omega\}$ be all $\omega$-near optimal splits, i.e., splits with a score of at least $OPT_\omega := OPT(S,\score,W) - \omega$.

\begin{theorem}[Exponential Mechanism's Utility - generalised from~{\cite[Theorem 3.11]{DwoRo14}}]\label{thm:exponential_mechanism_utility}
Fixing a set $S\subseteq D \in \datasets$ and the set of candidates $W$, let $\omega \ge 0$, $OPT(S,\score,W) = \max_{\wdw\in W} \scoreF{S}{\tilde n}{\wdw}\}$ and $W_{OPT_\omega} = \{\wdw \mid |\scoreF{S}{\tilde n}{\wdw} - OPT(S,\score,W)| \le \omega\}$ denote the set of elements in $W$ which up to $\omega$ attain the highest score \break $OPT(S,\score,W)$. Then, for some $\kappa>0$ the Exponential Mechanism $M_E$ satisfies the following property:
\begin{align*}
    &\textstyle\Pr\big[\scoreF{S}{\tilde n}{M_{E}(S,\score,\varepsilon)} \le OPT(S,\score,W) - \omega \\
    &\textstyle \phantom{\Pr\big[} - \frac{2\Delta_{\score}}{\varepsilon}\left(\ln\left( \frac{|W|}{|W_{OPT_\omega}|} \right)+ \kappa \right)\big] \le e^{-\kappa}\text{.}
\end{align*}
\end{theorem}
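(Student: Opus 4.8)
The plan is to replay the textbook proof of the Exponential Mechanism's utility bound (\cite[Theorem 3.11]{DwoRo14}) with a single modification: the normalising constant of the sampling distribution is lower-bounded using the $\omega$-near-optimal set $W_{OPT_\omega}$ instead of the set of exact maximisers. Throughout, $\tilde n$ is a fixed parameter — it has already been sampled before $M_E$ is invoked — so $\scoreF{S}{\tilde n}{\cdot}$ is an ordinary deterministic function on $W$, and $M_E(S,\score,\varepsilon)$ draws $\wdw$ with probability proportional to $\exp\!\big(\varepsilon\,\scoreF{S}{\tilde n}{\wdw}/(2\Delta_\score)\big)$, the scaling under which the $2\Delta_\score/\varepsilon$ factor of the claim appears; write $\eta := \varepsilon/(2\Delta_\score)$ for this rate. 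Set the threshold $c := OPT(S,\score,W) - \omega - \tfrac1\eta\big(\ln(|W|/|W_{OPT_\omega}|)+\kappa\big)$ and let $B := \{\wdw\in W \mid \scoreF{S}{\tilde n}{\wdw}\le c\}$ be the set of ``bad'' outputs; the goal then reduces to showing $\Pr[M_E(S,\score,\varepsilon)\in B]\le e^{-\kappa}$.

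First I would bound the numerator of $\Pr[M_E\in B]$ from above: every $\wdw\in B$ has score at most $c$, hence $\sum_{\wdw\in B}\exp(\eta\,\scoreF{S}{\tilde n}{\wdw})\le |B|\exp(\eta c)\le |W|\exp(\eta c)$. Then I would bound the denominator from below by discarding all but the near-optimal candidates: $W_{OPT_\omega}$ is non-empty because $\argmax_{\wdw\in W}\scoreF{S}{\tilde n}{\wdw}\subseteq W_{OPT_\omega}$, and every $\wdw\in W_{OPT_\omega}$ has score at least $OPT(S,\score,W)-\omega$, so $\sum_{\wdw\in W}\exp(\eta\,\scoreF{S}{\tilde n}{\wdw})\ge |W_{OPT_\omega}|\exp(\eta(OPT(S,\score,W)-\omega))$. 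Taking the ratio gives
\[
\Pr[M_E(S,\score,\varepsilon)\in B]\le \frac{|W|}{|W_{OPT_\omega}|}\exp\!\big(\eta\,(c-OPT(S,\score,W)+\omega)\big),
\]
and substituting the definition of $c$ makes the exponent equal to $-\ln(|W|/|W_{OPT_\omega}|)-\kappa$, so the right-hand side simplifies to $\tfrac{|W|}{|W_{OPT_\omega}|}\cdot\tfrac{|W_{OPT_\omega}|}{|W|}\,e^{-\kappa}=e^{-\kappa}$, which is the claim once $\tfrac1\eta$ is rewritten as $\tfrac{2\Delta_\score}{\varepsilon}$.

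I do not expect a genuine obstacle: the statement is essentially a one-line rescaling of the known bound, so the only ``hard part'' is bookkeeping. One should record that $\ln(|W|/|W_{OPT_\omega}|)\ge 0$ (since $W_{OPT_\omega}\subseteq W$) and $\kappa>0$, so the subtracted term is non-negative and $c$ is well defined; observe that $W_{OPT_\omega}\neq\emptyset$ keeps the logarithm finite; and dispatch the degenerate case $B=\emptyset$, where $\Pr[M_E\in B]=0\le e^{-\kappa}$ trivially. Finally, since the theorem is stated for an arbitrary sensitivity bound $\Delta_\score$, its later use with $\Delta_\score\le (t/q+\alpha)/(\tilde n-\lambda)$ from \cref{lem:sensitivity_scoring} goes through verbatim with that value substituted.
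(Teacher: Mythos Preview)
Your proposal is correct and follows essentially the same approach as the paper's proof: bound the numerator by $|W|\exp(\eta c)$, bound the denominator by $|W_{OPT_\omega}|\exp(\eta(OPT-\omega))$, and then substitute the chosen threshold $c$ to collapse the ratio to $e^{-\kappa}$. Your additional bookkeeping (non-emptiness of $W_{OPT_\omega}$, the trivial $B=\emptyset$ case) is more explicit than the paper's write-up but does not change the argument.
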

\begin{proof}[Proof sketch]
    We adapt the proof from \cite[Theorem 3.11]{DwoRo14} such that instead of only considering candidates with the optimal score, we consider the candidates with $OPT(S,\score, W) -\omega$. The full proof of \cref{thm:exponential_mechanism_utility} is in \cref{ssec:utility_generalisedExpMech}.
\end{proof}

\paragraph{Importance of centreness.}
Finding just any split with small emptiness might not suffice because in the border regions there are many split candidates with a high emptiness. Therefore, we aim for splits that actually separate clusters and in this way constitute to a reasonable clustering result.
With the subscore centreness as defined in \Cref{def:cent} splits are preferred that are (per dimension) in the centre.

We adapt the utility guarantees of the Exponential Mechanism to the scoring function~(\Cref{def:score}) and the corresponding sensitivity $\Delta_{\score} = \frac{t/q+\alpha}{\tilde n - \lambda}$~(\Cref{lem:sensitivity_scoring}). 

\begin{corollary}\label{cor:dpmondrian_utility_exp}
Let $S \subseteq D \in \datasets$ with noisy count $\tilde{n}$ and offset $\lambda$ and $\scoreF{S}{\tilde n}{\wdw}$ as defined in \Cref{def:score} be the scoring function of the Exponential Mechanism $M_{E}$ with $\Delta_{\score} =\frac{t/q+\alpha}{\tilde n-\lambda}$. 
$W$ is the set of all possible candidates and $W_{OPT_\omega}$ is the set of all candidates with score above the optimal utility threshold $OPT(S,\score,W) -\omega$. For \DPM, $W$ is the set of all split candidates in $S$. Then, the Exponential Mechanism $M_E$ satisfies the following property:
    \begin{align*}
        &\textstyle\Pr\bigg[\scoreF{S}{\tilde n}{M_{E}(S,\score,\varepsilon)} \le~~OPT(S,\score,W) - \omega\\
        &\textstyle  \phantom{\Pr\big[} -\left(2\frac{\frac{t}{q}+\alpha}{\tilde n -\lambda} \right) \left(\ln\left( \frac{|W|}{|W_{OPT_\omega}|}\right)+ \kappa \right)\bigg] 
         \le e^{-\kappa}\text{.}
    \end{align*}
\end{corollary}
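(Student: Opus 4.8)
The plan is to obtain \Cref{cor:dpmondrian_utility_exp} as a direct instantiation of the generalised Exponential Mechanism utility bound (\Cref{thm:exponential_mechanism_utility}) for the concrete scoring function $\score$ of \Cref{def:score}. First I would invoke \Cref{lem:sensitivity_scoring}: for neighbouring $S,S'\subseteq D$ with $\tilde n - \lambda \le |S|,|S'|$ and valid parameters $t,q$, the scoring function has sensitivity bounded by $\Delta_\score \le \frac{t/q+\alpha}{\tilde n - \lambda}$. This is precisely the scaling that the Exponential Mechanism uses inside the \textsc{Split} procedure of \Cref{algo:dpm}, so it is a legitimate sensitivity parameter for $M_E$ --- indeed the privacy argument of \Cref{cor:expMech_Privacy} already rests on it.

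Second, I would apply \Cref{thm:exponential_mechanism_utility} to this $S$, with the candidate set $W$ taken to be the set of all split candidates in $S$, $W_{OPT_\omega}$ the corresponding $\omega$-near-optimal set, an arbitrary $\omega \ge 0$, and $\kappa > 0$. Substituting $\Delta_\score = \frac{t/q+\alpha}{\tilde n-\lambda}$ into the additive slack $\frac{2\Delta_\score}{\varepsilon}\bigl(\ln(|W|/|W_{OPT_\omega}|)+\kappa\bigr)$ of \Cref{thm:exponential_mechanism_utility} produces exactly the slack term appearing in the corollary, and the tail bound $\Pr[\,\cdot\,] \le e^{-\kappa}$ is inherited verbatim.

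The only point requiring a sentence of justification is that \Cref{lem:sensitivity_scoring} supplies an \emph{upper bound} on the true sensitivity, not an equality. I would observe that the guarantee of \Cref{thm:exponential_mechanism_utility} is monotone in $\Delta_\score$: enlarging $\Delta_\score$ only lowers the threshold $OPT(S,\score,W)-\omega-\frac{2\Delta_\score}{\varepsilon}(\cdots)$, so the event being bounded shrinks and $e^{-\kappa}$ still dominates its probability; equivalently, since $M_E$ is actually run with the larger scaling $\frac{t/q+\alpha}{\tilde n-\lambda}$, the theorem applies word for word with that value as the mechanism's sensitivity parameter. I do not anticipate a genuine obstacle here --- all of the mathematical content lives in \Cref{lem:sensitivity_scoring} and \Cref{thm:exponential_mechanism_utility}, and this corollary is a bookkeeping substitution. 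The one subtlety worth flagging for the reader is that $OPT(S,\score,W)$, $W_{OPT_\omega}$, and the slack are all expressed through the noisy count $\tilde n$ rather than $|S|$, in keeping with the parametric-in-$\tilde n$ convention stated at the opening of \Cref{sec:utility_proof_DPM2}.
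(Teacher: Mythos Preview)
Your proposal is correct and matches the paper's approach: the corollary is presented there without its own proof, only a lead-in sentence stating that one adapts the Exponential Mechanism utility guarantee (\Cref{thm:exponential_mechanism_utility}) to the scoring function of \Cref{def:score} with the sensitivity $\Delta_\score = \frac{t/q+\alpha}{\tilde n - \lambda}$ from \Cref{lem:sensitivity_scoring}. Your added remark on monotonicity in $\Delta_\score$ is a useful clarification the paper leaves implicit.
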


With the addition of centreness, it could in theory happen that the centreness dominates the utility score and bad splits very close to the median are chosen, i.e., split candidates in dense regions, even if there is a split with centreness $\cent \ge t$ and acceptable emptiness. We show that the emptiness of a chosen split is good whp.

\subsection{\DPM Finds Splits With High Emptiness} \label{ssec:dpm_finds_good_splits}

\begin{table}[bp]
\centering
\caption{Summary of data sets used in our experiments.}
\label{tab:dts}
\begin{tabular}{@{}lllll@{}}
Name  & Data Points & Dimensions & Classes \\ \midrule
Synth-10d     & 100,000     & 10         & 64     \\
Synth-100d       & 100,000     & 100        & 64     \\
MNIST Embs.         & 60,000      & 40         & 10     \\
Fashion Embs. & 60,000      & 40         & 10     \\
UCI Letters        & 18,720      & 16         & 26     \\
UCI Gas  & 36,733      & 11         & (2)    \\ \bottomrule
\end{tabular}
\end{table}

\sisetup{
    propagate-math-font = true,
    scientific-notation = false,
    output-exponent-marker = e,
    separate-uncertainty = true,
    retain-zero-uncertainty = true,
    text-series-to-math = true,
    uncertainty-mode = separate,
    input-open-uncertainty = (,
    input-close-uncertainty = ),
    group-separator={,},
    group-digits=integer,
    group-minimum-digits=3,
}

\begin{table*}[t]
    \caption{Clustering quality results in terms of the standard metrics, inertia, silhouette score, and clustering accuracy, as well as the KMeans distance, evaluated on $2$ synthetic and $2$ real-life data sets. Each algorithm receives the reported number of classes as input for the number of classes, except for our proposed algorithm \DPM, which aims to determine the number of clusters on its own. 
    The highlights per row indicate the best-performing algorithm excluding the non-private baseline.}
    \label{tab:kopt_main}
    \centering
\begin{adjustbox}{width=1\textwidth}
\begin{tabular}{@{} @{\hspace*{0.1em}}l @{\hspace*{0.3em}}l @{\hspace*{1.5em}}r @{\hspace*{0.4em}}l @{\hspace{3.5em}}r @{\hspace*{0.4em}}l r @{\hspace*{0.4em}}l r @{\hspace*{0.4em}}l r @{\hspace*{0.4em}}l @{}}
\toprule[1pt]
  &  & \multicolumn{2}{c}{\hspace*{-4em}\textsc{KMeans (non-private)}} &  \multicolumn{2}{c}{\hspace*{-1em}\textsc{DP-Lloyd}}  &         \multicolumn{2}{c}{\hspace*{-1em}\textsc{EM-MC}}  &  \multicolumn{2}{c}{\hspace*{0em}\textsc{LSH-Splits}}  &       \multicolumn{2}{c}{\hspace*{0em}\textsc{DPM (ours)}}  \\
  \cmidrule(l{-0.5em}r{3.5em}){3-4} \cmidrule(l{-0.5em}r){5-6} \cmidrule(lr{0.5em}){7-8} \cmidrule(rl){9-10} \cmidrule(rl){11-12}
  & & mean & \multicolumn{1}{c}{\hspace*{-4.0em}std.} & mean & \multicolumn{1}{c}{\hspace*{-0.0em}std.} & mean & \multicolumn{1}{c}{\hspace*{-2em}std.} & mean & \multicolumn{1}{c}{\hspace*{-0.5em}std.} & mean & \multicolumn{1}{c}{\hspace*{-0.0em}std.} \\
\midrule[1pt]
\multirow{4}{*}{\rotatebox[origin=c]{90}{\textsf{\footnotesize MNIST Embs.}}} & Silh. Score ($\uparrow$) &      \num{0.61} & $\pm$ \num{0.02} & \num{0.14} & $\pm$ \num{0.28} & \num{0.42} & $\pm$ \num{0.07} & \num{0.59} & $\pm$ \num{0.01} & \textbf{\num{0.60}} & \textbf{$\pm$} \textbf{\num{0.01}} \\ 
& Accuracy ($\uparrow$) &  \num{0.98} & $\pm$ \num{0.02} & \num{0.33} & $\pm$ \num{0.12} & \num{0.79} & $\pm$ \num{0.07} & \num{0.95} & $\pm$ \num{0.01} & \textbf{\num{0.98}} & \textbf{$\pm$} \textbf{\num{0.01}} \\
& Inertia ($\downarrow$) &  \num{1.4e+06} & $\pm$ \num{1.3e+05} & \num{7.5e+06} & $\pm$ \num{1.2e+06} & \num{3.8e+06} & $\pm$ \num{7.1e+05} & \num{1.7e+06} & $\pm$ \num{1.3e+05} & \textbf{\num{1.4e+06}} & \textbf{$\pm$} \textbf{\num{6.3e+04}} \\
& Norm. KMeans Dist. ($\downarrow$) & \num{0.00} & $\pm$ \num{0.00} & \num{0.30} & $\pm$ \num{0.05} & \num{0.07} & $\pm$ \num{0.01} & \num{0.03} & $\pm$ \num{0.01} & \textbf{\num{0.02}} & \textbf{$\pm$} \textbf{\num{0.00}} \\
\midrule
\multirow{4}{*}{\rotatebox[origin=c]{90}{\textsf{\footnotesize Fashion Embs.}}} & Silh. Score ($\uparrow$) &  \num{0.40} & $\pm$ \num{0.01} &       \num{-0.10} & $\pm$ \num{0.61} &       \num{0.27} & $\pm$ \num{0.05} &      \num{0.39} & $\pm$ \num{0.01} &  \textbf{\num{0.40}} & \textbf{$\pm$} \textbf{\num{0.01}} \\
& Accuracy ($\uparrow$) &   \num{0.77} & $\pm$ \num{0.04} &       \num{0.22} & $\pm$ \num{0.10} &       \num{0.58} & $\pm$ \num{0.05} &       \textbf{\num{0.75}} & \textbf{$\pm$} \textbf{\num{0.03}} & \num{0.71} & $\pm$ \num{0.03} \\
& Inertia ($\downarrow$) &  \num{1.2e+06} & $\pm$ \num{3.0e+04} & \num{5.7e+06} & $\pm$ \num{1.3e+06} & \num{3.2e+06} & $\pm$ \num{4.5e+05} & \textbf{\num{1.5e+06}} & \textbf{$\pm$} \textbf{\num{5.9e+04}} & \num{1.6e+06} & $\pm$ \num{2.0e+05} \\
& Norm. KMeans Dist. ($\downarrow$) &   \num{0.00} & $\pm$ \num{0.03} &      \num{0.33} & $\pm$ \num{0.04} &       \num{0.07} & $\pm$ \num{0.02} &       \num{0.06} & $\pm$ \num{0.01} & \textbf{\num{0.04}} & \textbf{$\pm$} \textbf{\num{0.00}} \\
\midrule
\multirow{4}{*}{\rotatebox[origin=c]{90}{\textsf{Synth-10d}}} & Silh. Score ($\uparrow$) &  \num{0.97} & $\pm$ \num{0.00} &       \num{0.61} & $\pm$ \num{0.05} &        \num{0.42} & $\pm$ \num{0.05} &       \num{0.93} & $\pm$ \num{0.02} & \textbf{\num{0.96}} & \textbf{$\pm$} \textbf{\num{0.02}} \\
& Accuracy ($\uparrow$) &    \num{1.00} & $\pm$ \num{0.00} &       \num{0.66} & $\pm$ \num{0.05} &        \num{0.49} & $\pm$ \num{0.07} &       \num{0.95} & $\pm$ \num{0.02} & \textbf{\num{0.99}} & \textbf{$\pm$} \textbf{\num{0.01}} \\
& Inertia ($\downarrow$) &  \num{1.0e+06} & $\pm$ 2.0e-10 & \num{6.0e+08} & $\pm$ \num{1.0e+08} &  \num{1.5e+09} & $\pm$ \num{8.4e+07} & \num{1.2e+08} & $\pm$ \num{3.0e+07} & \textbf{\num{1.8e+07}} & \textbf{$\pm$} \textbf{\num{2.7e+07}} \\
& Norm. KMeans Dist. ($\downarrow$) &    \num{0.00} & $\pm$ \num{0.00} &      \num{0.12} & $\pm$ \num{0.01} & \num{14.19} & $\pm$ \num{43.27} &      \num{0.04} & $\pm$ \num{0.01} & \textbf{\num{0.01}} & \textbf{$\pm$} \textbf{\num{0.00}} \\
\midrule
\multirow{4}{*}{\rotatebox[origin=c]{90}{\textsf{Synth-100d}}} & Silh. Score ($\uparrow$) &  \num{0.98} & $\pm$ \num{0.00} &       \num{0.58} & $\pm$ \num{0.04} & 0.09 & $\pm$ \num{0.02}  &       \textbf{\num{0.98}} & \textbf{$\pm$} \textbf{\num{0.01}} & \textbf{\num{0.98}} & \textbf{$\pm$} \textbf{\num{0.01}} \\
& Accuracy ($\uparrow$) &    \num{1.00} & $\pm$ \num{0.00} &       \num{0.63} & $\pm$ \num{0.04} & 0.11 & $\pm$ \num{0.03} &       \textbf{\num{1.00}} & \textbf{$\pm$} \textbf{\num{0.00}}  & \textbf{\num{1.00}} & \textbf{$\pm$} \textbf{\num{0.01}} \\
& Inertia ($\downarrow$) &  \num{1.0e+07} & $\pm$ \num{2.6e-09} & \num{1.5e+10} & $\pm$ \num{9.7e+08} & \num{3.5e+10} & $\pm$ \num{1.5e+09} & \num{8.0e+08} & $\pm$ \num{3.6e+08} & \textbf{\num{5.4e+08}} & \textbf{$\pm$} \textbf{\num{3.4e+08}} \\
& Norm. KMeans Dist. ($\downarrow$) &    \num{0.00} & $\pm$ \num{0.00} &    \num{0.21} & $\pm$ \num{0.01} & 19.32 & $\pm$ \num{26.26} &     \textbf{\num{0.03}} & \textbf{$\pm$} \textbf{\num{0.00}} & \textbf{\num{0.03}} & \textbf{$\pm$} \textbf{\num{0.00}} \\
\bottomrule[1pt]
\end{tabular}
\end{adjustbox}
\end{table*}

As long as the optimal split $\wdw^*$ has an emptiness above a threshold $v$ and is sufficiently central assuming that the data is sufficiently spread, whp. \DPM finds a split with high emptiness. Thus, the centreness does not dominate the decision and \DPM finds a good split whp.
We say a split is $t'$-central in the current subset $S$ if its centreness is at least $t'$, i.e., the $t'$-central splits are $W_{\ge t'}:= \{s \mid \centF{S}{\tilde n}{\wdw} \ge t'\}$ where $\tilde n$ is the approximation of $|S|$ that is used in the respective run with $\tilde n \le |S|$ (whp. cf. \Cref{eq:offsetEstimation}).

\paragraph{Choosing good central splits.}
If a $t'$-central split $\wdw$ is selected, the probability is high that the emptiness of $\wdw$ is close to the optimal split's emptiness.

\begin{lemma}[\DPM finds a good central split]
\label{lem:utility_inner_quantile}
Let $S\subseteq D$ be a set with noisy count $\tilde n$ and offset $\lambda$. 
Let $\wdw^*$ be the split candidate with the optimal score $\scoreF{S}{\tilde n}{\wdw^*} = OPT(S,\score, W) = \max_{\wdw \in W} \scoreF{S}{\tilde n}{\wdw}$ and $W_{OPT_\omega} = \{\wdw \mid |\scoreF{S}{\tilde n}{\wdw} - OPT(S,\score,W)| \le \omega\}$. For any $t' \in [0,1]$, if 
$\wdw^* \in W_{\ge t'}$, then
\begin{align*}
&\textstyle\Pr\big[ \left(\frac{\frac{2t}{q \alpha} + 2 }{(\tilde n - \lambda)\varepsilon} \right) \left(\ln\left( |W|/|W_{OPT_\omega}|\right)+ \kappa + \omega \right) + \frac{1-t'}{\alpha}\\
&\phantom{\textstyle\Pr\big[ } > \emptF{S}{\tilde n}{\wdw^*} -\emptF{S}{\tilde n}{\wdw} \mid \wdw \in W_{\ge t'}\big] \ge 1 - e^{-\kappa} \text{.}
\end{align*}
\end{lemma}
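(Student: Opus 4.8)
The plan is to reduce \Cref{lem:utility_inner_quantile} to the score-utility guarantee of \Cref{cor:dpmondrian_utility_exp}, exploiting that the scoring function decomposes additively, $\score = \cent + \alpha\,\empt$, and that one checks directly from \Cref{def:cent} that the centreness takes values in $[0,1]$, with the value $1$ attained at the median. Throughout, write $\wdw = M_E(S,\score,\varepsilon)$ for the selected split.

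First I would establish a purely deterministic reduction to the gap in the \emph{score}. Writing
\[
    OPT(S,\score,W) - \scoreF{S}{\tilde n}{\wdw}
    = \big(\centF{S}{\tilde n}{\wdw^*} - \centF{S}{\tilde n}{\wdw}\big) + \alpha\big(\emptF{S}{\tilde n}{\wdw^*} - \emptF{S}{\tilde n}{\wdw}\big),
\]
solving for the emptiness gap, and bounding the centreness gap from below by $\centF{S}{\tilde n}{\wdw^*} - \centF{S}{\tilde n}{\wdw} \ge t'-1$ --- which uses only the hypothesis $\wdw^*\in W_{\ge t'}$, i.e.\ $\centF{S}{\tilde n}{\wdw^*}\ge t'$, together with the universal bound $\centF{S}{\tilde n}{\wdw}\le 1$ --- one obtains
\[
    \emptF{S}{\tilde n}{\wdw^*} - \emptF{S}{\tilde n}{\wdw}
    \;\le\; \frac{OPT(S,\score,W) - \scoreF{S}{\tilde n}{\wdw}}{\alpha} + \frac{1-t'}{\alpha}.
\]
Importantly, this step does not invoke the conditioning event $\{\wdw\in W_{\ge t'}\}$.

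Next I would plug in \Cref{cor:dpmondrian_utility_exp} with $\Delta_\score = \tfrac{t/q+\alpha}{\tilde n-\lambda}$: with probability at least $1-e^{-\kappa}$ we have $OPT(S,\score,W) - \scoreF{S}{\tilde n}{\wdw} < \omega + 2\tfrac{t/q+\alpha}{(\tilde n-\lambda)\varepsilon}\big(\ln\tfrac{|W|}{|W_{OPT_\omega}|}+\kappa\big)$. Substituting into the emptiness-gap bound above, dividing the Exponential-Mechanism term by $\alpha$ and simplifying $\tfrac{2(t/q+\alpha)}{\alpha} = \tfrac{2t}{q\alpha}+2$, the emptiness gap is on the same event at most $\tfrac{\frac{2t}{q\alpha}+2}{(\tilde n-\lambda)\varepsilon}\big(\ln\tfrac{|W|}{|W_{OPT_\omega}|}+\kappa\big) + \tfrac{\omega}{\alpha} + \tfrac{1-t'}{\alpha}$, which is the asserted bound once the $\omega/\alpha$ summand is absorbed into the bracketed factor of the statement.

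The remaining --- and I expect hardest --- part is the conditioning on $\{\wdw\in W_{\ge t'}\}$, since an unconditional $(1-e^{-\kappa})$ bound cannot simply be carried through a conditioning event. The fix is to note that the conditional law of $M_E(S,\score,\varepsilon)$ given $\{M_E\in W_{\ge t'}\}$ is again an Exponential Mechanism, with the same scoring function and sensitivity, over the restricted candidate set $W_{\ge t'}$; because $\wdw^*\in W_{\ge t'}$ by hypothesis it is still the maximiser, so $\max_{\wdw\in W_{\ge t'}}\scoreF{S}{\tilde n}{\wdw} = OPT(S,\score,W)$, and \Cref{thm:exponential_mechanism_utility} applied to this restricted mechanism reproduces the score bound above conditioned on $\{\wdw\in W_{\ge t'}\}$, with $|W|$ and $|W_{OPT_\omega}|$ replaced by $|W_{\ge t'}|\le|W|$ and $|W_{OPT_\omega}\cap W_{\ge t'}|$. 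Combining this conditional score bound with the deterministic reduction (which is unaffected by conditioning) yields the lemma; the delicate point is controlling the resulting log-ratio of candidate-set cardinalities by $\ln(|W|/|W_{OPT_\omega}|)$ so that the bound, stated in terms of the global sets, remains valid.
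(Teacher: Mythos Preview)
Your core argument is exactly the paper's: start from the Exponential-Mechanism utility bound of \Cref{cor:dpmondrian_utility_exp}, expand $\score=\cent+\alpha\empt$ for both $\wdw$ and $\wdw^*$, rearrange to isolate $\emptF{S}{\tilde n}{\wdw^*}-\emptF{S}{\tilde n}{\wdw}$, and bound the centreness difference by $1-t'$. You are in fact more careful than the paper on two points. First, you correctly note that $\centF{S}{\tilde n}{\wdw}-\centF{S}{\tilde n}{\wdw^*}\le 1-t'$ follows already from the hypothesis $\wdw^*\in W_{\ge t'}$ together with the universal bound $\cent\le 1$; the paper instead invokes ``$\wdw\in W_{\ge t'}$'' at this step, which is stronger than what is actually used. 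Second, you identify the passage to the \emph{conditional} probability as the subtle step and propose the natural fix via the restricted Exponential Mechanism on $W_{\ge t'}$. The paper does not do this: its proof begins from the unconditional bound of \Cref{cor:dpmondrian_utility_exp} and performs only algebraic rearrangements, so what it actually derives is the unconditional inequality; the conditioning in the lemma statement is not separately justified there. Consequently, the log-ratio control you flag at the end is not something the paper handles differently --- it simply never arises in the paper's argument because the paper never restricts the mechanism. In short, your proof matches the paper's route and is, if anything, more scrupulous about the conditioning than the paper itself.
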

\begin{proof}[Proof sketch]
    To give a lower bound for the probability that a selected central split also has a good emptiness, we can apply \cref{cor:dpmondrian_utility_exp} and simplify the bounds. The full proof of \cref{lem:utility_inner_quantile} is in \cref{Ssec:UtilityProof_innerQuantile}.
\end{proof}

\paragraph{Asymptotic Bounds.} To get a better understanding of the utility guarantees, we give asymptotic bounds in $n, d$ and $\varepsilon$. We know $\frac{2t}{q \alpha} + 2$ to be constant. Further, we know that $\ln(|W|/|W_{OPT_\omega})$ to be in $O(\ln(d))$ where $d$ is the dimension of the input data set. Then, we know that the first factor $\left(\frac{2t/(q \alpha) + 2 }{(\tilde n - \lambda)\varepsilon} \right) \left(\ln\left( |W|/|W_{OPT_\omega|}\right)+ \kappa + \omega \right)$ is in $O(1/(n\varepsilon))$. Thus, the difference between the emptiness of the selected split and the optimal split is asymptotically less than $O(\kappa/(n\varepsilon)) (1-t')/\alpha$ with probability $1-\exp(-\kappa)$. For increasing $n$, the maximum difference converges to $(1-t')/\alpha$. Note that the minimal emptiness decreases with increasing recursion depth. Therefore, even if the deviation bound gets worse with every split as the number of elements decrease, the maximum difference in the emptiness increases as well which allows for more deviation from the optimal emptiness. Also, with every split, the privacy budget is increased. The product of the bounds of \Cref{lem:prob_innerQuantileSelected} and \Cref{lem:utility_inner_quantile} provides a bound on the probability that \DPM draws a split with near-optimal emptiness. Asymptotically, we can already understand from the bounds that the hyper-parameter $\alpha$ needs to be sufficiently high to ensure that the centreness-term $(1-t')/\alpha$ does not dominate the emptiness of the split.

\sisetup{
    propagate-math-font = true,
    scientific-notation = false,
    output-exponent-marker = e,
    separate-uncertainty = true,
    retain-zero-uncertainty = true,
    text-series-to-math = true,
    uncertainty-mode = separate,
    input-open-uncertainty = (,
    input-close-uncertainty = ),
    group-separator={,},
    group-digits=integer,
    group-minimum-digits=3,
}

\begin{table*}[t]
    \caption{Clustering quality results in terms of the standard metrics, inertia, silhouette score, and clustering accuracy, as well as the KMeans distance. All algorithms are evaluated on two real data sets that are difficult to cluster, as indicated by the silhouette score. Each algorithm receives the reported number of classes as input for the number of classes, except for our proposed algorithm \DPM, which aims to determine the number of clusters on its own. 
    The highlights per row indicate the best-performing algorithm excluding the non-private baseline.}
    \label{tab:kopt_ablation}
    \centering
\begin{adjustbox}{width=1\textwidth}
\begin{tabular}{@{} @{\hspace*{0.1em}}l @{\hspace*{0.3em}}l @{\hspace*{1.5em}}r @{\hspace*{0.4em}}l @{\hspace{3.5em}}r @{\hspace*{0.4em}}l r @{\hspace*{0.4em}}l r @{\hspace*{0.4em}}l r @{\hspace*{0.4em}}l @{}}
\toprule[1pt]
  &  & \multicolumn{2}{c}{\hspace*{-4em}\textsc{KMeans (non-private)}} &  \multicolumn{2}{c}{\hspace*{-1em}\textsc{DP-Lloyd}}  &         \multicolumn{2}{c}{\hspace*{-1em}\textsc{EM-MC}}  &  \multicolumn{2}{c}{\hspace*{0em}\textsc{LSH-Splits}}  &       \multicolumn{2}{c}{\hspace*{0em}\textsc{DPM (ours)}}  \\
  \cmidrule(l{-0.5em}r{3.5em}){3-4} \cmidrule(l{-0.5em}r){5-6} \cmidrule(lr{0.5em}){7-8} \cmidrule(rl){9-10} \cmidrule(rl){11-12}
  & & mean & \multicolumn{1}{c}{\hspace*{-4.0em}std.} & mean & \multicolumn{1}{c}{\hspace*{-0.0em}std.} & mean & \multicolumn{1}{c}{\hspace*{-2em}std.} & mean & \multicolumn{1}{c}{\hspace*{-0.5em}std.} & mean & \multicolumn{1}{c}{\hspace*{-0.0em}std.} \\
\midrule[1pt]
\multirow{4}{*}{\rotatebox[origin=c]{90}{\textsf{UCI Letters}}} & Silh. Score ($\uparrow$) &  \num{0.15} & $\pm$ \num{0.00} &       \textbf{\num{0.09}} & \textbf{$\pm$} \textbf{\num{0.02}} &       \textbf{\num{0.09}} & \textbf{$\pm$} \textbf{\num{0.02}} &      \num{0.07} & $\pm$ \num{0.01} &       \num{0.05} & $\pm$ \num{0.01} \\
& Accuracy ($\uparrow$) &    \num{0.28} & $\pm$ \num{0.01} &       \num{0.13} & $\pm$ \num{0.02} &       \num{0.11} & $\pm$ \num{0.02} &      \textbf{\num{0.24}} & \textbf{$\pm$} \textbf{\num{0.02}} &      \num{0.20} & $\pm$ \num{0.02} \\
& Inertia ($\downarrow$) &  \num{5.8e+05} & $\pm$ \num{2.8e+03} & \num{1.1e+06} & $\pm$ \num{6.4e+04} & \num{1.2e+06} & $\pm$ \num{6.7e+04} & \textbf{\num{8.5e+05}} & \textbf{$\pm$} \textbf{\num{2.6e+04}} & \num{9.5e+05} & $\pm$ \num{3.3e+04} \\
& Norm. KMeans Dist. ($\downarrow$) &    \num{0.00} & $\pm$ \num{0.00} &      \num{0.20} & $\pm$ \num{0.01} & \num{23.81} & $\pm$ \num{67.34} &      \textbf{\num{0.07}} & \textbf{$\pm$} \textbf{\num{0.00}} &       \num{0.10} & $\pm$ \num{0.01} \\
\midrule
\multirow{4}{*}{\rotatebox[origin=c]{90}{\textsf{UCI Gas}}} & Silh. Score ($\uparrow$) &  \num{0.37} & $\pm$ \num{0.00} &      -\num{1.00} & $\pm$ \num{0.00} &      -\num{1.00} & $\pm$ \num{0.00} &      \num{-0.08} & $\pm$ \num{0.56} &       \textbf{\num{0.13}} & \textbf{$\pm$} \textbf{\num{0.27}} \\
& Accuracy ($\uparrow$) &  & N/A    &    & N/A &         & N/A &       & N/A &       & N/A \\
& Inertia ($\downarrow$) &  \num{2.5e+07} & $\pm$ \num{1.0e+01} & \num{5.1e+07} & $\pm$ \num{5.7e+06} & \num{4.0e+07} & $\pm$ \num{5.0e+05} & \num{1.0e+08} & $\pm$ \num{2.7e+07} & \textbf{\num{3.7e+07}} & \textbf{$\pm$} \textbf{\num{1.6e+07}} \\
& Norm. KMeans Dist. ($\downarrow$) &    \num{0.00} & $\pm$ \num{0.00} &    \num{0.25} & $\pm$ \num{0.02} &      \textbf{\num{0.00}} & \textbf{$\pm$} \textbf{\num{0.00}} &    \num{0.02} & $\pm$ \num{0.00} &     \num{0.01} & $\pm$ \num{0.00} \\
\bottomrule[1pt]
\end{tabular}   
\end{adjustbox}
\end{table*}

In \Cref{ssec:metricWeights_discussion}, we estimate a minimal emptiness of $\ge 0.8$ for the setting that the distribution of the points in the current subset $S$ follow a Gaussian. Therefore, we first analyse the probability that a selected split is in the inner quantile. Afterwards, assuming the selected split is in the inner quantile, we give the bounds on its emptiness compared to the optimal split.  In other words, we consider the case where the centreness term does not lead \DPM astray.

\paragraph{\DPM chooses a central split.} 
We analysed the probability that if the selected split is central, it has a high emptiness. Thus, we gave bounds for the conditional probability (condition $B$: \DPM chooses a central split). Now, to get the probability that a central split with high emptiness is selected, by the law of total probability, it suffices to consider the case where there is a good split within the $t'$-central split $W_{\ge t'}$: Given two events $A$ and $B$:
\begin{align*}
    &\Pr[A] = \Pr[A \mid B] \cdot \Pr[B]
    + \Pr[A \mid \lnot B] \cdot \Pr[\lnot B]\\
    \Rightarrow & \Pr[A] \ge \Pr[A \mid B] \cdot \Pr[B]
\end{align*}
With the analysis of the probability that 
    ``\DPM chooses a $t'$-central split, i.e., $\wdw\in W_{\ge t'}$.'' (event $B$)
occurs, it suffices to analyse the probability that
``\DPM chooses a split with emptiness larger than a bound $v$'' (event $A$)
under the condition that a $t'$-central split $\wdw$ ($\wdw \in W_{\ge t'}$) was chosen (event $B$). 
Put differently, we consider the case where the centreness term does not lead \DPM astray.
Let $W_{\ge t'}:= \{s \mid \centF{S}{\tilde n}{\wdw} \ge t'\}$ be the set of all central splits in the current subset $S$ for which the centreness is at least $t'$ (for some $\tilde n$).
For any $t'\in[0,1]$, the probability to choose a $t'$-central split $\wdw \in W_{\ge t'}$ is proportional to the number $|W_{\ge t'}|$ of $t'$-central splits in $W_{\ge t'}$ and the spread of the data. We formalise the spreadedness of the data with the minimal emptiness of any split $\wdw \in W$. In \Cref{sec:hyper_estimation} we show that if all data is distributed as one big Gaussian, the emptiness of a central split is at least $0.8$ for the first split.
\begin{lemma}[Central quantile is chosen]
    \label{lem:prob_innerQuantileSelected_eventA}
    Let $S\subseteq D$ be the current subset and $W$ the set of all split candidates and $W_{\ge t'}:= \{s \mid \centF{S}{\tilde n}{\wdw} \ge t'\}$ for any $t'\in [0,1]$.
    Let $e_{\min} := \min_{\wdw \in W} \emptF{S}{\tilde n}{\wdw}$ be the minimal emptiness over all splits $\wdw \in W$.
    The score of every split candidate can be represented as $\alpha \cdot e_{\min} + t' + \ln a_s$ (for some $a_s \ge 1$). Then with $
        L_{\ge t'} = \textstyle\sum_{\wdw \in W_{\ge t'}} a_\wdw 
    $ and $
        \textstyle L_{< t'}  = \sum_{\wdw \in W_{< t'}} a_s
    $, we know
    \begin{align*}
        &\Pr[\wdw \in W_{\ge t'}] = 
        \frac{%
                1%
            }{%
                \frac{%
                    L_{< t'}%
                }{L_{\ge t'}} + 1%
            }
        \text{.}
    \end{align*}
\end{lemma}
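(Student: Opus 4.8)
The plan is to unfold the definition of the Exponential Mechanism (\Cref{def:expMech}) and use the elementary fact that its output distribution is invariant under multiplying all candidate weights by a common positive constant. Write $\wdw$ for the draw $M_{E}(S,\score,\varepsilon)$. Since the events $\{M_{E}=\wdw\}$ for distinct candidates are disjoint and their union over $W_{\ge t'}$ is $\{M_{E}\in W_{\ge t'}\}$, summing the pmf of \Cref{def:expMech} gives
\[
\Pr[\wdw \in W_{\ge t'}] \;=\; \frac{\sum_{\wdw \in W_{\ge t'}} \exp\!\big(\varepsilon\,\scoreF{S}{\tilde n}{\wdw}/\Delta_{\score}\big)}{\sum_{\wdw' \in W} \exp\!\big(\varepsilon\,\scoreF{S}{\tilde n}{\wdw'}/\Delta_{\score}\big)}\text{.}
\]

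The key step is to factor out a baseline term common to all candidates. Define $a_\wdw := \exp\!\big(\varepsilon(\scoreF{S}{\tilde n}{\wdw} - \alpha e_{\min} - t')/\Delta_{\score}\big)>0$; this is exactly the statement that each score is representable as $\scoreF{S}{\tilde n}{\wdw} = \alpha e_{\min} + t' + \ln a_\wdw$, up to the fixed positive rescaling by $\varepsilon/\Delta_{\score}$ that I absorb into $a_\wdw$. For a central split $\wdw \in W_{\ge t'}$ we have $\centF{S}{\tilde n}{\wdw} \ge t'$ by definition of $W_{\ge t'}$ and $\emptF{S}{\tilde n}{\wdw} \ge e_{\min}$ by minimality of $e_{\min}$, hence $\scoreF{S}{\tilde n}{\wdw} = \centF{S}{\tilde n}{\wdw} + \alpha\emptF{S}{\tilde n}{\wdw} \ge t' + \alpha e_{\min}$ and therefore $a_\wdw \ge 1$ on $W_{\ge t'}$ (and $a_\wdw>0$ in general). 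With this notation every exponential weight equals the constant $\exp\!\big(\varepsilon(\alpha e_{\min}+t')/\Delta_{\score}\big)$ times $a_\wdw$, so that constant cancels between numerator and denominator, leaving
\[
\Pr[\wdw \in W_{\ge t'}] \;=\; \frac{\sum_{\wdw\in W_{\ge t'}} a_\wdw}{\sum_{\wdw\in W_{\ge t'}} a_\wdw + \sum_{\wdw\in W_{< t'}} a_\wdw} \;=\; \frac{L_{\ge t'}}{L_{\ge t'} + L_{< t'}}\text{,}
\]
and dividing numerator and denominator by $L_{\ge t'}$ yields the claimed identity $\Pr[\wdw \in W_{\ge t'}] = 1/\big(\tfrac{L_{<t'}}{L_{\ge t'}}+1\big)$.

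I do not expect a genuine technical obstacle: the argument is a rewriting of the Exponential Mechanism's pmf. The two points deserving a word of care are (i) that $W_{\ge t'}\neq\emptyset$, so $L_{\ge t'}>0$ and the right-hand side is well-defined — this holds because a split candidate lies close to the median, which by \Cref{def:cent} attains centreness near $1\ge t'$ (and in the degenerate case $W_{\ge t'}=\emptyset$ both sides are $0$); and (ii) that the baseline $\alpha e_{\min}+t'$ is chosen precisely so that the residual weights $a_\wdw$ are at least $1$ on the central splits — this is the property that lets the subsequent law-of-total-probability argument combine this lemma with \Cref{lem:utility_inner_quantile} to turn a bound on the ratio $L_{<t'}/L_{\ge t'}$ into a lower bound on the probability that \DPM draws a central split with near-optimal emptiness.
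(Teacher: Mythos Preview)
Your proof is correct and follows essentially the same route as the paper: both write the event probability as the ratio of summed exponential weights from \Cref{def:expMech}, represent each score as the common baseline $\alpha e_{\min}+t'$ plus a residual $\ln a_\wdw$, factor out and cancel the common constant $P=\exp((\alpha e_{\min}+t')\cdot\varepsilon/(2\Delta_\score))$, and simplify $L_{\ge t'}/(L_{\ge t'}+L_{<t'})$ to the claimed form. Your explicit remark that the scaling factor $\varepsilon/\Delta_\score$ must be absorbed into $a_\wdw$ for the identity to hold literally is a useful clarification that the paper's proof leaves implicit.
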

\begin{proof}[Proof sketch]
    The probability that a chosen split is in the $t'$-central split, is the sum of the scores of all $t'$-central candidates divided by the sum of all split candidates. This can be simplified to $\frac{1}{L_{<t'}/L_{\ge t'} + 1}$ where the $L$ terms hold the sum of differences to the minimal score of a $t'$-central split candidate. For $L_{<t'}$ the sum is over all not $t'$-central split candidates and for $L_{\ge t'}$ over all $t'$-central split candidates. The full proof of \cref{lem:prob_innerQuantileSelected_eventA} is in \cref{SSec:UtilityProof_innerQuantile_eventA}.
\end{proof}
To approximate this probability, we give an upper bound on $L_{<t'}$ and a lower bound on $L_{\ge t'}$. Let $s^*$ be the optimal split with $\centF{S}{\tilde n}{\wdw^*}\ge t'$ and $\emptF{S}{\tilde n}{\wdw^*} \ge v$, $W_{OPT_\omega}$ be the near-optimal split candidates with score at least $\scoreF{S}{\tilde n}{\wdw^*} - \omega $.
We get an untight approximation that takes into account the number of $t'$ central splits and $B=\varepsilon/(2\Delta_f)$ we get:
\begin{align*}
        &\Pr[M_E(S,\score, \varepsilon)\in W_{\ge t'}] \\
        &\ge \frac{%
                1%
            }{%
                \frac{%
                    e^{(OPT_\omega-t' -\alpha \empt_{\min})B}\left(|W_{< t'}\setminus W_{OPT_\omega}|  + |W_{< t'} \cap W_{OPT_\omega}|e^{\omega B}\right)
                }{|W_{\ge t'}\setminus W_{OPT_\omega}| + | W_{\ge t'}\cap W_{OPT_\omega}| e^{(OPT_\omega -t'-\alpha \empt_{\min})B}} + 1%
            }
\end{align*}
More details on this approximation are in \cref{SSec:UtilityProof_innerQuantile_eventA}.

The untight approximation clearly illustrates that in the worst, a large relative emptiness weight $\alpha$ causes \DPM not to split in a central quantile whp. As a result, \DPM would need more split levels to terminate. To this end, our calibration of the relative emptiness weight $\alpha$ is chosen conservatively (cf. \Cref{sec:hyper_estimation}).
If the data set is distributed as a clearly separable Gaussian mixture with Gaussians that are not too far apart in magnitude, we have one of two effects: either there are $t'$-central splits with high emptiness (and thus high score) or the data points are not too much clumped in one split interval close to the $1-2q$-quantile, i.e., $t'$ is close to $t$. Thus, $L_{\ge t'}$ is far larger than $L_{< t'}$ for a $t'$ close to $t$ and the bound for a split to be chosen from $W_{\ge t'}$ approaches $1$.
As our experiments show that \DPM finds a clustering result that is close to the KMeans clustering result, the chosen splits $\wdw$ cannot have been far away from main bulk of the data, i.e., must have been a $t'$-central split ($\wdw \in W_{\ge t'}$)  for some $t'$ close to $t$.

\subsection{\DPM Terminates Appropriately}\label{ssec:dpm_terminates}
Next, we consider the case that there is no good split in any dimension. We show that whp. the algorithm halts before the maximum recursion depth is reached. The probability that \DPM halts, depends on the ratio $L_{<t'}/L_{\ge t'}$ which is bounded by the ratio of $t'$-central splits and not $t'$-central splits as well as their bounds for the score.
\begin{lemma}[Outer quantile is chosen]
    \label{lem:prob_innerQuantileSelected}
    Let $S\subseteq D$ be the current set with noisy count $\tilde n$ and $W$ the set of all split candidates and $W_{\ge t'}:= \{s \mid \centF{S}{\tilde n}{\wdw} \ge t'\}$.
    Let $e_{\min} := \min_{\wdw \in W} \emptF{S}{\tilde n}{\wdw}$ be the minimal emptiness over all splits $\wdw \in W$. The score of every split candidate can be represented as $\alpha \cdot e_{\min} + t' + \ln a_s$ (for some $a_s \ge 1$). Then with $
        L_{\ge t'} = \textstyle\sum_{\wdw \in W_{\ge t'}} a_\wdw 
    $ and $
        \textstyle L_{< t'}  = \sum_{\wdw \in W_{< t'}} a_s
    $, we know
    \begin{align*}
        &\Pr[\wdw \in W_{< t'}] = 
        \frac{%
                1%
            }{%
                \frac{%
                    L_{< t'}%
                }{L_{\ge t'}} + 1%
            }
        \text{.}
    \end{align*}
\end{lemma}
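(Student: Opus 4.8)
This lemma is the complement of \Cref{lem:prob_innerQuantileSelected_eventA} --- since $W_{\ge t'}$ and $W_{<t'}$ partition the candidate set $W$ --- so I would prove it by the same direct calculation with the two sets interchanged. The starting point is the probability mass function of the Exponential Mechanism (\Cref{def:expMech}) that \DPM invokes to select a split, with scoring function $\score$, sensitivity $\Delta_{\score}$ and budget $\varepsilon$. Set $B := \varepsilon/\Delta_{\score}$ and, for each candidate $u \in W$, let $a_u := \exp\!\big(B(\scoreF{S}{\tilde n}{u} - \alpha e_{\min} - t')\big)$; then $\scoreF{S}{\tilde n}{u}$ has exactly the stated form $\alpha e_{\min} + t' + \tfrac1B\ln a_u$ (once the inverse temperature is absorbed into $a_u$), and $a_u \ge 1$ holds automatically for central splits $u \in W_{\ge t'}$ (centreness $\ge t'$, emptiness $\ge e_{\min}$) while for the remaining splits it is the lemma's hypothesis --- the computation itself only needs $a_u > 0$.

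The key steps are then purely algebraic. By \Cref{def:expMech},
\begin{align*}
\Pr[\wdw \in W_{<t'}]
&= \frac{\sum_{u \in W_{<t'}} e^{B\,\scoreF{S}{\tilde n}{u}}}{\sum_{u \in W} e^{B\,\scoreF{S}{\tilde n}{u}}}
= \frac{\sum_{u \in W_{<t'}} a_u}{\sum_{u \in W} a_u} \\
&= \frac{L_{<t'}}{L_{<t'} + L_{\ge t'}}\,,
\end{align*}
where in the first step the common constant $e^{B(\alpha e_{\min} + t')}$ cancels between numerator and denominator, and in the last step the denominator is split along $W = W_{\ge t'} \cup W_{<t'}$ using the notation $L_{\ge t'} = \sum_{u \in W_{\ge t'}} a_u$ and $L_{<t'} = \sum_{u \in W_{<t'}} a_u$ from the statement. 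Dividing numerator and denominator by $L_{\ge t'}$ puts this in the closed-form ratio asserted by the lemma; equivalently, it equals $1 - \Pr[\wdw \in W_{\ge t'}]$ by \Cref{lem:prob_innerQuantileSelected_eventA}.

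There is no genuine obstacle --- the claim is an exact identity for the Exponential Mechanism's selection probability. The two points needing care are (i) fixing the representation $\alpha e_{\min} + t' + \ln a_u$ precisely, in particular deciding whether the inverse temperature $B = \varepsilon/\Delta_{\score}$ is absorbed into $a_u$ or carried explicitly, and observing that $a_u \ge 1$ is free for central splits but an assumption for the rest; and (ii) tracking which of the ratios $L_{<t'}/L_{\ge t'}$ and $L_{\ge t'}/L_{<t'}$ ends up in the denominator, since the partition forces $\Pr[\wdw \in W_{<t'}]$ to be exactly the complement of the expression in \Cref{lem:prob_innerQuantileSelected_eventA}. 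I would finish by recording the reading used in \Cref{ssec:dpm_terminates}: when the data is well separated there are many central splits with high score, so $L_{\ge t'} \gg L_{<t'}$ and $\Pr[\wdw \in W_{<t'}]$ is small; once no central high-emptiness split remains, $L_{<t'}$ dominates at each recursion level, so \DPM leaves the outer quantile with small probability and halts whp.\ before reaching depth $\maxdepthRec$.
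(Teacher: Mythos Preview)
Your proposal is correct and matches the paper's approach: the paper's own proof simply records that $\Pr[\wdw\notin W_{\ge t'}]=1-\dfrac{1}{L_{<t'}/L_{\ge t'}+1}$ follows directly from the computation in \Cref{lem:prob_innerQuantileSelected_eventA}, which is exactly your derivation with the two sets swapped. You are also right to flag point~(ii): the displayed formula in the lemma statement is verbatim that of \Cref{lem:prob_innerQuantileSelected_eventA} for the complementary event, a typo that the appendix proof silently corrects to $1-\tfrac{1}{L_{<t'}/L_{\ge t'}+1}=\tfrac{L_{<t'}}{L_{<t'}+L_{\ge t'}}$, agreeing with your computation (so ``dividing by $L_{\ge t'}$'' does not literally reproduce the stated closed form, but your subsequent ``equivalently, $1-\Pr[\wdw\in W_{\ge t'}]$'' is the intended reading).
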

The proof follows the proof of \cref{lem:prob_innerQuantileSelected_eventA}.The full proof of \cref{lem:prob_innerQuantileSelected} is in \cref{Ssec:UtilityProof_innerQuantileSelected}.
To give an approximation as we consider the counter probability, we have to upper bound $L_{\ge t'}$ and lower bound $L_{<t'}$.
For any $t'\in [0,1]$, $0\le\eta\le |W|$, if $\eta \le \frac{|W_{< t'}| \exp((1-e_{min})\varepsilon/(2\Delta_\score))}{|W_{\ge t'}|\exp((1+\alpha)\varepsilon/(2\Delta_\score))}$ we have 
    \begin{align*}
        &\textstyle \Pr[\wdw \not\in W_{\ge t'}] \ge 
        1 - \frac{%
                1%
            }{%
                \eta + 1%
            }
        \text{,}
    \end{align*}
where $\eta$ describes the ratio between the number of $t'$-central and not $t'$-central split candidates as well as the ratio between the approximated and bounded scores. More details on this approximation is in \cref{Ssec:UtilityProof_innerQuantileSelected}.

\section{Evaluation}
\label{sec:eval}

To demonstrate that \DPM outperforms state-of-the-art clustering algorithms in terms of the standard metrics, inertia, silhouette score, and clustering accuracy while obtaining a clustering result close to that of a non-private baseline, we conduct an evaluation on synthetic and real-life data sets. First, we describe the experimental setup and provide details on the implementation of \DPM. Then, we present the results and argue that \DPM goes an important step towards being hyper-parameter free.
Additional experiments are omitted to the full version.

\subsection{Experimental Setup}

We evaluate \DPM for two synthetic and four real data sets (\Cref{tab:dts}). We generate synthetic data sets by sampling the same number of data points from multiple Gaussians, one for each cluster, for $10$ and $100$ dimensions, respectively. We obtain two real data sets by training a convolutional neural network (CNN) on the image data sets MNIST and MNIST fashion, respectively, and extract embeddings from the second-to-last layers. In addition, we use the data sets UCI Letters and UCI Gas Emissions, which consist of categorical and sensory measurements. Since the number of classes for UCI Gas Emissions is unknown, we let the non-private KMeans clustering algorithm evaluate multiple number of clusters and use those that maximises the silhouette score.

To compare \DPM with prior work, we use the following algorithms. 
The KMeans++ clustering algorithm serves as a non-private baseline and the DP-Lloyd algorithm serves as a privacy-preserving baseline. We include two other privacy-preserving clustering algorithms that form the state-of-the-art. To the best of our knowledge EM-MC~\cite{emmc21} provides the best formal guarantees and LSH-Splits~\cite{lshsplits2021} achieves the best empirical results. LSH-Splits requires the data to be centred. Therefore, we perform an additional privacy-preserving centring of the data when using LSH-Splits. We run each algorithm on each parameter configuration $20$ times and report the average metrics and the standard deviation. The corresponding range $R$ of each data set can be obtained from the meta-data of the data set or, in the case of the embeddings, by introducing a clipping layer right before the output. 

The privacy budget is $\varepsilon = 1$ and $\delta = (\tilde{n}_D \cdot \sqrt{\tilde{n}_D})^{-1}$ in all experiments.
For \DPM we distribute the privacy budget as follows: $\varepsilon = 0.04 \epsWdwSize + 0.18 \epsCount + 0.18 \epsExpMech + 0.6 \epsAverage$ and $\delta = 0.2 \delta_{\text{cnt}} + 0.8 \delta_{\text{avg}}$. 
The maximum recursion depth is $\maxdepthRec = 7$ and the minimum number of elements is $\minNumElem = \tilde{n}_D / 2^\maxdepthRec$. For the centreness, we set $t=0.3$ and $q=\frac{1}{12}$. For the other algorithms, we use the hyper-parameters from the original work.

\paragraph{Implementation Details}
To apply KMeans, we use the implementation from scikit-learn~\cite{scikit-learn} and we make use of the DiffPrivLib library~\cite{diffprivlib} for DP-Lloyd. The library~\cite{repoGoogle} provides an implementation for LSH-Splits and private averaging. The authors of EM-MC~\cite{emmc21} provide an implementation. Due to major inconsistencies to the original work, however, we re-implemented EM-MC.

\subsection{Evaluating the Clustering Quality}
\label{ssec:fixed_k}
When evaluating the clustering quality with the standard metrics, inertia, silhouette score, and clustering accuracy, a specific problem arises. The three standard metrics fail to adequately assess the quality of a clustering result due to two issues. First, inertia and clustering accuracy both profit from a growing number of clusters. However, a clustering result with many cluster centres does not imply that the clusters represent the underlying structure of the data set well. Second, the silhouette score considers distances inside a cluster and distances to the closest cluster. However, clusters that are far away from all other clusters and contain only a handful of data points have a marginal influence on the silhouette score, although such clusters do not contribute much to the clustering quality. Hence, even if privacy-preserving clustering algorithms output a clustering result of high quality in terms of the standard metrics, the clustering result can significantly differ from a non-private baseline like KMeans. 

To measure the discrepancy between the clustering of a non-private and a privacy-preserving clustering algorithm, we first conduct a non-private hyper-parameter search with the KMeans clustering algorithm to find multiple sets of clusters that maximise the silhouette score to reduce randomness. In our experiments we found that a comparison with 10 runs is already stable, however we pick the top $20$ KMeans runs and then compute the distance from each of the cluster centres in the private clustering $C_{\text{priv}}$ to the closest KMeans cluster centre for each of the stored KMeans-runs.
\begin{definition}[KMeans Distance]
\label{def:ksim}
Given $\ell$ sets of cluster centres $C_0, \dots, C_{\ell}$, obtained from the non-private KMeans clustering algorithm after a hyper-parameter search that optimises for the silhouette score and a clustering result $C_{\text{priv}}$, obtained in a privacy-preserving way, then the KMeans distance is computed as follows:
\[
\textstyle KD(C_{\text{OPT}}, C) = \frac{1}{\ell} \sum_{i = 0}^{\ell - 1} \sum_{c \in C_{\text{priv}}} \min_{c' \in C_i} ||  c - c' ||_2.
\]
with $||.||_2$ being the Euclidean norm.
\end{definition}
The KMeans distance is explicitly designed to be used in conjunction with the standard clustering quality metrics during the evaluation. We take the reported number of classes per data set from \Cref{tab:dts}. 
The results averaged over $20$ runs of each algorithm and for each data set are reported in \Cref{tab:kopt_main} and \Cref{tab:kopt_ablation}. We separate the results for the UCI data sets because both do not have well separated clusters, as implied by the silhouette score and are therefore difficult to cluster.
It can be seen that \DPM achieves state-of-the-art utility even in the case of not well separated clusters. Note that \DPM does not use the reported number of classes as input, but produces a clustering of a size close to the ground truth. In addition, the clustering result of \DPM is close to the non-private baseline in terms of the KMeans distance.

\subsection{Running Time}
\begin{figure}[t]
    \centering
    \includegraphics[width=\columnwidth]{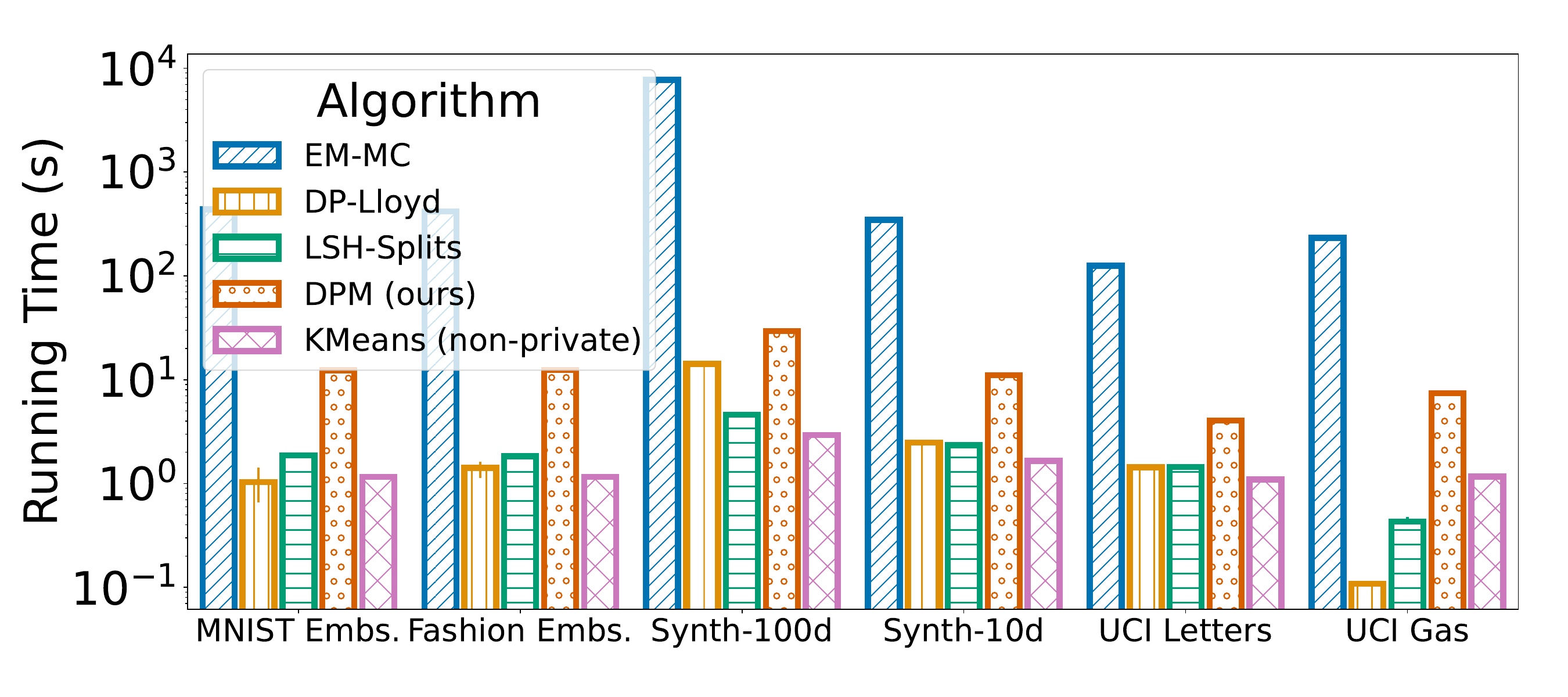}
    \caption{Evaluation of the running time (in seconds) of all algorithms including the non-private baseline averaged over $10$ runs. As shown in \cref{ssec:timeComplexityDPM}, the running time of \DPM increases significantly with a large number of dimensions and slightly with an increasing number of data points. In general, the running time of \DPM remains competitive with the other clustering algorithms.}
    \label{fig:runningTime_eval}
\end{figure}
\begin{figure}[t]
    \centering
    \includegraphics[width=\columnwidth]{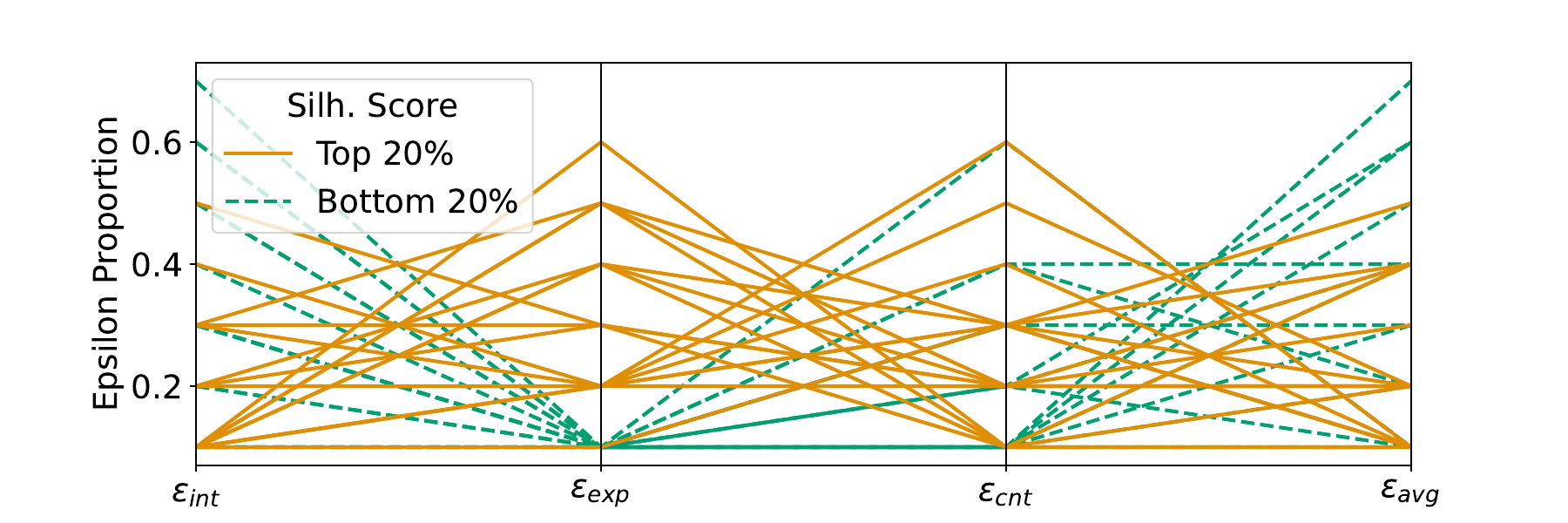}
    \includegraphics[width=\columnwidth]{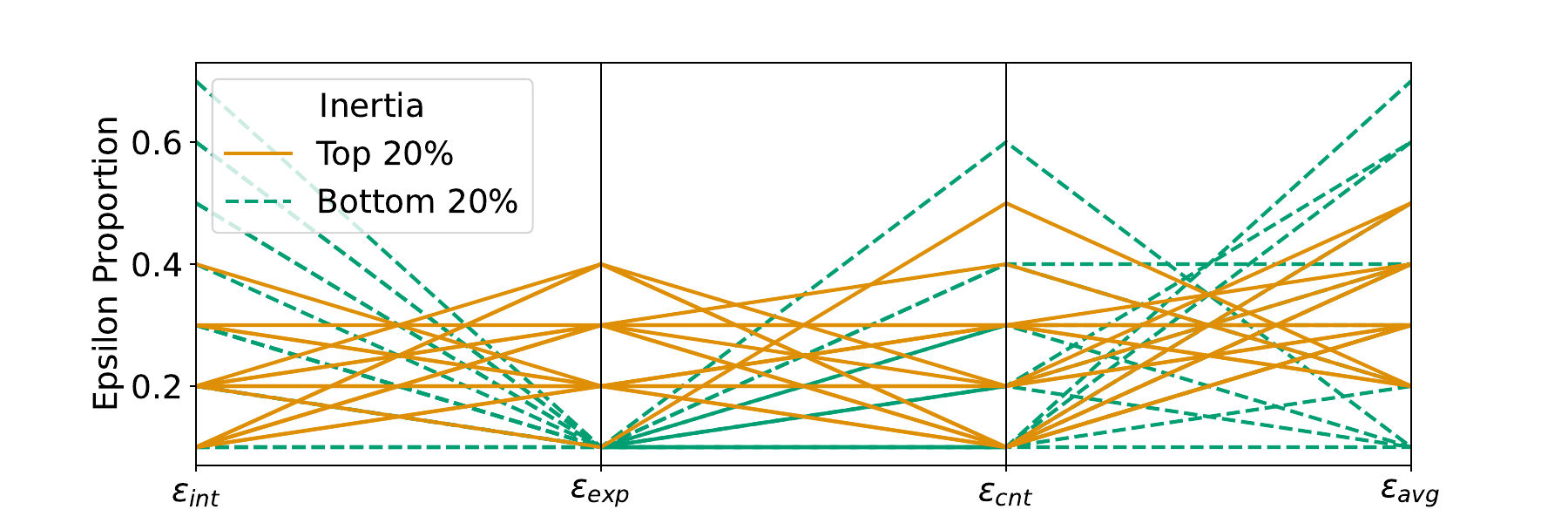}
    \includegraphics[width=\columnwidth]{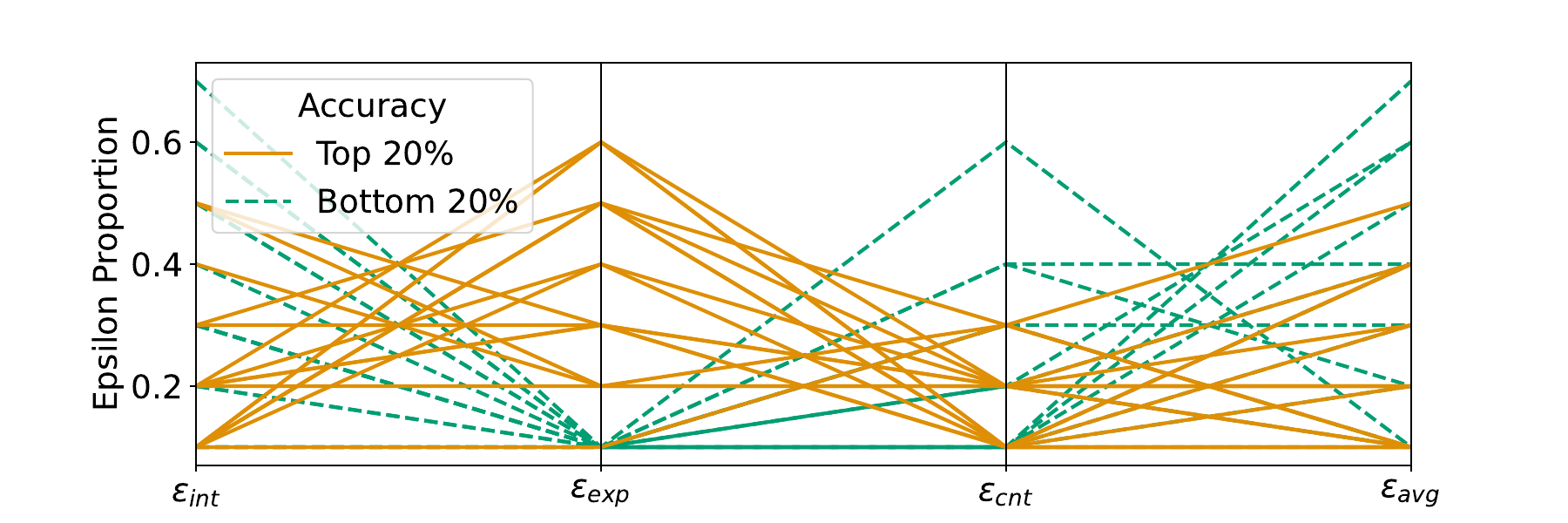}
    \includegraphics[width=\columnwidth]{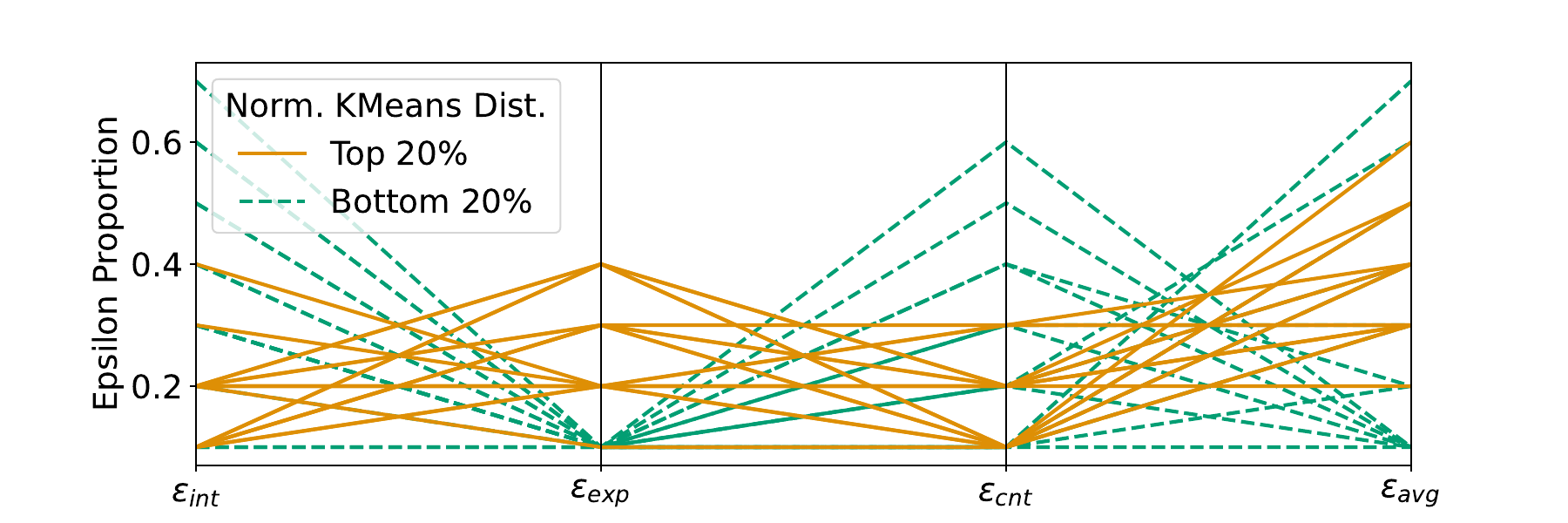}
    \caption{Evaluation of the privacy budget distribution of \DPM with respect to different clustering metrics for the MNIST Embs. data set. Each plot shows the upper (green) and lower (orange) $20\%$ of privacy budget distributions, averaged over $10$ runs with $\epsWdwSize + \epsExpMech + \epsCount + \epsAverage = 1$. Each metric prefers a slightly different privacy budget distribution, thus for the experiments we choose a distribution that performs well for all metrics and that is not specific for any data set.}
    \label{fig:epsDist_eval}
\end{figure}
\begin{figure*}[t]
    \centering
    \begin{subfigure}{0.19\textwidth}
        \includegraphics[width=\linewidth]{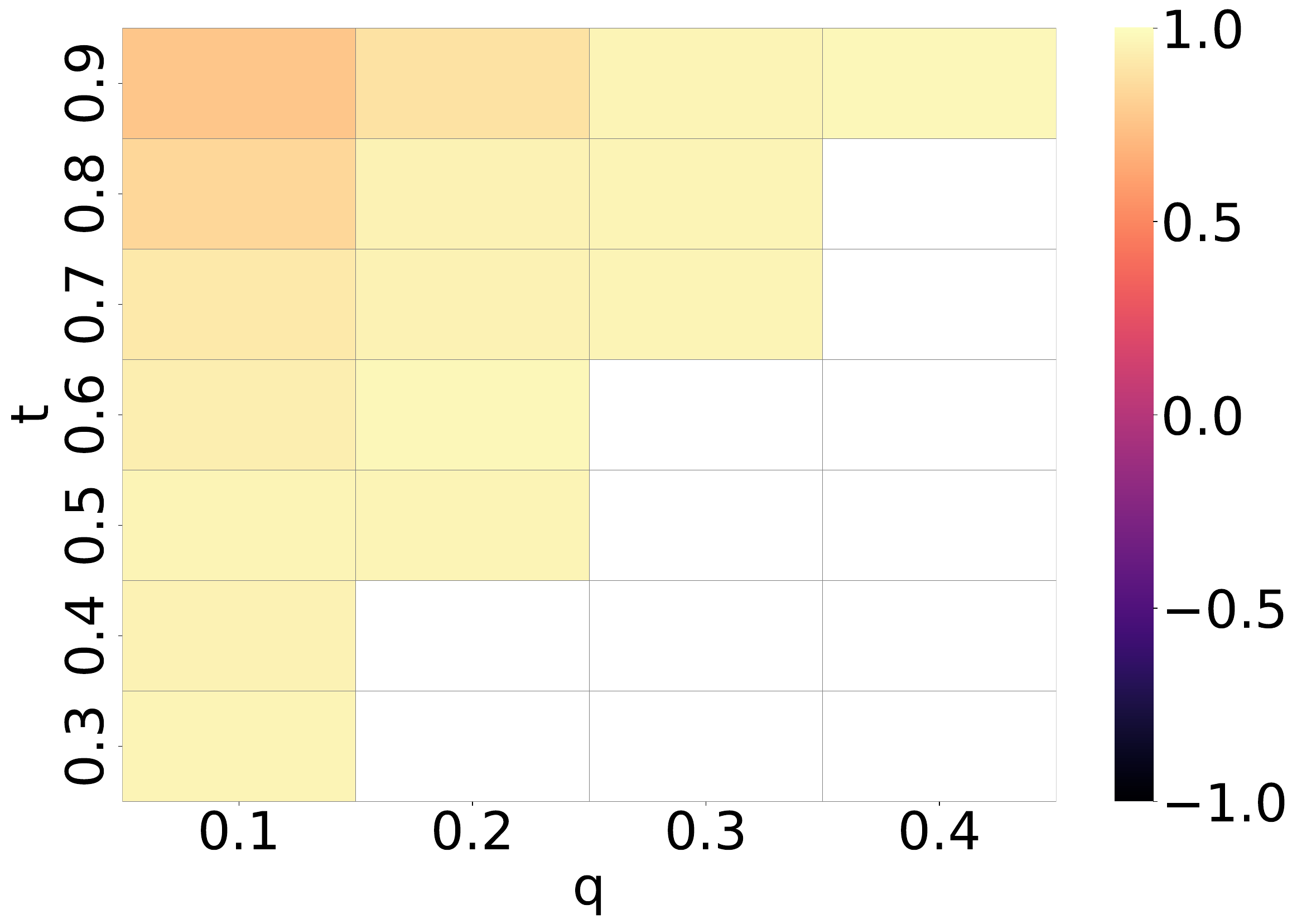}
        \caption{Silhouette Score ($\uparrow$)}
    \end{subfigure}\hfill
    \begin{subfigure}{0.19\textwidth}
        \includegraphics[width=\linewidth]{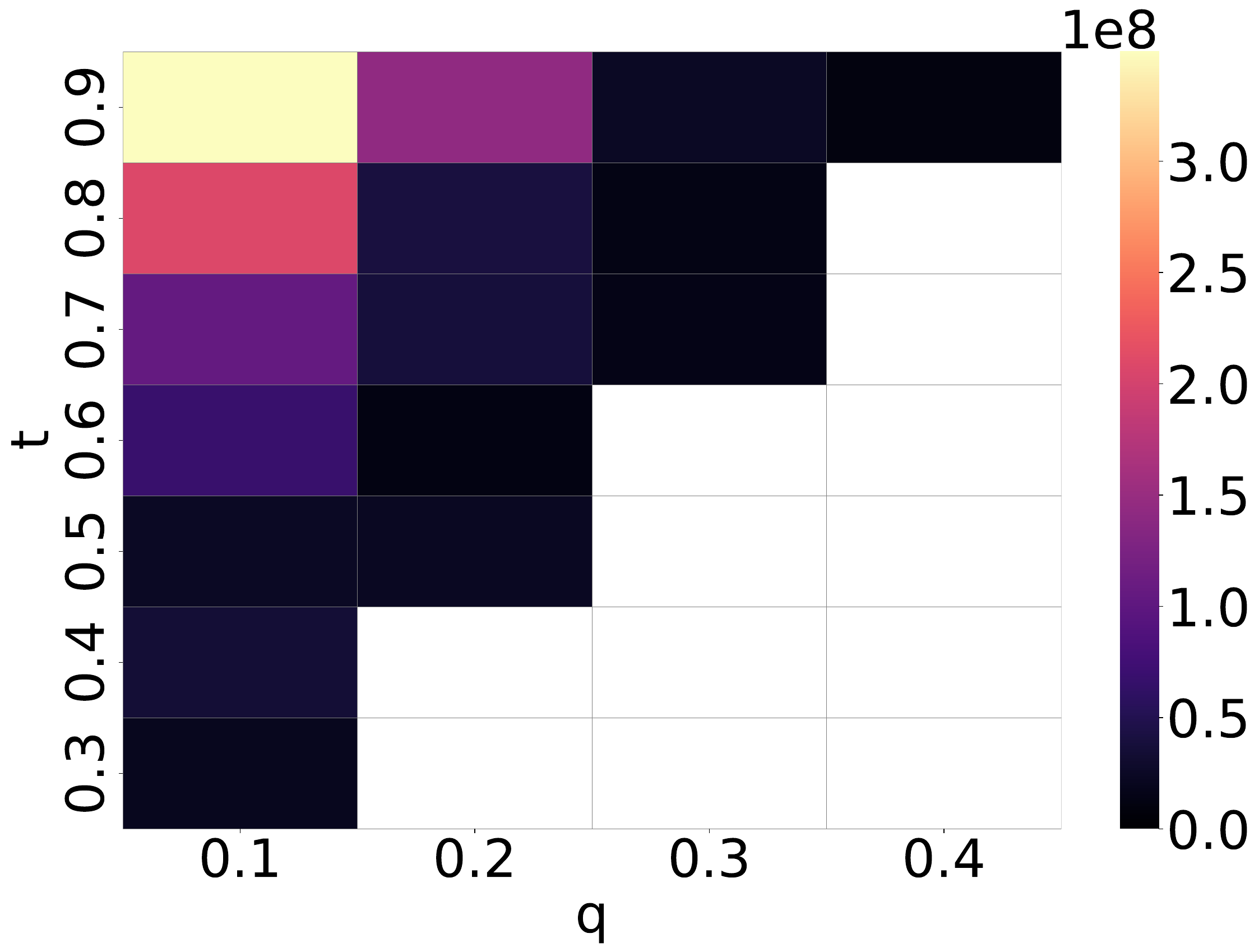}
        \caption{Inertia ($\downarrow$)}
    \end{subfigure}\hfill
    \begin{subfigure}{0.19\textwidth}
        \includegraphics[width=\linewidth]{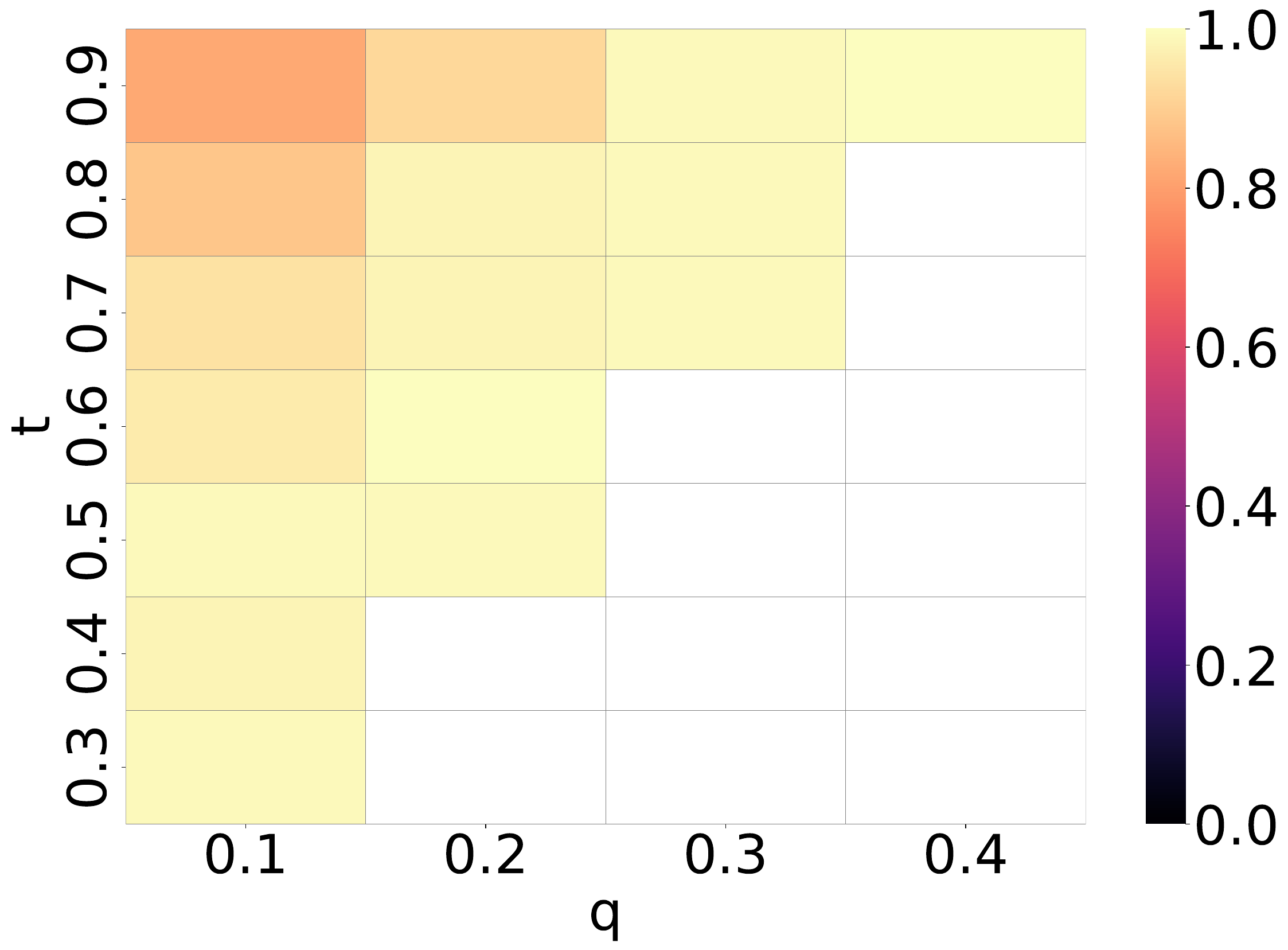}
        \caption{Accuracy ($\uparrow$)}
    \end{subfigure}\hfill
    \begin{subfigure}{0.19\textwidth}
        \includegraphics[width=\linewidth]{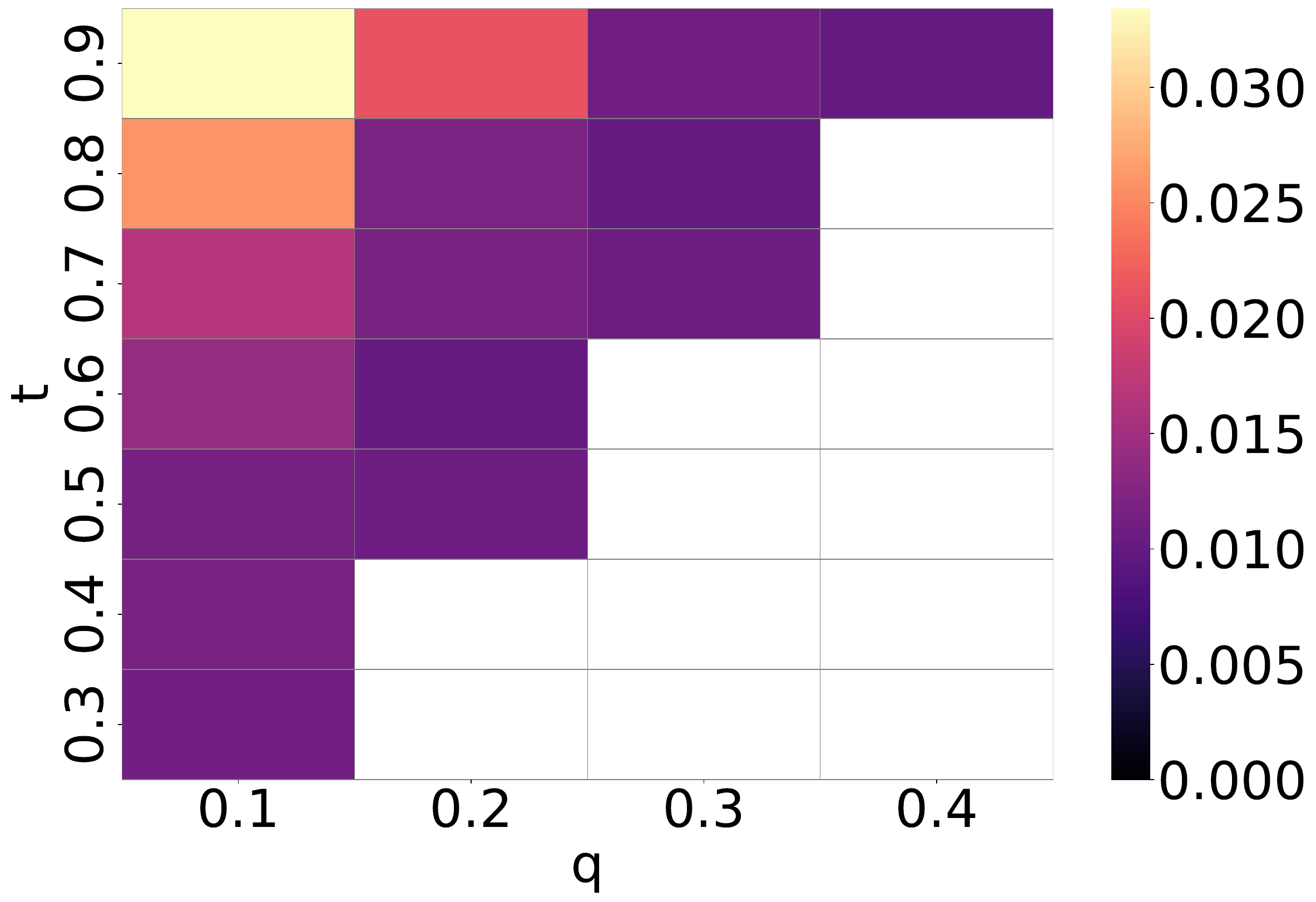}
        \caption{KMeans Dist. ($\downarrow$)}
    \end{subfigure}\hfill
    \caption{Evaluation of the effect of the centreness parameters $t, q$ on the utility of \DPM for $t \ge 2q$ evaluated using the Synth-10d dataset. Varying $t$ or $q$ does not change the clustering result of \DPM by much. On the contrary, only a specific selection reduces utility. This effect is the same for all the data sets. Thus, the centreness parameters do not require a specific hyper-parameter tuning.}
    \label{fig:cent_eval}   
\end{figure*}
To evaluate the running time of \DPM, we run all the clustering algorithms on multiple data sets and measure the elapsed time. 
We repeat the evaluation $10$ times and report the average running time combined with the standard deviation in \cref{fig:runningTime_eval}.

It can be seen that EM-MC is the slowest algorithm as the running time is in the order of minutes or even hours. 
We do not show results for EM-MC applied to Synth-100d because the running time for multiple runs exceeds our limit of $12$ hours. 
The running times for Means, DP-Lloyd and LSH-Splits are similar.
DP-Lloyd runs fast on UCI Gas. because the cluster centres converge quickly due to the difficulty of clustering and the high noise level. \DPM is slightly slower when compared to the other algorithms except for EM-MC. This is mainly due to \cref{algo:wdwsize}, but also the process of iterating all split candidates per dimension is time consuming. The influence of the number of dimensions can be seen by comparing Synth-100d and Synth-10d. Both have the same number of data points and clusters, but the difference in the number of dimensions leads to a significant increase in running time. The influence of the number of data points becomes apparent when using the UCI and Embs. data sets. Although the UCI Gas. data set has far fewer clusters and dimensions than the UCI Letters data set, the latter takes the least time because it has the smallest number of data points.    
In general, the running time of \DPM is within a few seconds even for high-dimensional data sets with a large number of data points. We would like to point out that the running time can be further improved. Firstly, \cref{algo:wdwsize} can be outsourced to an offline phase, and secondly, the highly recursive nature of \DPM allows for a high degree of parallelisation of each split level.

\subsection{Steps towards Hyper-Parameter Freeness}\label{ssec:hyperparameterfreeness}

\DPM directly incorporates the process of finding a fitting number of clusters, in contrast to other privacy-preserving clustering algorithms that have to perform a privacy-preserving hyper-parameter search for the number of clusters in advance. However, this comparison is only valid if \DPM does not require other data set-calibrated hyper-parameters. 

\DPM has the following hyper-parameters that have to be determined: emptiness-weight $\alpha$, split depth $\maxdepthRec$ and minimum number of elements in a cluster $\minNumElem$, range $R$ of the data set, and the list of sigma candidates \emph{sigmas} for the interval size estimation, the privacy budget distribution, and the centreness parameters $t, q$. 
The emptiness weight $\alpha = 5$ can be derived according to \Cref{sec:hyper_estimation}. The maximum number of splits $\maxdepthRec$ is set to $7$ to allow a maximum of $128 = 2^7$ clusters. Then, $\minNumElem$ is $\tilde{n} / 2^\maxdepthRec$. $R$ can be obtain from the meta data of the data sets without privacy leakage and we choose $\emph{sigmas} = [30]$ independent of the data set.

To show that an appropriate privacy budget distribution can be chosen without a data dependent fine-tuning, we perform an evaluation and present the results in \Cref{fig:epsDist_eval}. The figure shows the $20\%$ of the best and worst performing privacy budget distributions for each metric, exemplarily for the MNIST embs. data set, averaged over $10$ runs. 
It can be seen that one cannot directly derive the best performing configuration, but each epsilon prefers a certain range, which, in combination, helps to select a specific configuration. The privacy budget $\epsWdwSize$ should not consume too much of the total budget, $\epsAverage$ needs a bit more budget, and $\epsCount$ as well as $\epsExpMech$ should be neither too small nor too large. But it also seems that $\epsExpMech \geq \epsCount$ is preferable. In general, we conclude that although each metric has its own preference for how the privacy budget is distributed, there are some configurations that perform well across all metrics and on all data sets. In our evaluation, we use the largest portion of the privacy budget for the averaging, the smallest portion for the interval size estimation, and split the reminder equally between the Exponential and the Laplace Mechanism.

Finally, we show that the centreness-parameters can be chosen data-independently. We evaluate the effect of $t, q$ on all metrics, which is exemplary shown in \Cref{fig:cent_eval}, because the results for the other data sets are similar. Only for a large $t$ and a small $q$ utility degrades, otherwise the parameters $t,q$ nearly have the same effect on all data sets for all metrics. Thus the centreness is not a data-dependent hyper-parameter in our evaluation.

In summary, the choice of hyper-parameters for \DPM does not depend on individual data sets. Thus, unlike the other algorithms, it is not necessary to perform a hyper-parameter search for the number of clusters in advance.

\section{Related Work}
\label{sec:related_work}
Privacy-preserving clustering algorithms, in general, consider the central model of Differential Privacy and primarily focus on minimising the metric inertia~\cite{DPMaxCover21, DPClTightApprox20, DPClConstMult18, DPCl1MultCl17, DPCl1Cluster16, LocDP1Cl18, LocDPClKMeans}. However, a few algorithms~\cite{DPClStabAssump20, DPClOptDPWS18} give theoretical guarantees not in terms of inertia but in terms of the Wasserstein metric. 
Some algorithms work in the local model of Differential Privacy~\cite{LocDP1Cl18, LocDPClKMeans} that assumes a distributed setting.
Some privacy-preserving clustering algorithms work similarly to DP-Lloyd~\cite{DPClLloyd16} and try different strategies of distributing the privacy budget among each step of the algorithm~\cite{DPUtilEffCl21, DPClPBDistContour18}. Then there are some algorithm~\cite{DPMaxCover21, DPClTightApprox20, DPClConstMult18, DPCl1MultCl17} that aim to find accumulations of data points serving as candidates for cluster centres. Sometimes specific assumptions on the data set allow for more efficiency in terms of privacy~\cite{DPClEasyIns21}.
Then, there is the attempt to be more robust when it comes to higher dimensions~\cite{balkan17}. We want to highlight the algorithm EM-MC~\cite{emmc21} because, to the best of our knowledge, it currently gives the best theoretical guarantees in terms of inertia.

The approach of our algorithm \DPM is based on finding hyperplanes to subdivide the data set while keeping clusters intact. This technique is used, for example, to find heavy hitters differentially private~\cite{PrivTree} via data-independent splits and a scoring function to decide whether further splitting is needed. The approach of searching for separating hyperplanes can also be found in supervised learning methods such as SVMs~\cite{ConvexOpt, RiskMinimization} and decision trees~\cite{Maddock}.   

Next, we give a more detailed overview of clustering algorithms that are close to \DPM. 
We start with Optigrid~\cite{Optigrid99}, which is a non-private clustering algorithm. To counter the curse of dimensionality, Optigrid considers each dimension of the data separately and uses a dimension-wise kernel density estimation to find a split through a sparse region. Since Optigrid is a non-private algorithm that uses techniques like kernel density estimation, it cannot directly be transferred to Differential Privacy. 

LSH-Splits~\cite{lshsplits2021} is a privacy-preserving clustering algorithm that reaches state-of-the-art utility results when evaluated on inertia, silhouette score, and clustering accuracy. It subdivides the data set into small subsets by sampling splits randomly through the mean of the data points. 
The resulting subsets yield high utility when synthesising the intended number of clusters via a non-private clustering algorithm. Since splits are bound to the mean of the data set, LSH-Splits is naturally limited in the variety of possible splits.

Mondrian~\cite{mondrian06} is not a clustering algorithm but rather aims to subdivide the data set into equally sized parts. Thereby, it recursively splits the data set by considering each dimension separately and searching for the dimension of the highest variance and splits along the median of that dimension. A similar approach is adopted by our proposed algorithm \DPM. However, \DPM does not only take into consideration the distance to the median but also the number of data points in the vicinity of a split candidate.

\section{Conclusion}
\label{sec:conclusion}
In this work, we propose a differentially private clustering algorithm called \DPM that achieves state-of-the-art performance on the standard clustering metrics inertia, silhouette score and accuracy. The clustering produced by \DPM is also much closer to the clustering produced by the non-private KMeans algorithm. 

\DPM looks for separations of the data set called splits that ultimately preserve clusters of data points. In the process, \DPM generates multiple split candidates for each dimension of the data. All split candidates are scored based on a scoring function that considers the relative position of the split candidate to the median of the data along a dimension (centreness) and the number of data points in the vicinity of a split candidate (emptiness). Finally, \DPM has a high probability of selecting a split with a high score.

\DPM makes an important step towards hyper-parameter freeness. We present a DP algorithm for estimating the granularity of the data, i.e., an approximation of the distance between clusters of data points. In addition, we provide an analytical discussion of balancing various optimisation terms, and show experimentally for other parameters that \DPM's utility performance is insensitive to their choice. Furthermore, \DPM stops if no reasonable separations are found. Thus, no hyper-parameter search for the optimal number of clusters $k$ needs to be performed.

We prove DP guarantees for \DPM and characterise its behaviour under more general assumptions on the input data set. We also provide an extensive empirical evaluation of the clustering quality of \DPM, not only in terms of the standard metrics but also in terms of a novel metric, the KMeans distance.

\paragraph{Acknowledgements.}
The research received funding from the BMBF in the project MLens and from the BMBF and EU in the project AnoMed.

\bibliographystyle{abbrv}
\bibliography{ccs24/sources}

\appendix

\section{Postponed Privacy Proofs}

\subsection{Proof of \cref{lem:PrivacyPercentile}}
\label{Ssec:PrivacyPercentile}
\begin{LemmaRepetition}{\ref{lem:PrivacyPercentile}}
    Algorithm DP-PERCENTILE ($4.3$,~\cite{expQ2020}) applied to distances between neighbouring data points has a sensitivity of $2$. 
\end{LemmaRepetition}
\begin{proof}
    The original proof in \cite{expQ2020} considers neighbouring data sets and exchanging a single element can only influence the rank of an element (the distance between the index of that element and the index of the median) by one. We have a similar setting but instead of data points, we consider distances between neighbouring data points. Then, exchanging a single data point can change two distances in the input, thus the rank of an element in the input can change by a maximum of two.
\end{proof}

\subsection{Proof of \cref{lem:sensitivity_centreness}}
\label{Ssec:PrivacyProofCentreness}
\begin{LemmaRepetition}{\ref{lem:sensitivity_centreness}}
    Let $S, S'\in \datasets$ be neighbouring data sets and noisy count $\tilde n$ and offset $\lambda \ge -\ln(2\deltaCount)/\epsCount$ with $\tilde n -\lambda \le |S|,|S'|$. 
   Then, the sensitivity for the centreness with the parameters $t, q$ defined as in \cref{def:cent}, is $\Delta_{\cent_{t,q}} = \frac{t}{(\tilde{n}-\lambda)q}$.
\end{LemmaRepetition}
\begin{proof}
    \label{Sssec:SensGenCent}
    Let $S, S'\in \datasets$ be neighbouring data sets and the corresponding noisy counts $\tilde n$ and offset $\lambda \ge -\ln(2\deltaCount)/\epsCount$ with $\tilde n -\lambda \le |S|,|S'|$. W.l.o.g. we assume $|S'| = |S|+1$.
    By using the shifted noisy count instead of the exact number of elements and $\frac{1}{|S|},\frac{1}{|S'|} \le \frac{1}{\tilde n -\lambda}$, we get an upper bound for the sensitivity of the centreness function.
    Since the centreness consists of two linear functions, the sensitivity is determined by the one with maximum slope. 
    As the centreness function is symmetric, we consider the ranks $\rank$, $\rank'$ in the interval $[0,\frac{\tilde{n}}{2}]$.
    As we require $t\ge 2q$, the maximum slope for any rank $\rank\in [0,\tilde n]$ is in the outer quantile.
    \begin{align*}
        &~| \centFull{S}{\tilde n}{\wdw} -\centFull{S'}{\tilde n }{\wdw} | \\
        =&~\bigg| \frac{(\frac{\tilde n}{2}- |\rank-\frac{\tilde n}{2}|)\cdot t)}{\tilde n q} -\frac{(\frac{\tilde n}{2}- |\rank'-\frac{\tilde n}{2}|)\cdot t)}{\tilde n q} \bigg| \\
        = &~\bigg| \frac{(\frac{\tilde n}{2}- |\rank-\frac{\tilde n}{2}|)\cdot t)-(\frac{\tilde n}{2}- |\rank'-\frac{\tilde n}{2}|)\cdot t}{\tilde n q} \bigg| \\
        = &~ \frac{t}{\tilde nq}\bigg|-\big|\rank-\frac{\tilde n}{2}\big|+\big|\rank'-\frac{\tilde n}{2}\big|\bigg| \\
        \le  &~ \frac{t}{(\tilde n -\lambda)q}\bigg|-\big|\rank-\frac{\tilde n}{2}\big|+\big|\rank'-\frac{\tilde n}{2}\big|\bigg| \\
    \intertext{The rank of $s$ can either be the same for $S$ and $S'$ or it can differ by one.
    If the ranks are not the same ($\rank' = \rank+1$), we get the following equation:}
        = &~ \frac{t}{(\tilde n -\lambda)q}\bigg|-\big|\rank-\frac{\tilde n}{2}\big|+\big|\rank+1-\frac{\tilde n}{2}\big|\bigg| 
    \intertext{Now because we consider the absolute value, we have to consider the following three cases for $0 \le \rank\le \tilde n$.
    First, we consider the case that $\rank < \tilde n/2 $, than we also know that $\rank < \tilde n/2 -1$ and get}
        = &~ \frac{t}{(\tilde n -\lambda)q}\bigg|\underbrace{\rank-\frac{\tilde n}{2}-\rank-1+\frac{\tilde n}{2}}_{=-1}\bigg|=~ \frac{t}{(\tilde n -\lambda)q} 
    \intertext{Next, we consider the case that $\rank > \tilde n/2 - 1$ which also implies that $\rank > \tilde n/2 $ which gives us}
        = &~ \frac{t}{(\tilde n -\lambda)q}\bigg|\underbrace{-\rank+\frac{\tilde n}{2}+\rank+1-\frac{\tilde n}{2}}_{=1}\bigg| = ~ \frac{t}{(\tilde n -\lambda)q}\\
    \intertext{The last case that we have to consider is that the first term needs no be negated but not the second: $ \tilde n/2 -1 <\rank <  \tilde n/2 $.}
        = &~ \frac{t}{(\tilde n -\lambda)q}\bigg|\underbrace{\rank-\frac{\tilde n}{2}+\rank+1-\frac{\tilde n}{2}}_{=2\rank-\tilde n}\bigg| = ~ \frac{t}{(\tilde n -\lambda)q}\big|\underbrace{2\rank-\tilde n+1}_{\le 1}\big| \\
    \intertext{Otherwise, if the rank of $s$ is the same for $S$ and $S'$ ($\rank'=\rank$), we get:}
            &~| \centFull{S}{\tilde n}{\wdw} -\centFull{S'}{\tilde n }{\wdw} | \\
            =&~\bigg| \underbrace{\frac{(\frac{\tilde n}{2}- |\rank-\frac{\tilde n}{2}|)\cdot t)}{\tilde n q} -\frac{(\frac{\tilde n}{2}- |\rank-\frac{\tilde n}{2}|)\cdot t)}{\tilde n q} }_{=0}\bigg| = ~ 0\\
    \end{align*}
\end{proof}

\subsection{Proof of \cref{lem:sensitivity_emptiness}}
\label{Ssec:PrivacyProofEmptiness}
\begin{LemmaRepetition}{\ref{lem:sensitivity_emptiness}}
    Let $S, S'\in \datasets$ be neighbouring data sets, noisy count $\tilde n$ and offset $\lambda \ge -\ln(2\deltaCount)/\epsCount$ with $\tilde n -\lambda \le |S|,|S'|$. 
    Then, the sensitivity of the subscore emptiness for the split interval size $\wdwSize$ defined as in \Cref{def:empt_frac} is $\Delta_\empt = \frac{1}{\tilde{n}-\lambda}$.
\end{LemmaRepetition}
\begin{proof}
    The emptiness of a split candidate $\wdw$ is computed as in \Cref{def:empt_frac}.  Let $S, S'\in \datasets$ be neighbouring data sets, noisy count $\tilde n$ and offset $\lambda \ge -\ln(2\deltaCount)/\epsCount$ with $\tilde n -\lambda \le |S|,|S'|$. By using the noisy count instead of the exact number of elements and $\frac{1}{|S|}, \frac{1}{|S'|} < \frac{1}{\tilde n-\lambda}$, we get an upper bound for the sensitivity of the emptiness function. We distinguish the cases that the additional element is in the split interval of $\wdw$ for $S'$ and that it is not in the split interval of $\wdw$.  W.l.o.g. we assume that $|S'| = |S|+1$ and $\wdwxi = |\wdw|$ as the number of elements in $s$ for $S$ and $\wdwxi' = |\wdw|$ the number of elements in $s$ for $S'$.
    Furthermore, we assume that $\wdwxi$ is the number of elements in $\wdw$ for $S$ and $\wdwxi' = \wdwxi +1 = |\wdw|$ for $S'$.  We consider the case that the additional element in $S'$ is in $s$ and the case that it is in any other split candidate. In the first case, we know $\wdwxi' = \wdwxi +1 $ and  
        \begin{align*}
            &~| \emptFull{S}{\tilde n}{\wdw} -\emptFull{S'}{\tilde n}{\wdw} | \\
            =&~ \bigg| (1 - \frac{\wdwxi}{\tilde{n}}) - (1 - \frac{\wdwxi +1}{\tilde{n}}) \bigg| \\
            =&~ \bigg| -\frac{\wdwxi}{\tilde{n}} +\frac{\wdwxi + 1}{\tilde{n}} \bigg| =~ \frac{1}{\tilde{n}} \le \frac{1}{(\tilde n -\lambda)}\\
        \intertext{For the latter case that the additional element is not in $s$, we know that $\wdwxi'=\wdwxi$ which gives us the following bound for the sensitivity.}
            &~| \emptFull{S}{\tilde n}{\wdw} -\emptFull{S'}{\tilde n}{\wdw} |\\
            =&~\bigg| \big(1 - \frac{\wdwxi}{\tilde{n}}\big) - \big(1 - \frac{\wdwxi}{\tilde{n}}\big) \bigg|= 0
        \end{align*}
\end{proof}

\subsection{Proof of \cref{lem:sensitivity_scoring}}
\label{ssec:PrivacyProofScore}
\begin{LemmaRepetition}{\ref{lem:sensitivity_scoring}}
   Let $S, S'\subseteq D \in \datasets$ be neighbouring data sets, noisy count $\tilde n$ and offset $\lambda \ge -\ln(2\deltaCount)/\epsCount$ with $\tilde n -\lambda \le |S|,|S'|$. 
   With the sensitivity for the subscores centreness and emptiness, and the weight $\alpha$, the sensitivity $\Delta_\score$ for the score of a split candidate in $S$ is given by $\textstyle \Delta_\score \le \frac{\frac{t}{q} + \alpha }{\tilde{n}-\lambda}$.
\end{LemmaRepetition}
\begin{proof}
    The scoring function is the weighted sum of the centreness and the weighted emptiness of a split. 
    Let $S, S'\in \datasets$ be neighbouring data sets, the corresponding noisy counts $\tilde n$ and offset $\lambda \ge -\ln(2\deltaCount)/\epsCount$ with $\tilde n -\lambda \le |S|,|S'|$.
    By using the noisy count instead of the exact number of elements and $\frac{1}{|S|}, \frac{1}{|S'|} \le \frac{1}{\tilde n - \lambda}$, we get an upper bound for the sensitivity of the scoring function. We use \cref{lem:sensitivity_emptiness} and \cref{lem:sensitivity_centreness} to bound the difference of the score of a split candidate for neighbouring data sets.
    \begin{align*}
        & ~~| \scoreF{S}{\tilde n}{\wdw} - \scoreF{S'}{\tilde n}{\wdw} |\\
        =& ~~| \alpha\emptF{S}{\tilde n}{\wdw} +\centF{S}{\tilde n_{S}}{\wdw} - (\alpha\emptF{S'}{\tilde n }{\wdw} +\centF{S'}{\tilde n}{\wdw}) |\\
        =& ~~| \alpha \underbrace{(\emptF{S}{\tilde n}{\wdw} -\emptF{S'}{\tilde n}{\wdw})}_{\le\frac{1}{\tilde{n}-\lambda}}+\underbrace{(\centF{S}{\tilde n_{S}}{\wdw}
        -\centF{S'}{\tilde n}{\wdw})}_{\le\frac{t}{(\tilde{n} -\lambda)q}} |\\
        \le& ~~ \frac{\alpha }{\tilde{n}-\lambda} + \frac{t}{(\tilde{n} -\lambda)q} = \frac{\frac{t}{q} + \alpha }{\tilde{n}-\lambda}
    \end{align*}
\end{proof}

\section{Postponed Utility Proofs}

\subsection{Proof of \cref{lem:utilityNoisyCount}}
\label{Ssec:UtilityProofNoisyCount}
\begin{LemmaRepetition}{\ref{lem:utilityNoisyCount}}
    Let $S\subseteq D$ be the considered set and $|S|$ the number of elements in this set and $\tilde n - \lambda = |S| + \Lap(1/\epsCount) - \lambda$. Then, for any $\kappa \ge 0$ we have 
    \[\Pr[\big|\tilde n -\lambda - |S|\big| > \kappa] \le 1/2\exp(-\kappa \cdot \epsCount) / \delta  \text{.}\]
\end{LemmaRepetition}

\begin{proof}
We first bound the difference between the noisy shifted count and the real count shifted by the offset $\lambda$. Then, with the estimate for $\lambda$ we get lower bounds for the difference between the noisy shifted count and the real count.

First, for some $\kappa' \ge 0$, we can bound the difference between $\tilde n'=|S| + \Lap(1/\epsCount) - \lambda$ and $|S| - \lambda$ as follows:
\begin{align}
    & \Pr[|\tilde{n}' - (|S| - \lambda)| \le \kappa'] \nonumber\\
    & = \Pr[|\Lap(1/\epsCount)| \le \kappa'] \nonumber\\
    &\ge cdf_{\Lap(1/\epsCount)}(\kappa')- cdf_{\Lap(1/\epsCount)}(-\kappa')\nonumber\\
    &=1 - \exp(-\kappa'\epsCount)\nonumber
\end{align}

If we combine this bound with the estimate for $\lambda$ from \Cref{eq:offsetEstimation}, we get for $\kappa = \kappa' + \lambda$
\begin{align}
    &\Pr[|\tilde n' - |S| + \lambda| > \kappa'] \le \exp(-(\kappa - \lambda) \cdot \epsCount)\nonumber\\
    \intertext{By the triangle inequality, $|a| - |b| \le |a + b|$. With $\tilde n' - |S| = a$ and $\lambda = b$, we know that $|\tilde n' - |S|| - |\lambda| \le |\tilde n' - |S| +  \lambda|$. As the $\Pr[X - \tau > c] \le \Pr[X > c]$ for positive $\tau$, we conclude}
    \implies
    &\Pr[|\tilde n' - |S|| - |\lambda| > \kappa'] \le \exp(-(\kappa - \lambda) \cdot \epsCount)\nonumber\\
    \Leftrightarrow
    &\Pr[|\tilde n' - |S|| > \kappa - \lambda + |\lambda|] \le \exp(-(\kappa - \lambda) \cdot \epsCount)\nonumber\\
    \intertext{By \Cref{eq:offsetEstimation} $\lambda \ge -\ln(2\deltaCount)/\epsCount$}
    \implies&
    \Pr[|\tilde n' - |S|| > \kappa] \le \exp(-\kappa \cdot\epsCount -  \ln(2\deltaCount))\nonumber\\
    \Leftrightarrow&\Pr[|\tilde n' - |S|| > \kappa] \le 1/2\exp(-\kappa \cdot \epsCount) / \deltaCount    \label{eq:closenessNoisyCount}
\end{align}
\end{proof}

\subsection{Proof of \cref{thm:exponential_mechanism_utility}}
\label{ssec:utility_generalisedExpMech}
\begin{TheoremRepetition}{\ref{thm:exponential_mechanism_utility}}
    Fixing a set $S\subseteq D \in \datasets$ and the set of candidates $W$, let $\omega \ge 0$, $OPT(S,\score,W) = \max_{\wdw\in W} \scoreF{S}{\tilde n}{\wdw}\}$ and $W_{OPT_\omega} = \{\wdw \mid |\scoreF{S}{\tilde n}{\wdw} - OPT(S,\score,W)| \le \omega\}$ denote the set of elements in $W$ which up to $\omega$ attain the highest score \break $OPT(S,\score,W)$. Then, for some $\kappa>0$ the Exponential Mechanism $M_E$ satisfies the following property:
\begin{align*}
    &\textstyle\Pr\big[\scoreF{S}{\tilde n}{M_{E}(S,\score,\varepsilon)} \le OPT(S,\score,W) - \omega \\
    &\textstyle \phantom{\Pr\big[} - \frac{2\Delta_{\score}}{\varepsilon}\left(\ln\left( \frac{|W|}{|W_{OPT_\omega}|} \right)+ \kappa \right)\big] \le e^{-\kappa}\text{.}
\end{align*}
\end{TheoremRepetition}
\begin{proof}
    We adapt the proof from ~{\cite[Theorem 3.11]{DwoRo14}} to the case that optimal candidates can differ by at most $\omega$ from the optimal score.
    \begin{align*}
        &\Pr[\score(S,n,M_E(S,\score,\varepsilon))\le c]
        \intertext{We know that for all $\wdw\in W$ with $\scoreF{S}{n}{s}\le c$ the unnormalized probability mass is at most $\exp(\varepsilon c/2\Delta_\score)$. We know that there are at least $|W_{OPT_\omega}|$ elements with score at least $OPT(S,\score, W) -\omega$.}
        &\le \frac{|W|\exp(\varepsilon c /2\Delta_\score )}{|W_{OPT_\omega}|\exp(\varepsilon(OPT(S,\score, W)-\omega)/2\Delta_\score)}\\
        &= \frac{|W|}{|W_{OPT_\omega}|}\exp\bigg({\frac{\varepsilon(c-OPT(S,\score,W)+\omega)}{2\Delta_\score}}\bigg)
    \end{align*}
The theorem follows with 
\[
    c=OPT(S,\score,W) - \omega - \frac{2\Delta_{\score}}{\varepsilon}\left(\ln\left( \frac{|W|}{|W_{OPT_\omega}|} \right)+ \kappa \right)\text{.}
\]
\end{proof}

\subsection{Proof of \cref{lem:utility_inner_quantile}}
\label{Ssec:UtilityProof_innerQuantile}
\begin{LemmaRepetition}{\ref{lem:utility_inner_quantile}}
Let $S\subseteq D$ be a set with $\tilde n$ as the noisy number of  elements in $S$.
Let $\wdw^*$ be the split candidate with the optimal loss $\scoreF{S}{\tilde n}{\wdw^*} = OPT(S,\score, W) = \max_{\wdw \in W} \scoreF{S}{\tilde n}{\wdw}$ and $W_{OPT_\omega} = \{\wdw \mid |\scoreF{S}{\tilde n}{\wdw} - OPT(S,\score,W)| \le \omega\}$. For any $t' \in [0,1]$, if 
$\wdw^*$ with $\wdw^* \in W_{\ge t'}$, then
    \begin{align*}
    &\textstyle\Pr\big[ \left(\frac{\frac{2t}{q \alpha} + 2 }{(\tilde n - \lambda)\varepsilon} \right) \left(\ln\left( |W|/|W_{OPT_\omega}|\right)+ \kappa + \omega \right) + \frac{1-t'}{\alpha}\\
    &\phantom{\textstyle\Pr\big[ } > \emptF{S}{\tilde n}{\wdw^*} -\emptF{S}{\tilde n}{\wdw} \mid \wdw \in W_{\ge t'}\big] \ge 1 - e^{-\kappa} \text{.}
\end{align*}
\end{LemmaRepetition}
\begin{proof}
\label{proof:central_split}
To compute the probability of the chosen split \break $\wdw = M_{E}(S,\score,\varepsilon)$ to have high emptiness $\emptF{S}{\tilde n}{\wdw}$ under the assumption that it is an inner split $\wdw \in W_{\ge t'}$, we apply \Cref{cor:dpmondrian_utility_exp} and simplify the bounds. 

\begin{align*}
    &\textstyle\Pr\big[\scoreF{S}{\tilde n}{M_{E}(S,\score,\varepsilon)} \le~~OPT(S,\score,W) - \omega\\
    &\textstyle  ~-\left(\frac{2t}{(\tilde n - \lambda)\varepsilon q} + \frac{2\alpha}{(\tilde n - \lambda)\varepsilon} \right) \left(\ln\left( \frac{|W|}{|W_{OPT}|}\right)+ \kappa  \right)\big] 
    \le e^{-\kappa}\\
    \intertext{We have to refactor the inequality as we are interested in the counter probability. The probability that the difference between our score and the optimal is smaller than some value, is larger than a given probability.}
    \Leftrightarrow & \textstyle\Pr\big[\scoreF{S}{\tilde n}{M_{E}(S,\score,\varepsilon)} >~~OPT(S,\score,W)- \omega\\
    &\textstyle  ~-\left(\frac{\frac{2t}{q } + 2\alpha }{(\tilde n - \lambda)\varepsilon} \right) \left(\ln\left( \frac{|W|}{|W_{OPT_\omega}|}\right)+ \kappa  \right)\big] 
    \ge 1 - e^{-\kappa}  
    \intertext{
    We set 
    \[OPT(S,\score, W) = \centF{S}{\tilde n}{\wdw^*} + \alpha \emptF{S}{\tilde n}{\wdw^*}\] 
    and 
    \[\scoreF{S}{\tilde n}{M_{E}(S,\score,\varepsilon)} = \centF{S}{\tilde n}{\wdw} + \alpha \emptF{S}{\tilde n}{\wdw}\]
    with $\wdw = M_{E}(S,\score,\varepsilon)$.}
    \Leftrightarrow &\textstyle\Pr\big[\centF{S}{\tilde n}{\wdw} + \alpha \emptF{S}{\tilde n}{\wdw}>~~\centF{S}{\tilde n}{\wdw^*} + \alpha \emptF{S}{\tilde n}{\wdw^*}- \omega\\
    &\textstyle  ~-\left(\frac{\frac{2t}{q } + 2\alpha }{(\tilde n - \lambda)\varepsilon}  \right) \left(\ln\left( |W|/|W_{OPT_\omega}|\right)+ \kappa \right)\big] \\
    &
    \ge 1- e^{-\kappa}\\
    \Leftrightarrow &\textstyle\Pr\big[ \left(\frac{\frac{2t}{q } + 2\alpha }{(\tilde n - \lambda)\varepsilon} \right) \left(\ln\left( |W|/|W_{OPT_\omega}|\right)+ \kappa  \right) ~~ \\
    &\textstyle  ~> \centF{S}{\tilde n}{\wdw^*} + \alpha \emptF{S}{\tilde n}{\wdw^*} - \centF{S}{\tilde n}{\wdw} - \alpha \emptF{S}{\tilde n}{\wdw}- \omega\big] \\
    &\ge 1 - e^{-\kappa}\\
    \Leftrightarrow 
    &\textstyle\Pr\big[ \left(\frac{\frac{2t}{q } + 2\alpha }{(\tilde n - \lambda)\varepsilon} \right) \left(\ln\left( |W|/|W_{OPT_\omega}|\right)+ \kappa  \right)+ \centF{S}{\tilde n}{\wdw}\\
    &\textstyle  \phantom{\Pr\big[ }  - \centF{S}{\tilde n}{\wdw^*}~~ 
       > \alpha  \emptF{S}{\tilde n}{\wdw^*} - \alpha \emptF{S}{\tilde n}{\wdw}- \omega\big] 
    \\
    &\ge 1 - e^{-\kappa}\\
    \intertext{We set $g(\kappa):=\left(\frac{\frac{2t}{q \alpha} + 2 }{(\tilde n - \lambda)\varepsilon} \right) \left(\ln\left( |W|/|W_{OPT_\omega}|\right)+ \kappa  \right)$}
    \Leftrightarrow &\textstyle\Pr\big[ g(\kappa) + \frac{\centF{S}{\tilde n}{\wdw} - \centF{S}{\tilde n}{\wdw^*}}{\alpha}+ \omega > \emptF{S}{\tilde n}{\wdw^*} -\emptF{S}{\tilde n}{\wdw}\big] \\
    &\ge 1 - e^{-\kappa}\\
    \intertext{As $s \in W_{\ge t'}$, we know that $|\centF{S}{\tilde n}{\wdw} - \centF{S}{\tilde n}{\wdw^*}| \le 1-t'$.}
    \Leftrightarrow &\textstyle\Pr\big[ g(\kappa) + \frac{1-t'}{\alpha}+ \omega> \emptF{S}{\tilde n}{\wdw^*} - \emptF{S}{\tilde n}{\wdw} \big] 
    \ge 1 - e^{-\kappa}
\end{align*}
\end{proof}

\subsection{Proof of \cref{lem:prob_innerQuantileSelected_eventA}}
\label{SSec:UtilityProof_innerQuantile_eventA}
\begin{LemmaRepetition}{\ref{lem:prob_innerQuantileSelected_eventA}}
    Let $S\subseteq D$ be the current subset and $W$ the set of all split candidates and $W_{\ge t'}:= \{s \mid \centF{S}{\tilde n}{\wdw} \ge t'\}$ for any $t'\in [0,1]$.
    Let $e_{\min} := \min_{\wdw \in W} \emptF{S}{\tilde n}{\wdw}$ be the minimal emptiness over all splits $\wdw \in W$.
    The score of every split candidate can be represented as $\alpha \cdot e_{\min} + t' + \ln a_s$ (for some $a_s \ge 1$). Then with $
        L_{\ge t'} = \textstyle\sum_{\wdw \in W_{\ge t'}} a_\wdw 
    $ and $
        \textstyle L_{< t'}  = \sum_{\wdw \in W_{< t'}} a_s
    $, we know
    \begin{align*}
        &\Pr[\wdw \in W_{\ge t'}] = 
        \frac{%
                1%
            }{%
                \frac{%
                    L_{< t'}%
                }{L_{\ge t'}} + 1%
            }
        \text{.}
    \end{align*}
    \end{LemmaRepetition}
    
\begin{proof}
    To estimate the probability that the selected split \break $M_E(S,\score,\varepsilon)$ is in $W_{\ge t'}$, we derive bounds for the sum of the scores of the desired split candidates and over all possible split candidates.
    \begin{align*}
        &\Pr[M_E(S,\score,\varepsilon) \in W_{\ge t'}] = \frac{\sum_{\wdw \in W_{\ge t'}}\exp(\scoreF{S}{\tilde n}{\wdw}\varepsilon/(2\Delta_\score))}{\sum_{\wdw \in W}\exp(\scoreF{S}{\tilde n}{\wdw}\varepsilon/(2\Delta_\score))} 
    \end{align*}
    As we want a lower bound for this probability, we first characterise the numerator.
    By assumption, the minimal emptiness is $e_{\min}$
    and there is at least one split candidate with $\emptF{S}{\tilde n}{\wdw^*} > v$.  
    Note that we can represent all split candidates to have a score of $\alpha \cdot e_{\min} + t' + \ln a_s$. 
    We know that 
    \begin{align*}
        &\textstyle\sum_{\wdw \in W_{\ge t'}}\exp(\scoreF{S}{\tilde n}{\wdw}\varepsilon/(2\Delta_\score)) \\
        &\textstyle= \exp((\alpha \cdot e_{\min} +  t')\varepsilon/(2\Delta_\score))(\sum_{\wdw\in W_{\ge t'}} a_\wdw) \\
        \intertext{with $L_{\ge t'} := \sum_{\wdw\in W_{\ge t'}} a_\wdw$, $v = v' + e_{\min}$, and $P := \exp( (e_{\min}\alpha + t')\varepsilon/(2\Delta_\score))$ we get}
        &\textstyle= \exp((\alpha\cdot e_{\min} + t')\varepsilon/(2\Delta_\score)) L_{\ge t'}\\
        &\textstyle= PL_{\ge t'}
        \text{.}
    \end{align*}
    Next, we have to characterise the scores of all splits. 
    With $L_{< t'} := \sum_{\wdw\in W_{< t'}} a_\wdw$, we get
    \begin{align*}
        &\textstyle \sum_{\wdw \in W}\exp(\scoreF{S}{\tilde n}{\wdw}\varepsilon/(2\Delta_\score))\\ 
        &= P(L_{< t'} + L_{\ge t'})
    \end{align*}
    Putting this together, we get the following lower bound for the probability that a split in the inner quantile is selected.
    \begin{align*}
        \Pr[M_E(S,\score,\varepsilon)\in W_{\ge t'}] 
        &= \frac{%
                P L_{\ge t'}%
            }{%
                P(L_{< t'} + L_{\ge t'} )
            } \\
        &= \frac{%
                1%
            }{%
                \frac{%
                    L_{< t'}%
                }{L_{\ge t'}} + 1%
            }
   \end{align*}
\end{proof}

A better interpretable -- yet untight -- version of the above bound can be found by approximating $L_{\ge t'}$ and $L_{< t'}$. For the split candiates that have centreness less than $t'$, we know that $L_{< t'} \le |W_{< t'} \setminus W_{OPT_\omega}| \exp((OPT_\omega-t' -\alpha \empt_{\min})\varepsilon/(2\Delta_\score)) + |W_{< t'} \cap W_{OPT_\omega}|\exp((OPT-t'-\alpha \empt_{\min})\varepsilon/(2\Delta_\score))$. We know that there are $|W_{OPT_\omega}|$ splits with near-optimal scores. Then, $L_{\ge t'} \ge |W_{\ge t'}\setminus W_{OPT_\omega}|\exp(0) + |W_{OPT_\omega}\cap W_{\ge t'}| \exp((OPT_\omega-t' -\alpha \empt_{\min})\varepsilon/(2\Delta_\score))$
and $B=\varepsilon/(2\Delta_\score)$, which results in
    \begin{align*}
        &\Pr[M_E(S,\score,\varepsilon)\in W_{\ge t'}] \\
        &\ge \frac{%
                1%
            }{%
                \frac{%
                    e^{(OPT_\omega-t' -\alpha \empt_{\min})B}(|W_{< t'}\setminus W_{OPT_\omega}|  + |W_{< t'} \cap W_{OPT_\omega}|e^{\omega B})
                }{|W_{\ge t'}\setminus W_{OPT_\omega}| + | W_{\ge t'}\cap W_{OPT_\omega}| e^{(OPT_\omega -t'-\alpha \empt_{\min})B}} + 1%
            }\\
   \end{align*}

\subsection{Proof of \cref{lem:prob_innerQuantileSelected}}
\label{Ssec:UtilityProof_innerQuantileSelected}
\begin{LemmaRepetition}{\ref{lem:prob_innerQuantileSelected}}
    Let $S\subseteq D$ be the current set with noisy count $\tilde n$ and $W$ the set of all split candidates and $W_{\ge t'}:= \{s \mid \centF{S}{\tilde n}{\wdw} \ge t'\}$.
    Let $e_{\min} := \min_{\wdw \in W} \emptF{S}{\tilde n}{\wdw}$ be the minimal emptiness over all splits $\wdw \in W$.
    For any $t'\in [0,1]$, $0\le\eta\le |W|$, if $\eta|W_{\ge t'}|\exp((1+\alpha)\varepsilon/(2\Delta_\score)) \le |W_{< t'}| \exp((1-e_{min})\varepsilon/(2\Delta_\score))$ we have 
    \begin{align*}
        &\textstyle \Pr[\wdw \not\in W_{\ge t'}] \ge 
        1 - \frac{%
                1%
            }{%
                \eta + 1%
            }
        \text{,}
    \end{align*}
\end{LemmaRepetition}
\begin{proof}
The proof of 
    \[\Pr[\wdw \not\in W_{\ge t'}] = 
        1 - \frac{%
                1%
            }{%
                \frac{L_{< t'}}{L_{\ge t'}} + 1%
            }\]
    directly follows from the proof of \Cref{lem:utility_inner_quantile}. 
\end{proof}
To lower bound $\Pr[\wdw \not\in W_{\ge t'}]$, we need to upper bound $\Pr[\wdw \not\in W_{\ge t'}]$. Hence, we need to find an upper bound for $L_{< t'}$ and a lower bound for $L_{\ge t'}$. With $L_{< t'} \ge |W_{< t'}|\exp((1-e_{min})\alpha \varepsilon/(2\Delta_\score))$, and $L_{\ge t'} \le |W_{\ge t'}|\exp((1+ \alpha)\varepsilon/(2\Delta_\score))$ we get
    \begin{align*}
        &\Pr[\wdw \not\in W_{\ge t'}] =
        1 - \frac{%
                1%
            }{%
                L_{< t'}/L_{\ge t'} + 1%
            }\\
        & \ge 
        1 - \frac{%
                1%
            }{%
                \frac{
                    |W_{< t'}|\exp((1-e_{min})\alpha \varepsilon/(2\Delta_\score))
                }{
                    |W_{\ge t'}|\exp((1+ \alpha)\varepsilon/(2\Delta_\score))
                } + 1%
            }
        \intertext{If
        \[
            \eta|W_{\ge t'}|\exp((1+\alpha)\varepsilon/(2\Delta_\score)) \le |W_{< t'}| \exp((1-e_{min})\varepsilon/(2\Delta_\score))
        \] we get}
        & \ge
        1 - \frac{%
                1%
            }{%
                \eta + 1%
            }
\end{align*}

\section{Evaluation}

\subsection{Clustering Quality Metrics}
\label{app:metrics}

\begin{definition}[Inertia]
    \label{def:inertia}
    Given a data set $D = \{x_0, \dots, x_{n-1}\}$ and a set of cluster centres $C = \{c_0, \dots, c_{k-1}\}$ then the inertia is computed as follows: 
    \[
    I(D, C) = \sum_{i \in [k]} \sum_{x \in C_i} || c_i - x ||_2 
    \]
    and $|| . ||_2$ being the Euclidean norm.
\end{definition}
Minimising inertia requires finding a set of cluster centres that minimises the distance between each data point and its closest cluster centre. Another metric to assess the quality of a clustering result is the silhouette score:
\begin{definition}[Silhouette Score]
    \label{def:silhscore}
    Given a data set $D = \{x_0, \dots, x_{n-1}\}$ and a set of cluster centres $C = \{c_0, \dots, c_{k-1}\}$ then the silhouette score is computed as follows: 
    \[
    S(D, C) = \frac{1}{n} \sum_{i \in [k]} \sum_{x \in C_i} \frac{b(x) - a(x)}{\max(b(x), a(x))}
    \]
    with 
    \[
    a(x) = \frac{1}{|C_i| - 1} \sum_{x' \in C_i / \{x\}} || x' - x ||_2
    \]
    \[
    b(x) = \min_{j \neq i} \left( \frac{1}{|C_j|} \sum_{x' \in C_j} || x' - x ||_2 \right)
    \]
    .
\end{definition}
Note that the silhouette score is only defined for $|C| \geq 2$, otherwise it is set to $-1$.
In contrast to inertia, the silhouette score incorporates distances to other clusters and has to be maximised. Then, one has to evaluate whether it is worth to increase the number of clusters or not. In the case a data set is labelled, a clustering accuracy can also be computed:
\begin{definition}[Clustering Accuracy]
    \label{def:accuracy}
    Given a data set $D = \{x_0, \dots, x_{n-1}\}$ with corresponding labels $L = \{ l_0, \dots, l_{n-1} \}$ and a set of cluster centres $C = \{c_0, \dots, c_{k-1}\}$ then the accuracy is computed as follows: 
    \[
    A(D, L, C) = \frac{1}{n} \sum_{i \in [k]} \sum_{x_j \in C_i} \mathbbm{1}(\Call{lbl}{C_i} = y_j)  
    \]
    with
    \[
    \Call{lbl}{C} = \argmax_l |\{ x_j \in C \mid y_j = l \}|
    \]
    .
\end{definition}
The clustering accuracy assigns a label to each cluster based on the label of the majority of data points in that cluster and determines the proportion of matching labels.

\end{document}